\newcommand{\cmark}{\ding{51}}%
\newcommand{\xmark}{\ding{55}}%
\newcommand{\RecastCheck}{\textsc{RcLock}}
\newcommand{\RecastVal}{\textsc{RcStore}}
\newcommand{\If}{\IF}
\newcommand{\Else}{\ELSE}
\newcommand{\EndIf}{\ENDIF}
\algrenewcommand\algorithmicwhile{\textbf{upon}}
\newcommand{\Upon}{\WHILE}
\newcommand{\EndUpon}{\ENDWHILE}
\algrenewcommand\algorithmicloop{\textbf{wait}}
\newcommand{\Wait}{\LOOP}
\newcommand{\EndWait}{\ENDLOOP}
\newtheorem{definition}{Definition}
\newtheorem{theorem}{Theorem}
\newtheorem{lemma}{Lemma}
\newcommand{\Sign}{\mathsf{Sign}}
\newcommand{\plusplus}{\mathsf{Dumbo}\textrm{-}\mathsf{NG}\texttt{++}}
\newcommand{\dumbo}{\mathsf{Dumbo}}
\newcommand{\tusk}{\mathsf{Tusk}}
\newcommand{\dumbong}{\mathsf{Dumbo}\textrm{-}\mathsf{NG}}
\newcommand{\finng}{\mathsf{FIN}\textrm{-}\mathsf{NG}}
\newcommand{\jumbo}{\mathsf{JUMBO}}
\newcommand{\mvba}{\mathsf{MVBA}}
\newcommand{\rbc}{\mathsf{BRBC}}
\newcommand{\raba}{\mathsf{RABA}}
\newcommand{\node}{\mathcal{P}}
\newcommand{\hash}{\mathcal{H}}
\newcommand{\adv}{\mathcal{A}}
\newcommand{\bigO}{\mathcal{O}}
\newcommand{\PD}{\mathsf{PD}}
\newcommand{\RC}{\mathsf{RC}}
\newcommand{\store}{{store}}
\newcommand{\lock}{{lock}}
\newcommand{\DID}{\mathsf{ID}}
\newcommand{\ValidateLock}{\mathsf{ValidateLock}}
\newcommand{\PreDisp}{\textsc{Store}}
\newcommand{\Store}{\textsc{Stored}}
\newcommand{\Agg}{\mathsf{Agg}}
\newcommand{\Verify}{\mathsf{SigVerify}}
\newcommand{\height}{\mathsf{current}}
\newcommand{\ordered}{\mathsf{ordered}}
\newcommand{\newordered}{\mathsf{new\textrm{-}ordered}}
\newcommand{\tx}{\mathsf{tx}}
\newcommand{\ignore}[1]{}
\newcommand{\yuan}[1]{\textcolor{black}{#1}}
\begin{document}
	
   \title{$\jumbo$: Fully Asynchronous BFT Consensus Made Truly Scalable}

\author{Hao Cheng, Yuan Lu, Zhenliang Lu, Qiang Tang, Yuxuan Zhang, Zhenfeng Zhang
\thanks{
\IEEEcompsocthanksitem Hao Cheng, Yuan Lu, and Zhenfeng Zhang are with Institute of Software  Chinese Academy of Sciences, Beijing, 100190, China. Email: {\{chenghao2020, luyuan, zhenfeng\}@iscas.ac.cn}. 
\IEEEcompsocthanksitem Zhenliang Lu and Qiang Tang are with  The University of Sydney, NSW 2006, Australia. Email: {\{zhenliang.lu, qiang.tang\}@sydney.edu.au}. 
\IEEEcompsocthanksitem Yuxuan Zhang is with University of Pittsburgh,  Pittsburgh, 15260, PA, US. Email: {yuz276@pitt.edu}. 
\IEEEcompsocthanksitem Hao Cheng, Yuan Lu, and Yuxuan Zhang made equal contribution, and part of the work was done while Yuxuan visited Institute of Software CAS.
}

}


\maketitle

	\begin{abstract}
	Recent progresses in asynchronous Byzantine fault-tolerant (BFT) consensus, e.g. $\dumbong$ (CCS' 22) and $\tusk$ (EuroSys' 22),   show  promising performance through decoupling transaction dissemination and block agreement. However, when executed with a larger number $n$ of nodes, like several hundreds, they would  suffer from significant  degradation in performance.
	Their dominating scalability bottleneck is  the huge authenticator complexity: each node has to multicast $\bigO(n)$  quorum certificates (QCs) and subsequently verify  them  for each block.
	
	This paper systematically investigates and resolves the above scalability issue. We first propose a signature-free asynchronous BFT consensus  FIN-NG that adapts a recent signature-free asynchronous common subset protocol FIN (CCS' 23) into the   state-of-the-art framework of concurrent  broadcast and agreement. The liveness of FIN-NG relies on our non-trivial redesign of FIN's multi-valued validated Byzantine agreement    towards achieving optimal quality.
	FIN-NG greatly improves the performance of FIN and already outperforms $\dumbong$  in most deployment settings. 
	To further overcome the scalability limit of FIN-NG due to $\bigO(n^3)$ messages, we propose $\jumbo$, a scalable instantiation of $\dumbong$, with only $\bigO(n^2)$ complexities for both authenticators and messages. We use various aggregation and dispersal techniques for QCs to significantly reduce the authenticator complexity of original $\dumbong$ implementations  by up to $\bigO(n^2)$ orders. We also propose a ``fairness'' patch for $\jumbo$, thus preventing a flooding adversary from controlling an overwhelming portion of transactions in its output.
	

	
	Finally, we  implement  our designs in Golang and experimentally demonstrated their enhanced scalability with hundreds of Amazon's  AWS  instances. $\jumbo$ and FIN-NG significantly outperform  the  state-of-the-art in (nearly) all deployment settings. 
	Especially,	when $n\ge$196, $\jumbo$ can attain a throughput that is more than 4x that of FIN and $\dumbong$. 
	
\end{abstract}

\begin{IEEEkeywords}
Blockchain consensus, Byzantine fault tolerance, asynchronous Byzantine agreement.
\end{IEEEkeywords}


\section{Introduction}
\label{sec:intro}

Randomized  {\em asynchronous} Byzantine-fault tolerant (BFT) consensus protocols \cite{ben1983another,rabin1983randomized,MMR15,canetti1993fast,cachin2002secure,moniz2008ritas,patra2009simple,abraham2008almost,cachin00,ben2003resilient,cachin2001secure,miller2016honey,beat,guo2020dumbo,abraham2018validated,tusk,lu2020dumbo,gao2022dumbo,abraham2021reaching,duan2023practical,yang2021dispersedledger} can overcome   FLP ``impossibility'' \cite{FLP85} to    ensure both {\em safety} (i.e. all honest nodes agree the same ledger of transactions) and {\em liveness} (i.e. a client can expect her valid transactions eventually output) despite a fully asynchronous network that can arbitrarily delay  message delivery. In contrast, their synchronous or partial synchronous counterparts, such as PoW \cite{bitcoin} and PBFT \cite{pbft}, might suffer from inherent violation of security (e.g., PoW might output disagreed decisions \cite{saad2021revisiting} and PBFT can completely grind to a halt \cite{miller2016honey}) if unluckily encountering an asynchronous adversary \cite{FLP85}. Therefore,  the fully asynchronous BFT consensus protocols become highly desired for their high security assurance, making them the most robust candidates for critical blockchain infrastructures deployed over the unstable and occasionally adversarial global Internet.

\subsection{Scalability Obstacles of  the State of the Art}


Interests in asynchronous BFT consensus protocols have  recently resurged  \cite{miller2016honey,beat,guo2020dumbo,abraham2018validated,lu2020dumbo,duan2023practical,yang2021dispersedledger,tusk,gao2022dumbo,abraham2021reaching} for their higher security assurance against powerful network adversaries.  However, there is limited evidence supporting their seamless scalability to efficiently handle larger-scale networks with several hundred nodes. For instance, the state-of-the-art asynchronous BFT consensus protocols like  $\tusk$ \cite{tusk}\footnote{https://github.com/asonnino/narwhal} and  $\dumbong$ 
 \cite{gao2022dumbo} were  demonstrated with only 50 and 64 nodes, respectively. As Figure \ref{fig:scale} depicts, we evaluate these state-of-the-art designs    in a LAN setting,
revealing that they suffer from a significant performance decline as the number of participating nodes increases, despite their appealing performance at smaller scales like 16 or 64 nodes. \footnote{Note that a direct comparison between the throughputs of $\tusk$ and $\dumbong$  is meaningless due to their difference on implementing storage, and we shall focus on the trend of throughput and latency change with scales.} Notably, while the number of    nodes $n$ increases from 16 to 256, their peak throughput   and latency are worse by   3 and 2 orders of magnitude, respectively.

\begin{figure}[H] 
	\vspace{-0.25cm}
	\begin{flushleft}
		\includegraphics[width=8.5cm]{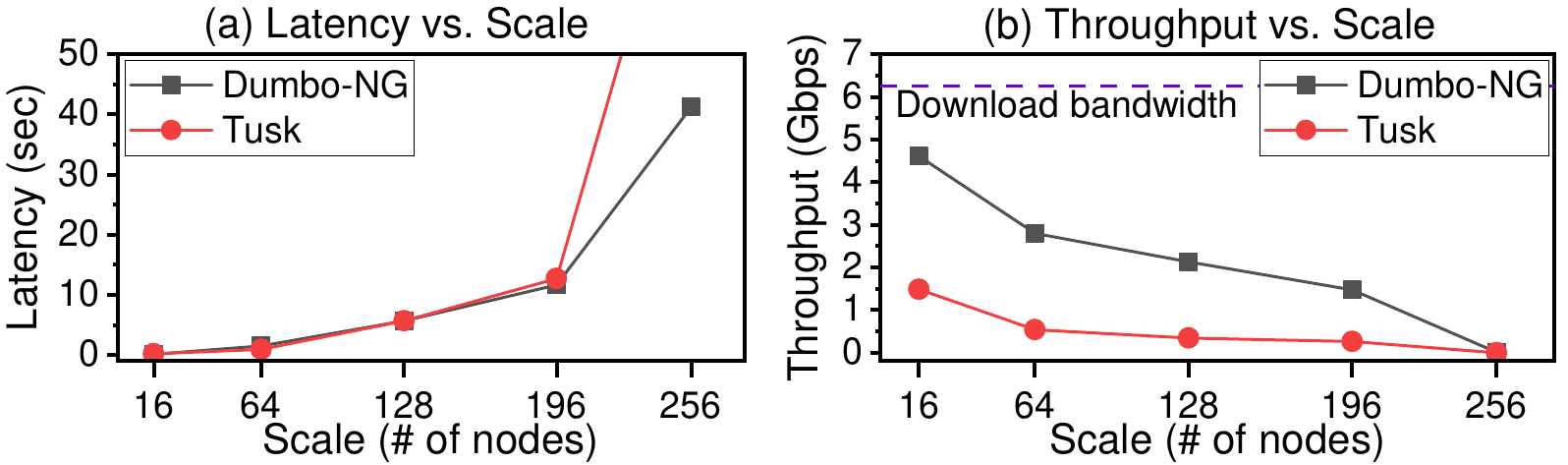}\\
		\vspace{-0.3cm}
		\caption{$\tusk$ \cite{tusk} and $\dumbong$ \cite{gao2022dumbo}   for 16-256 nodes in  LAN. Each node is an   EC2 c6a.2xlarge instance with  12.5 Gbps bandwidth (upload+download).}
		\label{fig:scale}
	\end{flushleft}
	\vspace{-0.35cm}
\end{figure}

The inferior performance in larger-scale networks significantly hampers the widespread adoption of asynchronous BFT consensus in real-world blockchain systems. In particular, a larger number of participating nodes is a sine qua non to diversify trust for  mission-critical applications such as cryptocurrency and decentralized finance, thus  necessitating our continued efforts to boost the performance of asynchronous   consensus in larger-scale networks.

\smallskip
\noindent
{\bf Scalability bottlenecks of   signature-involved prior art}.  We begin with a succinct examination of $\dumbong$ and $\tusk$ ---   the most practical asynchronous BFT consensus protocols in relatively small networks such as several dozens of nodes, in order to unveil the major obstacles hindering their scalability.
Both  $\tusk$ and $\dumbong$ can execute a process continuously disseminating transactions concurrently to another process deciding a unitary order of  transactions that have been   disseminated, thus maximizing   bandwidth utilization and achieving appealing performance (for several dozens of nodes).

 \ignore{
	In $\dumbong$, the transaction dissemination is ``provable'': every node executes as a broadcast leader and can generate quorum certificates (QCs) consisting of sufficient digital signatures (e.g., $n-f$) to attest how many transactions it has successfully broadcasted; Concurrently, every node records the lastest broadcast QC sent from each participating node, and uses these $n$ latest QCs as input to some multi-valued validated Byzantine agreement ($\mvba$) module (which is guaranteed to output $n$ valid broadcast QCs), such that all nodes can make a unitary decision according to the $\mvba$ output on how to cut the disseminated transactions into a linearized sequence of blocks. In $\tusk$, the transaction dissemination phase is not ``provable'': a node diffuses its input transaction and just waits for sufficient naive acknowledgments instead of digital signatures. Meanwhile, $\tusk$ leverages an asynchronous directed acyclic graph (DAG) protocol to order those diffused transactions in a concurrent manner. Here each DAG ``vertex'' actually represents a consistent broadcast protocol that allows a designated leader to generate and send a QC to convince all participating nodes that it sends to them the same transactions' digests. Moreover, every DAG ``vertex'' has to link to sufficient  precedent ``vertices'' created by distinct participating nodes (e.g. $n-f$). 
	That means, when each node executes a consistent broadcast in $\tusk$'s DAG, it needs to broadcast at least $n-f$ QCs (playing the role of $n-f$ links in DAG).
}

However, both protocols heavily rely on signature based quorum certificates (QCs) to implement the performance-optimized structure of concurrent broadcast and agreement: each node has to broadcast $\bigO(n)$ QCs on different messages per decision,   indicating overall $\bigO(n^3)$ QCs per output block.
Worse still, 
$\dumbong$ and $\tusk$, like many  earlier BFT studies \cite{yin2018hotstuff-full,guo2022speeding,bft-smart},  implement QCs by  concatenating $2f+1$ non-aggregate signatures (where $f$ is the number of maximal malicious nodes).\footnote{It might be a little bit surprising, but we find that the official implementations of many popular BFT protocols (like HotStuff, Speeding Dumbo, Dumbo-NG, Narwhal and Tusk) use ECDSA/EdDSA for implementing QC.} Although they can theoretically use  aggregatable signatures   like   threshold Boneh-Lynn-Shacham (BLS) signature as alternative, they explicitly avoid \yuan{the aggregatable threshold signatures} in practice, because threshold \yuan{signatures} might entail concretely   expensive computation costs of cryptographic pairings and Lagrange interpolation in the exponent (as elaborated by a very recent thorough benchmarking \cite{li2023performance}).



Then if we implement QCs by ECDSA/EdDSA with standard 128-bit security \yuan{as suggested by earlier practices \cite{yin2018hotstuff-full,guo2022speeding,bft-smart,li2023performance}}, when $n$=$200$ nodes, 
  \yuan{each node needs to send  at least  343 MB signatures} per    decision;\footnote{\yuan{Each QC contains $n$-$f$=134 ECDSA/EdDSA signatures, and each   signature has 64 bytes (if compression is turned off for computation efficiency). So sending $n$ QC to $n$ nodes would cost $(n-f)*64*n*n$=343 MB.}}
even worse, each node   receives a minimum of $n$ distinct QCs that have to verify, 
indicating a CPU latency  of    second(s).\footnote{\yuan{Verifying  an ECDSA/EdDSA   costs $\sim$40 $\mu$s at EC2 c6a.2xlarge instance.}}
As Figure \ref{fig:ng-break} depicts, we further test $\dumbong$ in LAN to experimentally \yuan{demonstrate} the high cost of QCs in practice.
It reveals: 
(i)  when $n$ increases to 196 from 16, the ratio of   signature verifications in  CPU time increases from $\sim$16.7\% to $\sim$66.6\%,\footnote{Breakdown of CPU time was measured using Golang's profiling tool pprof.}
  (ii) the communication   of QCs jumps from $\sim$0.8\% of total communication to $\sim$40\%. Apparently, the overhead   of signatures (which can be characterized by a fine-grained metric called authenticator complexity \cite{yin2018hotstuff-full,avarikioti2020fnf}) now plays a key role in the inferior performance    in the large-scale   settings with hundreds of nodes.

\begin{figure}[H] 
	\vspace{-0.15cm}
	\begin{center}
		\includegraphics[width=7.8cm]{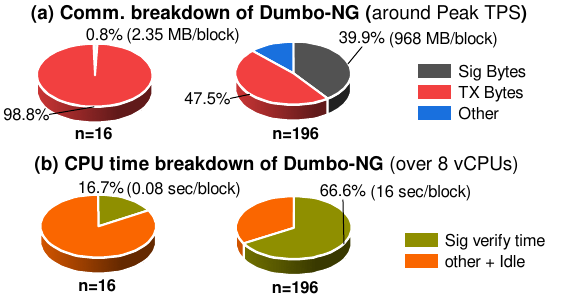}\\
		\vspace{-0.15cm}
		\caption{Cost breakdown       of Dumbo-NG in LAN   consisting of 16-196 nodes.}
		\label{fig:ng-break}
	\end{center}
	\vspace{-0.3cm}
\end{figure}

\noindent
{\bf Practical issues of existing (nearly) signature-free designs}. Since the major scalability bottleneck of $\dumbong$/$\tusk$ stems from their tremendous usage of signature-based QCs, one might wonder whether some existing ``signature-free'' asynchronous BFT protocols could better extend to work in large-scale networks. 
In particular, some recent studies like PACE \cite{zhang2022pace} and FIN \cite{duan2023practical}  attempt to reduce authenticators as many as possible, resulting in some  (nearly) signature-free asynchronous BFT protocols \footnote{We  follow the conventional abuse of term ``signature-free'', though these  protocols \cite{zhang2022pace,duan2023practical}     might require  public key assumptions to practically generate distributed common randomness and/or leverage cryptographic hash functions for efficiency. Both hidden assumptions theoretically imply signatures.} that can get rid of public key operations except for   distributed   randomness. 

In short, 
PACE and FIN directly execute a special variant of asynchronous Byzantine agreement called asynchronous common subset (ACS) \cite{benor}, which can solicit $n-f$ nodes'   proposals as a block of consensus output.
However, these  signature-free  approaches have their own efficiency issues that could be even more serious than $\dumbong$ and $\tusk$  \cite{gao2022dumbo}:
(i) {\em severe transaction censorship}--- the adversary can constantly prevent  $f$ honest nodes from contributing into consensus results, causing  an order of  $\bigO(n)$  communication blowup  because   censored transactions have to be redundantly broadcasted by majority honest nodes;\footnote{HoneyBadger BFT (HBBFT)  uses threshold encryption to wrap ACS against censorship threat  but this introduces   considerable computational cost.}
(ii) {\em low bandwidth utilization}--- they sequentially execute transaction broadcast and agreement, so  during a long period of agreement phase,   bandwidth is ``wasted'' to contribute nothing into throughput; 
(iii) {\em costly erasure-code}--- they use erasure-code based reliable broadcasts \cite{cachin05}, where every node   performs $\bigO(n)$ erasure-code decodings, causing     high CPU cost (cf. Figure \ref{fig:fin-break}).

\begin{figure}[H] 
	\vspace{-0.15cm}
	\begin{center}
		\includegraphics[width=7.8cm]{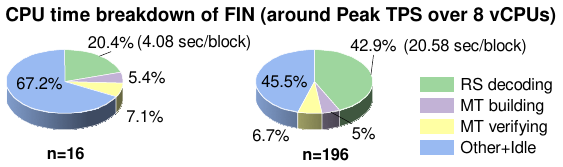}\\
		\vspace{-0.15cm}
		\caption{CPU time breakdown of FIN for 16-196 nodes in LAN (when system load is close to peak throughput). Here ``MT'' represents Merkle tree \cite{merkle1987digital}; ``RS'' represents Reed-Solomon code \cite{reed1960polynomial} (instantiated by a fast implementation \cite{rs-lib} of a systematic Cauchy Reed-Solomon variant \cite{blomer1995xor,plank2005optimizing}).}
		\label{fig:fin-break}
	\end{center}
	\vspace{-0.3cm}
\end{figure}


	\vspace{-0.2cm}

\medskip
Bearing the substantial scalability challenge in the state-of-the-art of asynchronous BFT, the next question remains open:
 	\vspace{-0.6cm}
{
\begin{center}
	\emph{Can we   extend the appealing performance of the \\existing   performant asynchronous BFT consensus protocols \\to larger networks   with several hundreds of nodes?}
\end{center}
}
 	\vspace{-0.45cm}

\begin{table*}[!t]
	\caption{Comparison between asynchronous BFT consensus (state-machine replication) protocols.}\label{tab:compare}
	\vspace{-0.2cm}
	\centering
	\resizebox{2.1\columnwidth}{!}{	
		\renewcommand{\arraystretch}{0.6}	
		\begin{tabular}{lccccclcc}
			
			\toprule
			
			\multirow{2}{*}{} & \multicolumn{5}{c}{\bf Per-Decision Complexities (summed over all nodes)}                                                                                                                                                                    & \multirow{3}{*}{\begin{tabular}[c]{@{}l@{}} \\ Technique to realize  {\bf Liveness}  \\in  the worst case  against tx censorship\end{tabular}} & \multirow{3}{*}{\begin{tabular}[c]{@{}c@{}}\\ {\bf Fairness}\\ (Quality)\end{tabular}} & \multirow{3}{*}{\begin{tabular}[c]{@{}c@{}}  \\ \bf Concurrent\\ \bf broadcast \& \\  \bf agreement\end{tabular}}    \\ \cline{2-6}\rule{0pt}{18pt}
			& \begin{tabular}[c]{@{}c@{}}\bf Amortized \\ \bf Comm. \\ (worst-case)\end{tabular} & \bf Message  & \begin{tabular}[c]{@{}l@{}}\bf Authenticator  \\ (except dist.\\ common coin)\end{tabular} & \bf Round  & \begin{tabular}[c]{@{}l@{}}Number of \\normal-path\\  {\bf EC decoding}\end{tabular}     &                                                                                                                                             &                                                                               &                                                                                                   \\ \hline\rule{0pt}{10pt}
			BKR94~\cite{benor}             & $O(n^3 |\tx|)$                                                                                                       & $O(n^3)$ & -                                                                            & $O(\log n)$ & - & duplicated tx buffers $\ddagger$                                                                                                                   & \cmark                                                                             & \xmark               \\\rule{0pt}{5pt}
			HBBFT~\cite{miller2016honey,beat}       & $O(n |\tx|)$                                                                                                         & $O(n^3)$ & -                                                                            & $O(\log n)$ & $O(n^2)$ $^*$ & duplicated tx buffers + TPKE $\diamond$                                                                                            & \cmark                                                                             & \xmark                                                                                              \\\rule{0pt}{5pt}
			PACE~\cite{zhang2022pace}              & $O(n^2 |\tx|)$                                                                                                       & $O(n^3)$ & -                                                                            & $O(\log n)$ & $O(n^2)$ $^*$ & duplicated tx buffers $\ddagger$ (hash-only)                                                                                            & \cmark                                                                             & \xmark                                                                                                 \\\rule{0pt}{5pt}
			DL~\cite{yang2021dispersedledger}    & $O(\kappa n |\tx|)$  $\dagger$                                                                                                & $O(n^3)$ & -                                                                            & $O(\log n)$ & $O(n^2)$ $^*$ & strong validity   $\dagger$                                                                                              & \cmark                                                                             & \xmark                                                                                    \\\rule{0pt}{5pt}
			FIN~\cite{duan2023practical}               & $O(n^2 |\tx|)$                                                                                                       & $O(n^3)$ & -                                                                            & $O(1)$  & $O(n^2)$ $^*$   & duplicated tx buffers $\ddagger$ (hash-only)                                                                                            & \cmark                                                                             & \xmark                                                                                       \\\hline\rule{0pt}{10pt}
			CKPS01~\cite{cachin2001secure}            & $O(n^3 |\tx|)$                                                                                                       & $O(n^2)$ & $O(n^3)$                                                                     & $O(1)$   & -   & duplicated tx buffers   $\ddagger$                                                                                                                   & \cmark                                                                             & \xmark                                                                                               \\\rule{0pt}{5pt}
			sDumbo~\cite{guo2022speeding}            & $O(n |\tx|)$  $\dagger$                                                                                                        & $O(n^2)$ & $O(n^3)$; $O(n^4)$  impl.{$^\S$}                                         & $O(1)$    & -  & duplicated tx buffers + TPKE $\diamond$                                                                                              & \cmark                                                                             & \xmark                                                                                                 \\\rule{0pt}{5pt}
			Tusk~\cite{tusk}              & $O(n^2 |\tx|)$                                                                                                       & $O(n^2)$ & $O(n^3)$; $O(n^4)$  impl.{$^\S$}                                         & $O(1)$  & -    & duplicated tx buffers     $\ddagger$                                                                                                                & \cmark                                                                             & \cmark                                                                                                 \\\rule{0pt}{5pt}
			Dumbo-NG~\cite{gao2022dumbo}          & $O(\kappa n |\tx|)$  $\dagger$                                                                                                 & $O(n^2)$ & $O(n^3)$; $O(n^4)$  impl.{$^\S$}                                         & $O(1)$   & -   & strong validity $\dagger$                                                                                                & \xmark                                                                             & \cmark                                                                                                 \\ \hline\rule{0pt}{10pt}
			FIN-NG (this work)           & $O(\kappa n |\tx|)$  $\dagger$                                                                                                 & $O(n^3)$ & -                                                                            & $O(1)$  & -    & strong validity  $\dagger$                                                                                               & \cmark                                                                             & \cmark                                                                                                  \\\rule{0pt}{5pt}
			$\jumbo$ (this work)            & $O(\kappa n |\tx|)$  $\dagger$                                                                                                 & $O(n^2)$ & $O(n^2)$                                                                     & $O(1)$   & -   & strong validity  $\dagger$                                                                                               & \cmark                                                                            &  \cmark                                                                                                  \\ \hline\rule{0pt}{5pt}
		\end{tabular}
	}
	
	\vspace{-2mm}
	{
		\begin{scriptsize}
			\begin{itemize}[leftmargin=4mm]
				\item[$^\ddagger$] ``Duplicated tx buffers'' mean that a transaction has to be sent to $f+1$ honest nodes' transaction buffers, otherwise the transaction might never be agreed and violate liveness. This is a typical problem causing quadratic   (amortized) communication cost in  protocols   directly built from asynchronous common subset (ACS) such as PACE and FIN \cite{duan2023practical,zhang2022pace,benor},  because the adversary can enforce ACS to constantly drop $f$ honest nodes' proposed transactions.
				\item[$\diamond$] HBBFT, BEAT and sDumbo warp ACS by threshold encryption,  amortizedly reducing    communication cost  to  $O(n)$ despite duplicated tx buffers.
				\item[$^\dagger$] ``Strong validity'' guarantees every transaction proposed by any honest node to  be eventually agreed. It allows de-duplication techniques \cite{mirbft,tusk} to send each transaction to only $\kappa$ {\em random} nodes (where $\kappa$ is a small parameter ensuring overwhelming probability to include an honest node in $\kappa$ random nodes).
				\item[$^\S$] Tusk, sDumbo and Dumbo-NG \cite{gao2022dumbo,tusk,guo2022speeding} have $O(n^3)$ authenticator complexity if using non-interactive threshold signatures to implement QCs, but   their   implementations actually choose ECDSA/EdDSA for concrete performance, resulting in $O(n^4)$ authenticator complexity. 
				\item[$^*$] Though each node   performs $O(n)$ number of erasure-code  decoding, this  reflects a computation cost of $\tilde{O}(n^2)$, as  each decoding has $\tilde{O}(n)$ operations.
			\end{itemize}
		\end{scriptsize}
	}
	
	\vspace{-0.55cm}
\end{table*}

\subsection{Our Contribution}

We systematically investigate how to overcome the scalability issues of existing asynchronous BFT consensus and answer the aforementioned question in affirmative. A couple of more scalable asynchronous   consensus protocols $\finng$ and $\jumbo$ are designed and experimentally demonstrated. In greater detail, our core contributions can be summarized as:

\begin{itemize}

	\item {\bf FIN-NG: adapting the existing signature-free  approaches   into  parallel broadcast and agreement}. We initially propose a more scalable signature-free asynchronous BFT consensus protocol $\finng$, by adapting the signature-free state-of-the-art  FIN into the enticing paradigm of concurrent broadcast and agreement.  
	In comparison to FIN, $\finng$ avoids the unnecessary detour to ACS and can (i) achieve significantly higher throughput because    concurrently running broadcast and agreement can more efficiently utilize bandwidth resources, 
	(ii) ensures all honest nodes' proposals to finally output, thus realizing strong liveness to mitigate the censorship threat in FIN, 
	(iii) avoid full-fledged reliable broadcasts as well as computationally costly erasure-coding  in normal path to further reduce latency.
	FIN-NG not only greatly improves the performance of FIN, but also already outperforms the signature-involved state-of-the-art protocols like $\dumbong$ in most deployment settings.
	


	\smallskip
	\item {\bf Comprehensive benchmarking and   optimizations towards  concretely faster   QC implementation}. We   assess the impact of implementing  QCs from various   digital signatures, including concatenated ECDSA/Schnorr signature \cite{schnorr1991efficient}, non-interactively half-aggregatable Schnorr signature \cite{chalkias2021non,chen2022half},   BLS multi-signature  \cite{boneh2004short,boneh2018compact} and BLS threshold signature \cite{boldyreva2003threshold}. 
	\yuan{Neither same to many previous  practitioners  \cite{li2023performance,gao2022dumbo,guo2022speeding,tusk,bft-smart,yin2018hotstuff-full} suggesting ECDSA/EdDSA for (concrete) computing efficiency nor similar to many theorists using    threshold   signatures for  communication efficiency \cite{cachin2001secure,cachin00,abraham2018validated,lu2020dumbo}}, we strictly suggest to implement QCs       by   BLS multi-signature in an overlooked ``batch-then-verify'' manner \cite{jannes2023beaufort,opticoin},  with adding a ``blocklist'' against performance downgrade by excluding malicious signers  while aggregating. This enjoys the short size of aggregate signatures  and   avoids their   costly verifications or aggregation. 
	Moreover, regarding the major authenticator bottleneck of sending $\bigO(n)$ QCs, we   explore the merit of  BLS multi-signatures to further compress  $\bigO(n)$ QCs on different messages,  alleviating the communication   of transferring  $\bigO(n)$ QCs by saving more than half in size.

	
	\smallskip
	\item {\bf $\jumbo$: even more scalable asynchronous BFT  instantiation with $O(n^2)$ authenticators and messages}. 
	Despite $\finng$ and our various QC optimizations, we still seek for  a more scalable solution  to asynchronous BFT   that can asymptotically reduce authenticator complexity and simultaneously attain quadratic message complexity. 
	As shown in Table \ref{tab:compare}, we achieve the goal by presenting $\jumbo$---a non-trivial scalable instantiation of $\dumbong$, by introducing the information dispersal technique \cite{lu2020dumbo} to reduce the number of multicasting $\bigO(n)$ QCs from $n$ to expected constant. $\jumbo$ asymptotically reduces authenticator overhead with preserving a balanced $O(n^2)$ message complexity, thus significantly outperforming the state-of-the-art protocols and $\finng$ in most settings (particularly in these settings with larger scales and restricted bandwidth).
	Moreover, we propose a  patch for $\jumbo$ to achieve a desired security   property of ``fairness'', thus preventing adversaries from controlling arbitrary portion of  transactions in the consensus output.

\begin{figure}[H] 
	\vspace{-0.4cm}
	\begin{center}
		\includegraphics[width=7cm]{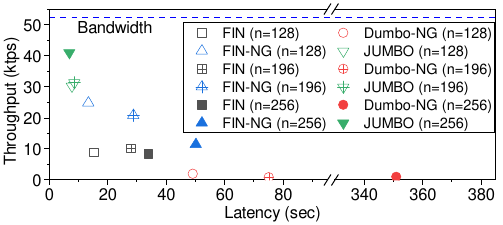}\\
		\vspace{-0.3cm}
		\caption{Our designs v.s.    state-of-the-art   in a WAN setting (400 ms RTT, 100 Mbps bandwidth) for $n=$128, 196 and 256.}	
		\label{fig:intro-latency}
	\end{center}
	\vspace{-0.25cm}
\end{figure}

	\item {\bf Open-source implementation and extensive evaluations in varying network settings with up to 256 nodes}. We implemented $\jumbo$ and FIN-NG in Golang, 
	 and also $\dumbong$ \cite{gao2022dumbo} and FIN \cite{duan2023practical} in the same language as by-product for fair comparison. 
	These  protocols were    comprehensively evaluated in WAN settings for $n=$ 64, 128, 196, 256 nodes.
	Malicious attacks such as $\mvba$ quality manipulation and sending of false signatures are also evaluated.
	Figure \ref{fig:intro-latency} highlights part  of  our results in a network of 100 Mbps bandwidth and 400 ms RTT. \footnote{Most cloud providers charge  $>$1,000 USD/month for reserving 100 Mbps bandwidth, and 400 ms RTT is   typical for cities between Asia and America. Such   condition   arguably reflects   a setting of affordable  global deployment.}
	 Noticeably, $\jumbo$ is   5X and 70X faster than FIN and $\dumbong$, respectively,  in the setting of 256 nodes.
	 More interestingly, $\jumbo$ can  realize the least latency and simultaneously achieve the maximal throughput closer to line rate.
	For detailed performance comparison among these protocols, cf. Section \ref{sec:eval}.

		Our open-source codebase of   $\jumbo$, $\finng$,      FIN  and $\dumbong$ is of about 13,000 LOC and available  at: 	
	\begin{center}
 
			\texttt{\url{https://github.com/tca-sp/jumbo}}. 
 
	\end{center}

\end{itemize}

\ignore{ 
\section{Challenges and Technique Overview}

Fast QC implementation with preserving security and worst-case performance.

Extension of the state-of-the-art with less overhead for larger scales

1/2-quality $\mvba$ protocol without signature-based quorum.
}


\section{Challenges and Our Techniques}


\ignore{

\begin{figure}[H]

	\begin{center}
		\includegraphics[width=8.5cm]{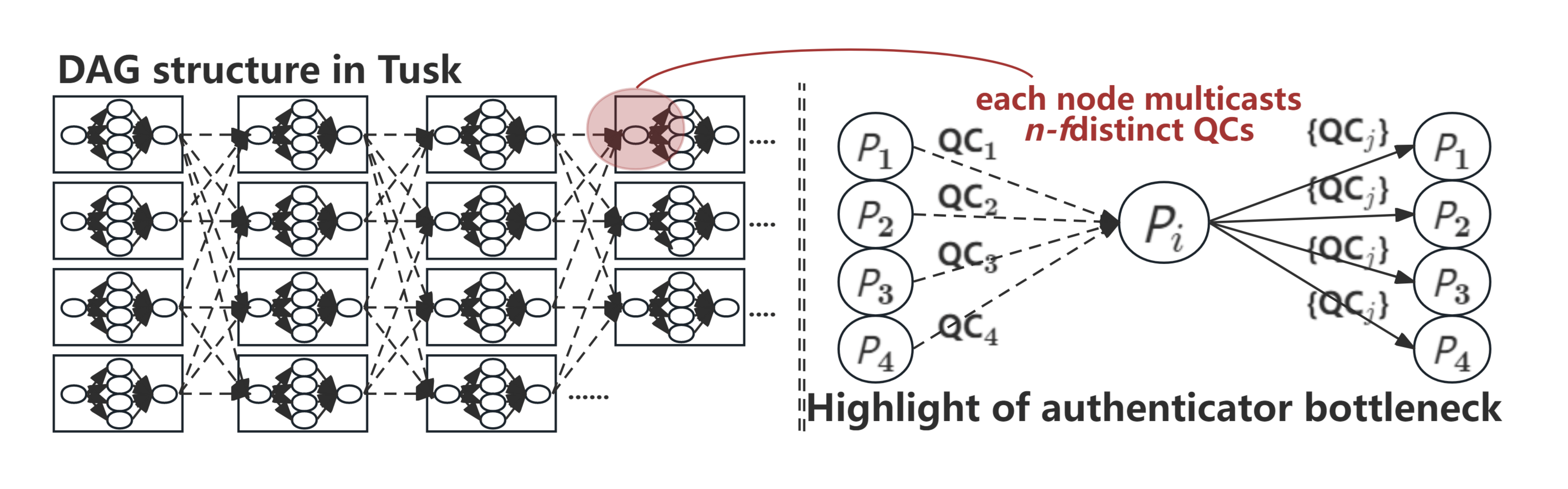}\\
		\vspace{-0.3cm}
		\caption{Authenticator bottleneck of asynchronous BFT consensus explained by taking $\tusk$ as example.}	
		\label{fig:tusk}
	\end{center}

\end{figure}

Let us take $\tusk$ as an example to revisit this performance issue. As illustrated in Figure \ref{fig:tusk}, $\tusk$ leverages an asynchronous directed acyclic graph (DAG) protocol to reach a total order of transactions. Here each DAG ``vertex'' actually represents a consistent broadcast (CBC) protocol \cite{reiter1994secure} that allows a designated leader to generate and send a QC to convince all participating nodes that it has diffused the same transactions to at least $n-f$ nodes in the whole network. Moreover, every DAG ``vertex'' has to link to sufficient  precedent ``vertices'' created by distinct participating nodes (e.g. $n-f$). In practice, each link is implemented by a CBC QC, and $n-f$ such links represent that each CBC has to take $n-f$ QCs of its preceding CBCs as input.
The linking structure in DAG thus can be translated to a total authenticator overhead  of exchanging cubic QCs, because every node has to transmit a vector of $\bigO(n)$ QCs to all $n$ nodes.
}

Bearing the major scalability bottleneck of authenticator complexity in existing signature-involved asynchronous BFT protocols,
our first attempt is to resolve the efficiency issues of signature-free prior art (like FIN), towards achieving a better scalable signature-free asynchronous BFT protocol $\finng$.
Then, we dedicatedly optimize the state-of-the-art signature-involved protocols  (like $\dumbong$), leading up to a scalable asynchronous BFT instantiation $\jumbo$ that not only asymptotically reduce  the authenticator overhead but also is concretely efficient in large-scale networks.

\begin{figure}[H]
	\begin{flushleft}
		\includegraphics[width=9cm]{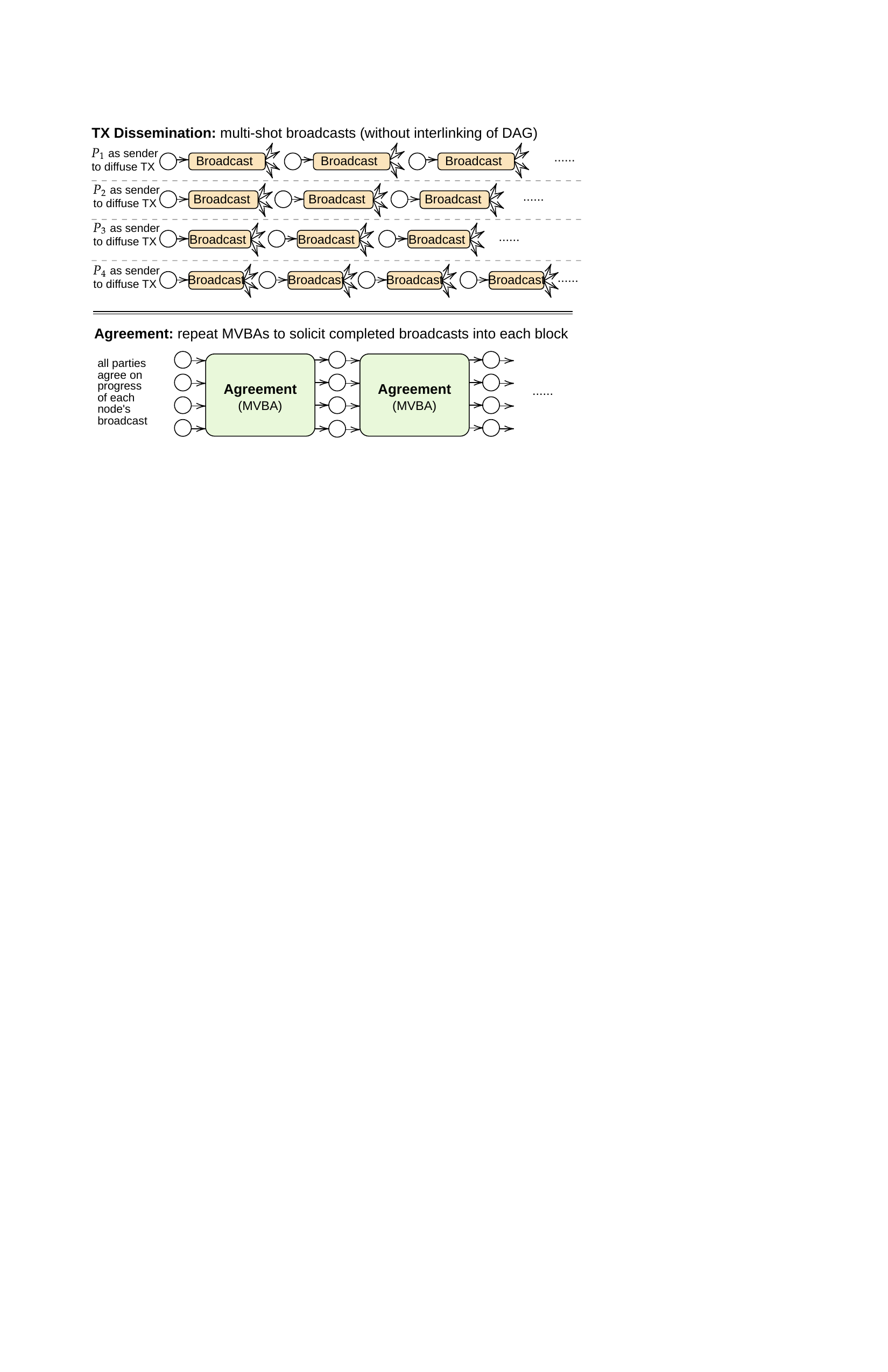}\\
		\vspace{-0.3cm}
		\caption{The paradigm of parallel broadcasts and agreement (the high-level rationale behind both $\finng$ and $\jumbo$).}	
		\label{fig:parallel}
	\end{flushleft}
\end{figure}
\vspace{-0.6cm}

\subsection{Challenges and Techniques of $\finng$}

As illustrated in Figure \ref{fig:parallel},
$\finng$ has a straightforward high-level idea similar to \cite{gao2022dumbo}, i.e., adapt the existing signature-free asynchronous common subset protocols like FIN into the enticing paradigm of parallel broadcasts and agreement.
The paradigm can deconstruct  asynchronous common subset protocol   into two concurrent processes: (i) one bandwidth-intensive process consists of $n$ multi-shot broadcast instances, and each broadcast allows a designed node to sequentially disseminate  its  transactions to the whole network, and (ii) the other agreement process  executes a sequence of asynchronous multi-valued validated Byzantine agreement ($\mvba$)   protocols, such that
every $\mvba$ can solicit  some latest completed broadcasts and decide a new block to output.

However, when adapting FIN into the enticing paradigm of parallel broadcast and agreement, we  face a couple of new challenges that do not exist in the signature-involved setting.



\smallskip
\noindent
{\bf \em  Challenge NG-1: erasure-code computation might bring heavy CPU cost in broadcasts}. 
The first problem of implementing $\finng$ is how to efficiently implement its broadcast primitives.
One might suggest    the   communication-efficient variants \cite{renavss,cachin05,alhaddad2022balanced} of Bracha's reliable broadcast \cite{bracha1987asynchronous,bracha1984asynchronous}. However, although these  protocols can attain  amortized linear communication cost (for sufficiently large input batch), their price is   the necessity of erasure-coding.
What is worse, the efficiency issue of using erasure-code in these broadcast protocols could be very serious, even if we cherry-pick fairly fast  implementation as suggested by \cite{beat} (cf. Figure \ref{fig:fin-break}), because these broadcast protocols rely on sufficiently large input batches to amortize their communication cost, making
encoding over very large fields and thus unsurprisingly   slow.

{\bf \em Our approach}. We overcome the efficiency issue by introducing a weakened variant of reliable broadcast   that can be efficiently implemented without any erasure-coding.
Similar to the normal path of PBFT \cite{pbft}, the broadcast primitive no longer ensures all honest nodes to eventually receive the broadcast value; instead, it only guarantees $f+1$ honest nodes to receive the original broadcast value, and   allows the remaining honest nodes only eventually receive a hash digest of value.
We further provide a pulling mechanism to compensate the weakening of broadcast. The pulling mechanism incorporates dispersal technique \cite{cachin05} for communication optimization, such that for any honest node  that only receives the hash digest, it can efficiently fetch  the missing transactions from other honest nodes, at an amortized linear communication cost.

\smallskip
\noindent
{\bf \em Challenge NG-2:   poor ``quality'' of existing signature-free MVBA might hinder liveness of { }$\finng$}.
It is notable that in $\dumbong$, its liveness relies on the so-called ``quality'' of $\mvba$ protocols.
Here quality means the output of $\mvba$ shall not be fully manipulated by the adversary.
Imagine that the adversary can completely manipulate the $\mvba$ result, it thus can   rule out a certain honest node's broadcast from the finalized output (because the $\mvba$ result determines which nodes' broadcasts shall be solicited by the output block).
This clearly violates   liveness, because some honest node's broadcasted transactions wouldn't eventually output anymore.
In  the signature-based setting, there are several efficient $\mvba$ protocols \cite{guo2020dumbo,abraham2018validated,lu2020dumbo} with optimal 1/2 quality, 
such that $\dumbong$ can be securely instantiated with preserving liveness. However, to our knowledge, there is no  signature-free $\mvba$ protocol with satisfying     quality property, which might potentially cause liveness vulnerability in $\finng$.

{\bf \em Our approach}. To deal with the potential liveness issue, we necessarily redesign the state-of-the-art signature-free $\mvba$  protocol FIN-$\mvba$ \cite{duan2023practical} to ensures   optimal 1/2 quality \cite{goren2022probabilistic} against adaptive adversaries, i.e. the output is proposed by some honest node with at least 1/2 probability. In contrast, FIN-$\mvba$ protocol only ensures 1/3 quality in the same setting.\footnote{Note that although \cite{duan2023practical} proposed a quality patch for its $\mvba$, it still only ensures 1/3 quality against an adaptive   adversary with ``after-fact-removal'' (i.e., the adversary can adaptively corrupt some node and remove this node's undelivered messages), cf.  Appendix \ref{app:quality-attack} in Supplementary for   detailed analysis.} The seemingly small difference  in $\mvba$'s quality can reduce the   latency by an entire $\mvba$ execution, and can render a considerable improvement of latency in $\finng$ under Byzantine setting (cf. Section \ref{sec:eval} for our experiment evaluation of $\mvba$ quality's impact on latency).

\subsection{Challenges and Techniques of $\jumbo$}

As a concrete instantiation of $\dumbong$, the design of $\jumbo$ also follows the high-level structure of parallel broadcast and agreement shown in Figure \ref{fig:parallel}.
However, to make $\jumbo$ practically operate in large-scale networks with enhanced fairness guarantee, we necessarily  overcome the following challenges to reduce the prohibitive authenticator complexity in $\dumbong$, both    concretely and asymptotically.

\smallskip
\noindent
{\bf \em Challenge J-1: trivial use of aggregatable signatures like BLS could be computationally slow}.
Although it is well known that   aggregatable   signatures like BLS threshold signature 
can make great theoretic improvement to BFT protocols,  it is not a popular choice   in real-world systems.
In particular, some very recent BFT protocols like HotStuff \cite{yin2018hotstuff-full}, $\tusk$ \cite{tusk}  and $\dumbong$ \cite{gao2022dumbo}   choose ECDSA/EdDSA signatures instead of BLS threshold signature to implement  QCs in practice.
And a recent benchmarking \cite{li2023performance} even suggests ECDSA/EdDSA signature always favorable than BLS threshold signature for large-scale networks,
as it points out that (the trivial use of) BLS threshold signature might suffer from huge computation costs like the verification of individual signature and the aggregation involving interpolation in exponent.

{\bf \em Our approach}. To make QC  implementation concretely faster, we provide a resolution to settle the debate regarding   QC implementation between theory and practice,
and affirm that  QC from BLS multi-signature (with non-trivial optimizations) could be the best practice for large-scale BFT systems.
First, BLS multi-signature has  very simple and efficient aggregation, which is concretely much faster than its threshold counterpart as it does not  perform the heavy computation of interpolation in exponent.
Second, we extend the optimistic ``batch verification''  approach for BLS multi-signatures \cite{opticoin,jannes2023beaufort}
by adding an easy-to-implement ``blocklist'' mechanism to rule out malicious nodes from quorum, thus
amortizedly extending  its superior efficiency in the optimistic case   into the worst case.
Third, BLS multi-signature allows to aggregate $O(n)$ QCs across various messages (i.e., the primary authenticator bottleneck  in $\dumbong$ and $\tusk$), and we leverage this nice property to further reduce the concrete size of $n$ different QCs  by  more than half.


\smallskip
\noindent
{\bf \em Challenge J-2: how to  achieve  asymptotically lower authenticator complexity?  By extending $\dumbong$ or $\tusk$?}
It is worth noticing $\tusk$ has a directed acyclic graph (DAG) structure where every node has to multicast at least $n-f$ QCs of preceding broadcasts,
These QCs are so-called ``edges'' pointing to preceding broadcasts to form DAG.
Therefore, all multicasts of $n-f$ QCs are intuitively needed in $\tusk$  as part of its DAG structure.
In contrast, a key observation of $\dumbong$ is that the cubic term of its authenticator complexity  has a unique reason caused by the input multicast phase of $\mvba$ protocols. 
This fact raises a question to us:   Can we   reduce the number of input multicasts inside $\dumbong$'s $\mvba$ protocol? This is probably plausible because $\mvba$ only decides one node's input as output, so   the remaining input multicasts are essentially redundant.

{\bf \em Our approach}. We  leverage the neat structure of $\dumbong$   to incorporate a simplistic manner to   reduce the authenticator complexity by an $\bigO(n)$ order. In short, we apply $\mvba$ extension protocol \cite{lu2020dumbo} to replace the multicasts of $n$ QCs by more efficient   dispersals of QCs.  More precisely, we let every node use a provable dispersal protocol \cite{lu2020dumbo} to efficiently disperse $n$ QCs (instead of trivially multicasting), then only a constant number of dispersals (instead of all $n$ dispersals) are  reconstructed for verifications. This dispersal-then-recast paradigm reflects  our observation that only one   multicast of $n$ QC is necessary, and all else are essentially redundant. Clearly, this optimized execution flow is asymptotically more efficient, because each dispersal   has a much less commutation cost  that  is   only $O(1/n)$ of  trivial multicast. 

\ignore{

\smallskip
\noindent
{\bf \em Challenge J-3:  fix  the fairness issue   to ensure that the consensus output has sufficient transactions from honest nodes}.
Another non-triviality is that $\jumbo$ might lack ``fairness'' assurance and cannot prevent malicious nodes from taking over an overwhelming portion of consensus output,
because every node's broadcast in $\dumbong$ only carries a single QC pointing to its own preceding broadcast (cf. Figure \ref{fig:parallel}),  
and does not like $\tusk$ that carries $n-f$ QCs pointing to $n-f$ distinct nodes' broadcasts. 
This is good at saving authenticator complexity, as it reduces the number of multicasting $n$ QCs in broadcasts and leaves a unique authenticator bottleneck at $\mvba$'s inputs. But on the contrary, the design paradigm also allows malicious nodes to broadcast transactions much faster than honest nodes.

{\bf \em Our approach}.
To mitigate the problem of lacking fairness in $\dumbong$, we  provide a simplistic fairness ``patch'' in $\jumbo$ to allow the honest nodes temporarily hang their votes in some too fast broadcasts. Namely, if a broadcast is too fast in relative to other slower broadcasts, the honest nodes wouldn't return their votes (carrying signatures) to the broadcast's sender, until some slower broadcasts can catch up.
Such that for any output block containing $K$ transactions, there must be   $\alpha\cdot K$ transactions proposed by honest nodes, where $\alpha$ is a fairness parameter priorly chosen from $[0,0.5)$.

}

\section{Problem Formulation  and Preliminaries}

\subsection{System and Threat Models}\label{sec:model}

We adopt a widely-adopted asynchronous message-passing model with initial setups and Byzantine corruption \cite{cachin00,cachin2001secure,abraham2018validated,lu2020dumbo}, which can be formally described in the following:

\smallskip
\noindent
{{\bf Known participating nodes}}. There are $n$ designated     nodes in the system. We denote them by $\{\node_i\}_{i \in [n]}$, where $[n]$ is short for $\{1, \dots, n\}$. Each   $\node_i$ is bounded to some public key $pk_i$   known by all other nodes through a bulletin board PKI (though we might not use the setup in the signature-free construction).


\smallskip
\noindent
{{\bf Established threshold cryptosystem}}. To overcome   FLP impossibility, we  depend on  a cryptographic coin flipping protocol \cite{cachin00}, which can return unpredictable common randomness upon a threshold of participation nodes invoke it. 
This usually requires some threshold cryptosystem \cite{libert2011adaptively,cachin00} that is already setup, which can be done through either asynchronous distributed key generation   \cite{kokoris2020asynchronous,das2022practical,gao2021efficient,abraham2021reaching} or  a trusted third party.

\smallskip
\noindent
{{\bf Computationally-bounded Byzantine corruption}}. We   focus on   optimal resilience in asynchrony against an adversary $\adv$ that can corrupt up to $f=\lfloor (n-1)/3 \rfloor$   nodes. 
Following standard cryptographic practice, the adversary  is probabilistic polynomial-time bounded, making it infeasible to break  cryptographic primitives. 
We might consider static or adaptive adversaries, where static adversaries choose a fixed set of $f$ nodes to corrupt  before the protocol starts, and adaptive ones can gradually corrupt   $f$ nodes  during    protocol execution.

\smallskip
\noindent
{{\bf Reliable authenticated asynchronous network}}. We consider a reliable asynchronous message-passing network with fully-meshed point-to-point links and authentications. Messages sent through the underlying network can be arbitrarily delayed by the adversary due to asynchrony, but they can be eventually delivered  without being tampered or forged. In addition, while a node receives a message, it can assert the actual sender of the message due to authentication. 


	\vspace{-0.2cm}

\subsection{Design Goal: Asynchronous BFT Consensus  with Fairness}

We aim at  secure asynchronous BFT consensus protocols satisfying the next atomic broadcast  abstraction.
 
\noindent
\begin{definition}{\bf Asynchronous BFT atomic broadcast (ABC).}\label{def:abc} In an ABC protocol, 
	each node has an input buffer of transactions, and continually outputs some blocks of transactions that   form an ever-growing and append-only linearized log of transactions. 
	The protocol shall satisfy the next properties with all but negligible probability:
 
	 \begin{itemize}
	 	\item Agreement. If an honest node outputs a transaction, then all honest nodes   eventually output it.
	 	\item Total Order. For any two output logs of  honest node(s), one is a prefix of the other, or they are equal. 
	 	\item Liveness (Strong Validity). If a transaction is placed in any honest node input buffer, it can eventually be included by the output log of some honest node.
	 	\item $\alpha$-Fairness. 
	 	For any block  outputted by some honest node, if the block contains $K$ transactions,
	 	then at least $\alpha \cdot K$ transactions in the block are proposed by honest nodes.
	 \end{itemize}
\end{definition}
 
Some remarks on the above definition:

\begin{itemize}



\item For liveness (validity), we define it in a   strong form: if a transaction stays in any honest node's input buffer, it can eventually output, while some weaker forms only guarantee this if the transaction is input of $f+1$ honest nodes or all honest nodes. Strong validity is a useful property enabling    de-duplication of input transactions  \cite{gao2022dumbo,mirbft,tusk}, namely, allowing a client forward its transactions to only a small number of $k$ consensus nodes. 

\item Fairness   bounds the ``quality'' of consensus  \cite{garay2015bitcoin} and prevents   the adversary   controlling an arbitrarily large portion in the output, ensuring that  the consensus output contains sufficiently many transactions   proposed by honest nodes. 

\end{itemize}


\subsection{Complexity Measurements}\label{app:complexity}

Besides   security, we are also interested in constructing practical asynchronous BFT protocols.
For the purpose, we consider the next complexity  metrics as efficiency indicators:

\begin{itemize}[leftmargin=6mm]
	\item {\em Message complexity} \cite{cachin2001secure}: the expected  number of messages   sent by honest nodes to decide an output. Note that we do not count message complexity amortized for batched decision in order to  reflect the actual overhead of packet/datagram headers in a real networked systems.
	
	\item {\em (Amortized) bit communication complexity} \cite{cachin2001secure}: the expected number of bits sent by honest nodes per output transaction. Note that bit communication complexity is amortized for a batch of decision to reflect the number of communicated bits related to each output transaction.
	
	\item {\em Round complexity}  \cite{canetti1993fast}: the expected  asynchronous rounds     needed  to  output a transaction. 
	{Here asynchronous round 
		is a ``time'' unit defined by the longest delay of messages sent among honest nodes \cite{canetti1993fast,dag}.}

	\item {\em Authenticator complexity}  \cite{yin2018hotstuff-full,avarikioti2020fnf}: the expected communicated bits generated by honest nodes associated with transferring (or verifying) signatures  per output decision. For protocols that all have the optimal (linear) amortized bit communication complexity, authenticator communication complexity can provide a more fine-grained measurement to distinguish their actual communication overheads, thus guiding us towards more scalable protocol  designs.


\end{itemize}

\subsection{Preliminaries}\label{sec:protocols}

\noindent \textbf{Byzantine reliable broadcast ($\rbc$)}. A $\rbc$ protocol \cite{bracha1987asynchronous,cachin2005asynchronous} among $n$ participating nodes has a designated sender and can simulate an ideal broadcast channel in an point-to-point network, thus realizing 
(i) {\em agreement}: any two honest nodes' outputs are same;
(ii) {\em totality}:    all honest nodes  would {\em eventually} output some value  conditioned on some honest node has output, 
and (ii) {\em validity}:  an honest sender's input can {\em eventually} be  output by all honest nodes.

\smallskip
\noindent \textbf{Weak Byzantine reliable broadcast (w$\rbc$)}. A w$\rbc$ protocol is a relaxed $\rbc$, where  an honest node    either outputs a value or a hash digest. The protocol satisfies 
(i) {\em weak agreement}: two honest nodes' outputs are either the same value, or the same hash digest, or a value and its hash digest; 
(ii) {\em weak totality}:  at least $n-2f$ honest nodes would eventually output some value and all rest honest nodes would eventually output some  digest conditioned on some honest node has output,
(ii) {\em validity}: same to $\rbc$.

\smallskip
\noindent \textbf{Multi-valued validated Byzantine agreement ($\mvba$)} \cite{abraham2018validated,cachin2001secure,lu2020dumbo}.
This is a variant of Byzantine agreement with external validity, such that all honest  nodes can agree on a value satisfying a publicly known predicate $Q$.  More formally,

\begin{definition}
	Syntax-wise, each node in the $\mvba$ protocol   takes a (probably different) value validated by a global predicate $Q$ (whose description is known by the public) as input, and decides a   value satisfying $Q$ as the output. 
	%
	An $\mvba$ protocol running among $n$ nodes shall satisfy the next properties except with negligible probability:
	\begin{itemize}[leftmargin=6mm]
		\item {\em Termination.} If all honest nodes  input some values (eventually) satisfying $Q$, then each honest node would output;
		
		\item {\em Agreement.} If  two honest nodes output $v$ and $v'$, respectively, then $v=v'$.
		
		\item {\em External-Validity.} If an honest node outputs a value $v$, then $v$ is valid w.r.t. $Q$, i.e., $Q(v) = 1$;	
		
		\item {\em 1/2-Quality.} If an honest node outputs $v$, the probability that $v$ was input by the adversary is at most 1/2.	
		
	\end{itemize}
\end{definition}
 
\noindent

In addition,  many studies \cite{yurek2023long,das2023practical,abraham2023perfectly,duan2023practical} recently realized that the external validity property of $\mvba$ can be extended to support a somewhat {\em internal} predicate, which does not return ``false'', but keeps on listening the change of the node's local states and returns ``true'' only if the joint of predicate input and a node's local states satisfies a pre-specified validity condition. This is particularly useful if the higher level protocol  can enforce all nodes' internal predicates to eventually be true as long as any honest node's predicate is true, e.g., all nodes keep on checking whether they output in a specific $\rbc$ protocol. Later in Section \ref{sec:jumbo}, the {\em internal} predicate allows us to design FIN-NG by adapting the signature-free ACS protocol FIN into the paradigm of concurrent agreement and broadcast.

\smallskip
\noindent \textbf{Reproposable asynchronous binary agreement ($\raba$)} \cite{zhang2022pace}. This is a special variant of binary Byzantine agreement, where each node is additionally allowed to change its input from 0 to 1 (but not vice versa).  Here we recall its   definition:
 
\begin{definition}
	Syntactically, a node in $\raba$ is allowed to invoke two input interfaces $raba$-$propose$ and $raba$-$repropose$, and has an output interface $raba$-$decide$ to receive the output;  an honest node is required to use the   interfaces in the following way: (i) must invoke $raba$-$propose$ before $raba$-$repropose$, (ii) cannot use any interface more than once, (iii) can only take 0 or 1  as input of $raba$-$propose$ and $raba$-$repropose$ interfaces, (iv) cannot $raba$-$propose$ 1 and then $raba$-$repropose$ 0 (but   vice versa is allowed). A $\raba$ protocol shall satisfy the next properties except with negligible probability:
	\begin{itemize}[leftmargin=6mm]
 
		\item {\em Agreement.} If any two honest nodes terminate, then they $raba$-$decide$ the same value.
		\item {\em Unanimous termination and Unanimous validity.}  If all honest nodes $raba$-$propose$ $v$ and never $raba$-$repropose$, then all honest nodes terminate and $raba$-$decide$ $v$.
		\item {\em Biased validity.} If $f+1$ honest nodes $raba$-$propose$ 1, no honest node would $raba$-$decide$ 0.
		\item {\em Biased termination.} If all honest nodes $raba$-$propose$ 1 or eventually $raba$-$repropose$ 1, then all honest nodes would terminate and $raba$-$decide$.
	\end{itemize}
\end{definition}

 
\noindent
{\bf Cryptographic notions}.
$\hash$ denotes a collision-resistant one-way hash function. $\Sign$ and $\Verify$ represent the signing and verifying algorithms of a  signature scheme that is existentially unforgeable under adaptive chosen message attack (i.e., EUF-CMA secure). We also consider non-interactively aggregatable multi-signature (or threshold signature) \cite{chen2022half,chalkias2021non,boldyreva2003threshold,boneh2004short,boneh2018compact}, where is another $\Agg$ algorithm that can input multiple signatures on a message and output a  multi-signature (or threshold signature) on the same message. 

\smallskip
\noindent
{\bf Aggregating BLS (multi-)signatures in KOSK model}. 
It is noticeably that we adopt the standard knowledge-of-secret-key (KOSK) model \cite{boldyreva2003threshold}:  every   node has to sign a valid  signature to prove its knowledge of secret key when registering public key. This is the standard process of real-world CA while issuing certificates \cite{rfc2986} and also reflects the public key registration  in proof-of-stake  blockchains.
The KOSK model brings very convenient aggregation of BLS  signatures. In particular,  the simplistic aggregation is in form of $\Agg(\sigma_1, \cdots ,\sigma_k)=\prod_{j=1}^k\sigma_j$, which simply returns $\sigma_1\cdots\sigma_k$ as result. Here $\sigma_1, \cdots ,\sigma_k$ can be BLS signatures or even already-aggregated BLS multi-signatures on either same or different messages.
Later in Section \ref{sec:qc}, we will showcase how asynchronous BFT can leverage this   aggregation technique towards aggregating $\bigO(n)$ QCs on different messages.

%
%

\section{Initial attempt: $\finng$ to scale up Signature-free   Asynchronous BFT}\label{sec:jumbo}

This section   attempts to enhance the scalability of   signature-free asynchronous BFT protocols,
by adapting the cutting-edge signature-free ACS protocol FIN into the enticing paradigm of concurrent broadcast and agreement. 
The resulting protocol $\finng$ already outperforms   existing asynchronous BFT protocols in the bandwidth-sufficient setting.
Finally, we explain why cubic messages of $\finng$ might cause unsatisfying performance degradation in the bandwidth-starved environment.


\begin{figure}[H] 
	\vspace{-0.1cm}
	\begin{center}
		\includegraphics[width=9.1cm]{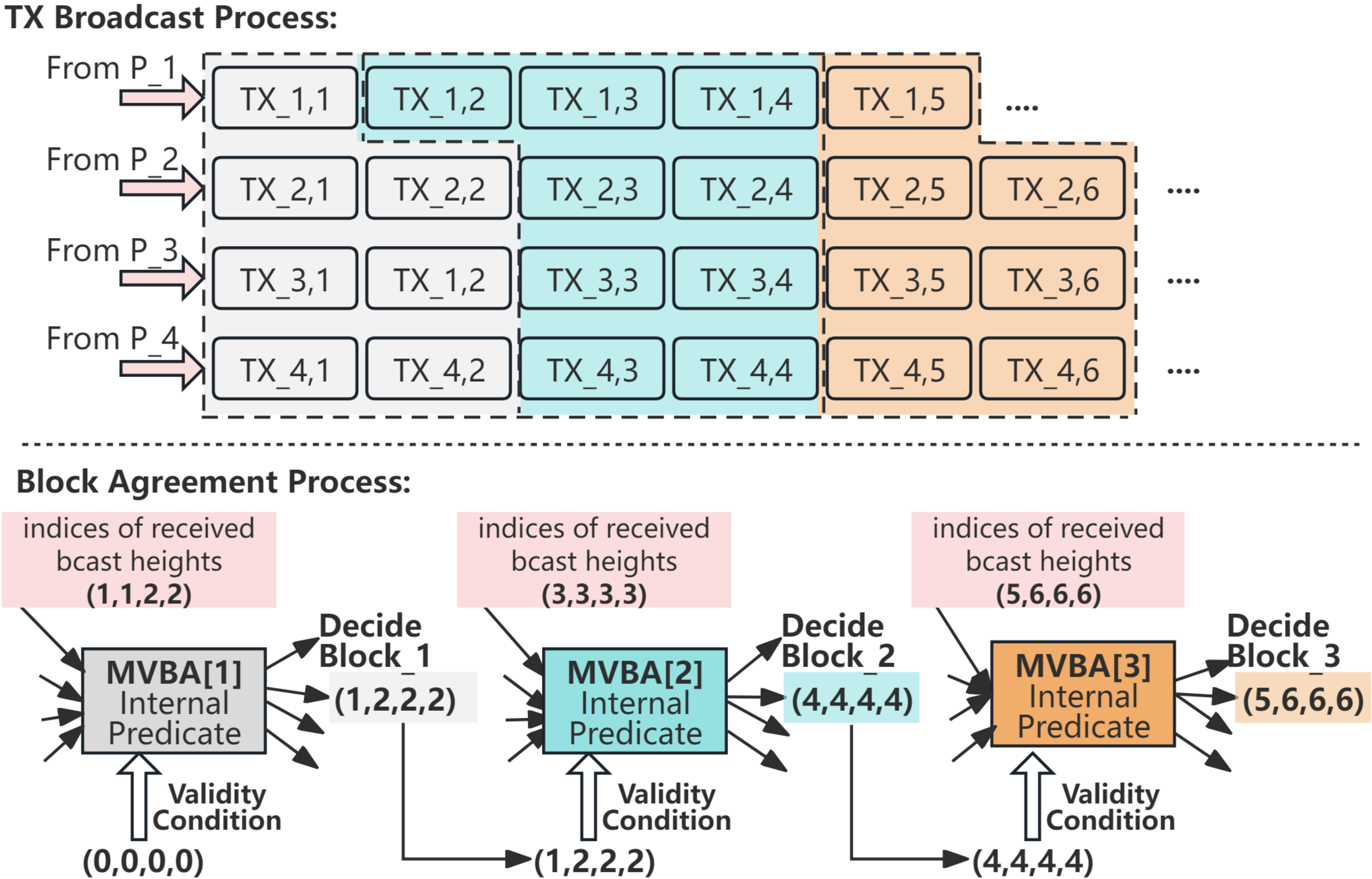}\\
		\vspace{-0.1cm}
		\caption{High-level of $\finng$.}	
		\label{fig:fin-ng}
	\end{center}
	\vspace{-0.5cm}
\end{figure}

\smallskip
\noindent{\bf Overview of $\finng$}. As shown in Figure \ref{fig:fin-ng},   $\finng$ deconstructs  FIN ACS \cite{duan2023practical} to separate transaction broadcast from consensus. 
Noticeably, a trivial parallelization of FIN ACS's broadcast and agreement wouldn't bring us  well-performing  $\finng$ in large-scale networks, 
and we specially tailor  both modules  to alleviate their efficiency bottlenecks  as follows:

\begin{itemize}
	\item {\bf Avoid costly $\rbc$ towards more efficient broadcast}: 
	  FIN instantiates its broadcast process by using  CT05 $\rbc$   \cite{cachin2005asynchronous}. The protocol leverages erasure-code     to attain amortized linear communication cost (for sufficiently large batch of broadcast input).
	However, CT05 $\rbc$   involves expensive decoding, which could be very costly in FIN as each node has to participate  in $n$ such $\rbc$s and perform $n$ times of decoding. 
	We thus suggest to employ a more efficient process of transaction dissemination with weakened reliability.
	 That means, only $f+1$ honest nodes are ensured to receive consistent transactions and up to $f$ honest nodes might only receive the hash digest  (which can be analog to   PBFT \cite{pbft,bft-smart}). Thanks to such weakening, the broadcast attains amortized linear communication cost in the absence of erasure coding.
	To compensate   the weakening in broadcasts, we further propose a couple of communication-efficient  pulling mechanisms, such that any honest node  can securely fetch its missing transactions with amortized linear communication.

	\item {\bf Enhance $\mvba$ quality for liveness and lower latency}: 
	When separating broadcasts and $\mvba$ toward realizing $\finng$, the liveness  might be hurt if the underlying $\mvba$ has sub-optimal quality.
	This is because 
	if the adversary can   manipulate $\mvba$'s output 
	due to low quality,  it can make $\mvba$ result never contain the indices of some   nodes' completed  broadcasts, thus censoring certain nodes and breaking liveness.
	%
	To guarantee liveness and fast confirmation in $\finng$, we redesign the signature-free $\mvba$ protocols proposed in FIN \cite{duan2023practical}. 
	The resulting  FIN-$\mvba$-Q protocol can ensure optimal 1/2 quality in the influence of a strong adaptive adversary with ``after-the-fact-removal''. 
	In contrast,    $\mvba$ protocols in FIN \cite{duan2023practical} only realizes $1/n$ quality  and $1/3$ quality, respectively, 
	in the same setting. 
	Later in Section \ref{sec:eval}, we experimentally test   $\finng$  using $\mvba$s with different quality,
	revealing that our seemingly small improvement in $\mvba$ quality can make a large difference in latency.
	
\end{itemize}


\subsection{$\finng$ Protocol}
\label{sec:jumbo1}

This subsection   details   $\finng$ by
elaborating its      transaction dissemination process and block agreement process.

\smallskip
\noindent{\bf Broadcast process in $\finng$}.
Figure \ref{fig:broadcast} illustrates the   broadcast process that disseminates transactions in $\finng$.
There could be $n$ such multi-shot broadcasts in $\finng$,
where each node $\node_i$   acts as a  leader in one broadcast instance, 
and all nodes   execute as receivers in all  $n$ broadcasts. 
A multi-shot broadcast instance in $\finng$ is executed as follows.

\begin{itemize}
	\item 	{\em Tx dissemination through a  sequence of weak $\rbc$}. 
	The designated sender $\node_s$ and all other nodes participate in a sequence of weak reliable broadcast protocols (w$\rbc$ defined in Sec. \ref{sec:protocols}).
	The broadcast process    resembles a simplistic normal-path of PBFT,
	so it only ensures that at least $f+1$ honest nodes might (eventually) receive the original broadcasted transactions,	
	and up to $f$ else honest nodes might only receive a consistent hash digest. 
	
	To bound the memory of buffering received transactions, all broadcast participants can dump   transactions into persistent storage if some w$\rbc$ outputs them, which is because $\finng$   ensures strong validity to guarantee all broadcasted transactions to be part of consensus output (so we can write these broadcast results into disk).

	\item 	{\em Pulling mechanism in companion with broadcasts}.
	We can   enable  the $f$ honest nodes that only hold hash digest   to fetch missing transactions from other honest nodes.
	Since   the node still can  obtain  the hash digest (despite missing the original transactions), it can  securely fetch the corresponding   transactions, because (i) it knows which w$\rbc$ to fetch due to the receival of digest,
	and (ii) at least $f+1$ honest nodes can   output the corresponding transactions due to the (weaker) totality property of w$\rbc$. 	
	%
	Given this, we can propose two choices of implementing  the efficient pulling mechanism:
	\begin{itemize}
		\item The first approach  resembles Tusk's pull mechanism  against static  adversaries \cite{tusk}: when   $\node_i$ receives a hash digest from a certain w$\rbc$ instead of the original transactions, it  asks $\kappa$ random nodes for the missing transactions. There shall be an overwhelming probability such that these selected nodes would contain at least one honest node receiving the original transactions, implying that $\node_i$ can eventually receive at least one respond carrying the correct transactions consistent to its hash digest, with except negligible probability   in $\kappa$. 

		\item The second way is secure against adaptive adversaries \cite{gao2022dumbo}. 
		It lets  $\node_i$ ask all nodes when $\node_i$  realizes some missing transactions, with preserving
		  {\em amortizedly} linear communication complexity per decision.
		The crux of improving communication efficiency is: the responses to $\node_i$ are no longer the original transactions, 
		but incorporate communication-reducing techniques from asynchronous information dispersal \cite{cachin2005asynchronous,renavss}.
		For each node $\node_j$, it chops the  transactions requested by $\node_i$ into $f+1$ fragments, and uses erasure code   to encode the $f+1$ fragments into $n$ code fragments, then computes a Merkle tree using   $n$ code fragments as leaves, 
		and sends $\node_i$ the Merkle tree root, the $j$-th code fragment, and the corresponding Merkle proof attesting   the $j$-th code fragment's inclusion in the Merkle tree.
		At the $\node_i$ side, it can  finally receive  at least $f+1$ responses carrying the same Merkle tree root with valid Merkle proofs, such that 
		$\node_i$ can decode the $f+1$ code fragments   to recover the missing transactions. 

	\end{itemize}

\end{itemize}

\ignore{
Once we replace the QC-based broadcast by some signature-free broadcast protocols. 
This causes a problem that no verifiable broadcast QCs can be used as $\mvba$ input.
Recall that $\dumbong$ heavily relies on the unforgeability and verifiability of broadcast QCs for two reasons: (i) it prevents the malicious nodes from claims the completeness of an unfinished broadcast;  (ii) it allows the honest nodes to convince each other that some node's broadcast indeed has progressed to a  certain height. 
Therefore, once replacing QC-based broadcast by w$\rbc$, we need to ``simulate'' the same guarantees that signature-based QCs can provide in a signature-free way.


In order to overcome this challenge, we reanalyze w$\rbc$ protocol and realize a way to leveraging its weakened totality  (which means either all honest nodes eventually output the same value and/or the value's  digest, or none of them output). 
Specifically, the totality property of w$\rbc$ hints at us to specify $\mvba$'s validity condition by some ``internal'' predicate \cite{yurek2023long,das2023practical,abraham2023perfectly}, i.e., passing the local states representing the heights of broadcasts  received from all nodes into predicate, 
and then check whether other nodes' inputs propose some heights that are not higher than these local states.
}

\begin{figure}[H] 
	\begin{center}
		\includegraphics[width=9.9cm]{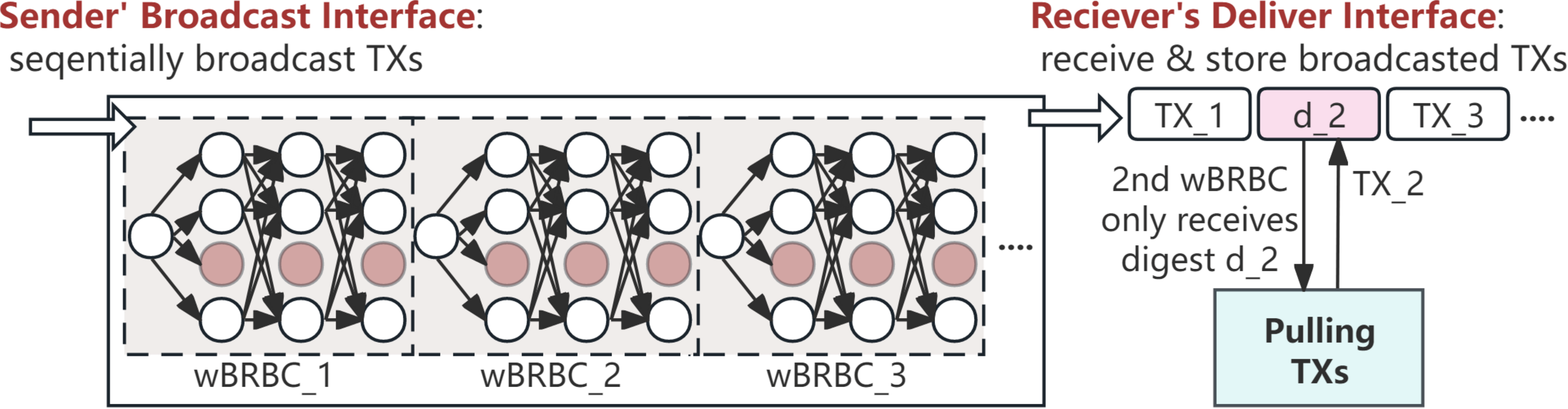}\\
		\vspace{-0.2cm}
		\caption{Broadcast instance in $\finng$.}	
		\label{fig:broadcast}
	\end{center}
	\vspace{-0.3cm}
\end{figure}

\smallskip
\noindent{\bf Agreement process in $\finng$}. 
$\finng$'s agreement process   is a sequence of $\mvba$s proceeding as follows by epoch $e$:
	
	When entering epoch $e=1$, each node $\node_i$ initializes a $n$-size vector $\ordered_e =[0,\cdots,0]$. 
	    $\node_i$ also maintains a $n$-size vector $\height_i$,
	where the $j$-th item closely tracks that $\node_i$ has received how many w$\rbc$s broadcasted by $\node_j$.
	When $\height_i$ has   $n-f$ items larger than $\ordered_e$ and has all items not smaller than $\ordered_e$ in the corresponding positions, 
	the node $\node_i$ takes a snapshot of $\height_i$ as $\mvba[e]$'s input. 
	$\node_i$ waits for that $\mvba[e]$ returns a vector $\newordered$ consisting of  $n$ indices, and can pack all w$\rbc$s with indices between $\ordered_e$  and $\newordered$ as a block of consensus result.\footnote{If    $\mvba[e]$   decides a certain w$\rbc$ as consensus output but $\node_i$ has not yet received anything from the w$\rbc$, the node can also invoke the pulling mechanism of broadcast instance to fetch the missing transactions.}
	Finally, $\node_i$ sets $e = e + 1$ and $\ordered_{e} = \newordered$, then enters the next epoch $e$.

	Noticeably, $\mvba$ used in the above procedures shall be specified by a predicate function satisfying the next conditions:
	\begin{enumerate}
		\item Check that each item in $\height_j$ is not {\em shrink} w.r.t. the last output of $\mvba$ (i.e. $\ordered_{e}$), return false on check fails, otherwise go the next step.
		\item Check that at least $n-f$ items in $\height_j$ increases w.r.t. the last output of $\mvba$, return false on check fails, otherwise go the next step.
		\item Wait for that each item in the local state $\height_i$ is larger than or equal to the corresponding item in  $\height_j$, and then return true. 
	\end{enumerate}

Similar to \cite{yurek2023long,das2023practical,abraham2023perfectly}, 
the above conditions requires some local states (i.e.,  $\height_i$----each broadcast's progress from $\node_i$'s view) to verify. 
One might worry its   step 3) might wait forever and hurt termination of $\mvba$.
However, violation of termination wouldn't happen because 
(i) any honest node $\node_i$'s input to $\mvba$ must be valid regarding its own local states;
(ii) the totality of broadcasts therefore ensure all honest nodes can eventually update their local  states to validate $\node_i$'s input.




\smallskip
\noindent{\bf Security of $\finng$}. The security of $\finng$ can be summarized as the following main theorem:
 
\begin{theorem}\label{thm:finng}
	$\finng$  securely realizes asynchronous BFT   atomic broadcast without fairness  except with negligible probability, conditioned on the underlying w$\rbc$ and $\mvba$ protocols are secure (where $\mvba$ shall satisfying quality).
\end{theorem}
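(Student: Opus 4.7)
\medskip
\noindent\textbf{Proof proposal.} The plan is to establish each property in Definition~\ref{def:abc} (agreement, total order, and liveness/strong validity) separately, reducing each to a combination of the w$\rbc$ guarantees and the $\mvba$ guarantees (termination, agreement, external validity, and 1/2 quality). I would first set up a per-epoch invariant: at the start of epoch $e$ every honest node holds the same vector $\ordered_e$ (the output of $\mvba[e-1]$, with $\ordered_1 = \vec 0$). This invariant is then pushed forward through the epoch by the agreement of $\mvba[e]$: since all honest nodes share the same predicate description (parameterized by $\ordered_e$) and agree on $\newordered$, they deterministically slice the same set of w$\rbc$ instances into the epoch-$e$ block.

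The first non-trivial step is to argue termination of each $\mvba[e]$. The potential worry is the ``wait'' inside clause~(3) of the predicate, which keeps a node from validating some $\height_j$ until its own local state catches up. I would discharge this by showing that any input fed to $\mvba[e]$ by an honest node is drawn from a snapshot of its own local heights, and then invoking the weak totality of w$\rbc$ (together with the pulling mechanism): if some honest node has recorded height $h$ in the $j$-th coordinate, then every honest node eventually reaches height $\geq h$ in that coordinate. Hence clause~(3) becomes true at every honest node in finite time, and combined with external-validity termination of the underlying $\mvba$, every honest node decides $\newordered$ in epoch $e$. Agreement and external validity of $\mvba$ then give a consistent, non-shrinking $\newordered$, and the predicate's clause~(2) guarantees that at least $n-f$ coordinates strictly increase, so epochs make monotone progress.

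Next I would handle agreement and total order on the actual transaction logs. The sequence of $\ordered_e$ vectors is identical across honest nodes by the above invariant, so all honest nodes compute the same indexed list of w$\rbc$ instances. The only subtlety is that different honest nodes might initially hold different representations (original batch vs. digest) of the same w$\rbc$ output because w$\rbc$ only provides weak agreement. I would argue, using the pulling mechanism attached to each broadcast instance, that every honest node can (eventually and with overwhelming probability in $\kappa$, or unconditionally in the adaptive-adversary variant) upgrade a digest to the matching batch, since by weak totality at least $f+1$ honest nodes hold the batch and respond consistently with the digest via the Merkle-tree dispersal; collision resistance of $\hash$ rules out any inconsistent batch passing verification. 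This yields identical logs of transactions at every honest node, giving agreement and total order.

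The main obstacle, and where I would spend the most care, is strong validity. Fix a transaction $\tx$ placed in some honest $\node_i$'s buffer; $\node_i$ will include $\tx$ in some w$\rbc$ it initiates at some position $k$, and weak totality plus the pulling mechanism guarantee that eventually every honest node $\node_j$ has $\height_j[i] \geq k$. I need to show that some $\mvba[e]$ eventually outputs $\newordered$ with $\newordered[i] \geq k$. The natural approach is: once every honest node's local view satisfies $\height_j[i] \geq k$, any honest node's snapshot fed into $\mvba[e]$ has its $i$-th coordinate $\geq k$; by the 1/2 quality of $\mvba$, each epoch's output equals some honest node's input with probability at least $1/2$, independently of the adversary's choices in that epoch (by the quality guarantee). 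Hence within an expected constant number of epochs after the height has propagated, $\newordered[i]$ reaches $k$, and the corresponding w$\rbc$ (and thus $\tx$) is included in an honest node's output log. The probability that this fails for an unbounded number of epochs is geometric and hence negligible, completing the liveness argument. The combination of these three pieces yields the theorem.
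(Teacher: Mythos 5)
Your proposal is correct and follows essentially the same route as the paper's proof: weak totality of w$\rbc$ plus the pulling mechanism (and hash collision resistance) to reconcile digest-vs-batch outputs for agreement and total order, $\mvba$ agreement/external validity to keep the $\ordered_e$ sequence consistent, and $1/2$-quality to bound the censorship probability geometrically for strong validity. Your explicit discharge of the internal predicate's clause-(3) wait via weak totality matches the argument the paper gives in the main text for why $\mvba$ termination is preserved.
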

 
\noindent
{\bf Security analysis.} The  rationale  behind Theorem \ref{thm:finng} is straight:
\begin{itemize}
	\item For safety, the broadcast   of $\finng$ built from w$\rbc$ can prevent malicious nodes to broadcast different transactions to distinct nodes due to its weak agreement; 
	moreover, $\mvba$'s validity and agreement can further ensure all honest nodes would solicit the same finished broadcasts   into each output block, thus ensuring safety.
	\item For liveness, w$\rbc$'s validity ensures that all honest nodes can reliably diffuse their transactions in only constant rounds; moreover, $\mvba$'s termination and quality gauntness that once an honest node's broadcast is delivered to all other honest nodes, this broadcast (or the same node's higher broadcast) would be solicited by some $\mvba$ output with an overwhelming probability of $1-(1-q)^k$ after $k$ $\mvba$ executions, where $q$ is the quality of   $\mvba$ protocol, and $(1-q)^k$ represents the probability that the adversary successively manipulates $k$ $\mvba$s' outputs to prevent the certain broadcast from outputting. Considering that the expectation of $k$ is  $1/q$ and $\mvba$ also has expected constant round complexity, any transaction that is broadcasted by some honest node would be decided as consensus output in expected constant rounds (i.e., the broadcast rounds plus $1/q$ times of $\mvba$'s expected rounds), thus realizing liveness.
\end{itemize}

We defer the full proof   to Appendix \ref{app:finng-proof} in Supplementary.

\ignore{
\begin{proof}[Proof Sketch]
	The roadmap of proving that $\finng$ satisfies all atomic broadcast properties is as follows.
			\yuan{pick up here to rewrite proof intuition....}
	\begin{itemize}

		\item For each item in the output of the current $\mvba$ protocol, all honest nodes shall deliver in the corresponding broadcast, due to the validation rules and agreement of $\mvba$ predicate; 
		moreover, out of these honest nodes, a minimum of $n-2f$ honest nodes can eventually deliver the same transaction batches, while the else honest nodes might only output the corresponding hash digest, because the weak agreement and totality of $w\rbc$. Nevertheless, there exist a request-and-pull mechanism  \cite{gao2022dumbo,tusk} to allow all honest nodes to fetch the  missing transactions according to hash digest, further ensuring the outputs of all honest nodes eventually consistent, thus proving total order and agreement. Proving the validity property is straightforward, reducing to the validity of $w\rbc$ and the quality of $\mvba$. For a detailed proof, interested readers can refer to \cite{gao2022dumbo}.
		
		\item Fairness  proof aligns with that in Theorem \ref{theorem:ng++} as the same fairness patch   can be adapted to $\jumbo$.
	\end{itemize}
\end{proof}
}

\smallskip
\noindent
{\bf Complexity analysis}. It is fairly simple to count the complexities of $\finng$ for its modular design. In particular, if $\finng$ is instantiated by  signature-free  $\mvba$ with quality and cubic messages (e.g., FIN-$\mvba$-Q to be introduced in next subsection), its authenticator complexity  is (nearly) zero, but at a price of  cubic messages per decision. 


\subsection{FIN-$\mvba$-Q: $\mvba$ with Optimal  Quality}

\noindent
If we prefer a signature-free instantiation of  $\finng$, 
it comes to our attention that   few satisfying   signature-free $\mvba$ protocols exist.
One candidate is the recent design of FIN-$\mvba$ \cite{duan2023practical}, which unfortunately  suffers from  quality degradation (cf. Appendix \ref{app:quality-attack} for detailed analysis) to cause the adversary hold 2/3 chance to manipulate   $\mvba$ output.
Worse still, quality of $\mvba$ is critical in $\finng$ as it is necessary for liveness:
the adversary can lower $\mvba$ output's quality   to slow down $\finng$'s confirmation or even completely censor certain nodes.
In order to guarantee   fast confirmation and liveness of $\finng$, 
we redesign  FIN-$\mvba$  to improve  its quality toward   1/2  against strong adaptive adversaries. Remarkably, our result   attains the optimal quality in asynchrony \cite{goren2022probabilistic}.

\smallskip
\noindent{\bf Abandonable validated Byzantine reliable broadcast}. 
We   identify the root reason of quality degradation in FIN-$\mvba$: an   adaptive adversary can hold a corruption quota until   coin flipping is revealed to determine the elected node, after which it can rushingly corrupt  this elected node and replaces the input value by the manipulating one. 
To prevent such rushing adversaries, we set forth a notion of abandonable validated  weak Byzantine reliable broadcast  (avw$\rbc$), such that the honest nodes can   reject a rushing adversary's value by abandoning the slowest broadcasts as soon as  the elected node is disclosed. 
Formally, avw$\rbc$ can be   defined as follows:

\begin{definition}
	An avw$\rbc$ protocol is a generalized w$\rbc$ protocol   with a validity predicate $Q()\rightarrow\{0,1\}$ for validating input. It also provides each   node a local $abandon()$ interface. An avw$\rbc$ satisfying the following properties  except with negligible probability:  
 
	\begin{itemize}
		\item Agreement and totality. Same to w$\rbc$.
		\item Validity. If no honest node invokes $abandon()$, then an honest sender's input would eventually be output by all honest nodes.
		\item External Validity. An honest node   outputs $v$ s.t. $Q(v)=1$.
		\item Abandonability. If a sender inputs a value $v$ to avw$\rbc$ and remains so-far honest upon the moment when $n-2f$ forever honest nodes have invoked their $abandon()$ interface, then for any (strongly) adaptive adversary $\adv$, it cannot make any honest node to output a value $v'\ne v$.
	\end{itemize}
\end{definition}

\begin{figure}
	\begin{tcolorbox}[left=0mm]
		\vspace{-0.1cm}
		\begin{center}
			{\bf  Protocol	of avw$\rbc$  with  sender $\node_s$}  (for each   $\node_i$)
		\end{center}
		\vspace{-0.4cm}
		\noindent\makebox[\linewidth]{\rule{\textwidth}{0.4pt}}
		
		\begin{footnotesize}
			
			\hspace{0.3cm}{\bf Input:} a value $v_i$ s.t. $Q(v_i)=1$
			
			\hspace{0.3cm}{\bf Initialization:} $ban \leftarrow \textbf{false}$, $val \leftarrow \perp$

			\begin{algorithmic}[1]
				\IF {$\node_i=\node_s$}
				\STATE{multicast $\mathsf{VAL}(v_i)$} 
				\EndIf
				\Upon{receiving $\mathsf{VAL}(v_s)$ from $P_s$}
				\Wait{ until $Q(v)$ holds or $ban$ = $\textbf{true}$}
				\IF{$ban$ = $\textbf{false}$}
				\STATE{$val \leftarrow v_s$}
				\STATE{multicate $\mathsf{ECHO}(\hash(v_s))$}
				\EndIf
				\EndWait 
				\EndUpon
				\Upon{receiving $n$-$f$ matching $\mathsf{ECHO}(h)$ from distinct nodes}
				\STATE{multicast $\mathsf{READY}(h)$}
				\EndUpon
				\Upon{receiving $f+1$ matching $\mathsf{READY}(h)$ from distinct nodes and hasn't sent $\mathsf{READY}$ message}
				\STATE{multicast $\mathsf{READY}(h)$}
				\EndUpon
				\Upon{receiving $n$-$f$ matching $\mathsf{READY}(h)$ from distinct nodes}
				\STATE{$wr\text{-}deliver(h)$}
				\EndUpon
				\Upon{$wr\text{-}deliver(h)$}
				\Wait{ until $\hash(val) = h$}
				\STATE{$r\text{-}deliver(val)$}
				\EndWait
				\EndUpon
				\Upon{the local interface $abandon()$ is invoked}
				\STATE{$ban$ = \textbf{true}}
				\EndUpon
			\end{algorithmic}
		\end{footnotesize}
		 \vspace{-0.15cm}
	\end{tcolorbox}
	\begin{center}
		\vspace{-0.4cm}
		\caption{Our avw$\rbc$ protocol. Code is for each  $\node_i$.}
		\label{algo:rbc}
	\end{center}
	\vspace{-0.9cm}
\end{figure}

As shown in Figure \ref{algo:rbc}, we construct avw$\rbc$ by modifying the w$\rbc$ variant of Bracha's $\rbc$. External validity is trivial to realize by applying validation check after each node receives the  broadcasted value  as in many studies \cite{abraham2021reaching,duan2023practical}. However,    abandonability has to be carefully tailored, since we   need   preserve all other properties of   w$\rbc$ such as weak totality, which means if some   honest node already outputs, any honest shall still output even if it has invoked  $abandon()$. 
We identify a unique place in Bracha's $\rbc$ to take  $abandon()$ into effect with preserving totality, that is to stop sending echo messages after invoking $abandon()$  (lines 4-5 in Figure \ref{algo:rbc}).

\smallskip
\noindent
{\bf FIN-$\mvba$-Q protocol.} Given avw$\rbc$, we now are ready to construct FIN-$\mvba$-Q---an adaptively secure $\mvba$ protocol with 1/2 quality.
Its formal description is  in Figure \ref{fig:finmvba}. Slightly informally, it executes in next three phases:
\begin{enumerate}[label=(\arabic*)]
	\item \textit{Broadcast of input} (lines 1-6). When    receiving  the input $v_i$,  $\node_i$ activates avw$\rbc_i$ to broadcast $v_i$. 
	If   avw$\rbc_j$ $wr\text{-}deliver$ $h_j$,  $\node_i$ multicasts a message $\mathsf{FIN}(j,v_j)$ and records $h_j$ in   $H_i$ (lines 2-4). Once avw$\rbc_j$ $r\text{-}deliver$ $v_j$, $\node_i$ records $v_j$ in   $V_i$ (lines 5-6). 
	\item \textit{Finish all broadcasts with abandon} (lines 7-10). Then, $\node_i$ waits for   $n-f$ $\mathsf{FIN}(j,h)$ messages from distinct nodes  carrying the same $h$, $\node_i$ then records avw$\rbc_j$ as finished (i.e., set $F_{i}\big[j\big]$   true in line 8) and accepts $v_j$ as avw$\rbc_j$ output (also in line 8). Upon $n-f$ avw$\rbc$ instances are labeled as finished, $\node_i$ would invoke the $abandon()$ interface of all avw$\rbc$ instances (line 10).   
	\item \textit{Repeating vote until decide an output} (lines 11-27). Once $\node_i$ abandons in all broadcasts, it loops a voting phase  consisting of three steps: (i)  in line 12, invoke coin flipping to randomly $election$ a node $\node_k$; (ii) in line 13-18, invoke $\raba$ to vote whether decide $h_k$'s pre-image as candidate output or not; (iii) in line 19-28, if $\raba$ output 0, $\node_i$ will enter the next iteration, otherwise, $\node_i$ will wait for the corresponding   $h_k$ and $v_k$ (line 20 and 24)  to output and terminate.
\end{enumerate}

\begin{figure}
	\begin{tcolorbox}[left=0mm]
					\vspace{-0.1cm}
		\begin{center}
			{\bf Protocol of FIN-$\mvba$-Q}  (for each   $\node_i$)
		\end{center}
			\vspace{-0.4cm}
		\noindent\makebox[\linewidth]{\rule{\textwidth}{0.4pt}}
		
		\begin{footnotesize}

		\hspace{0.3cm}{\bf Input:} a value $v_i$ s.t. $Q(v_i)=1$
		
		\hspace{0.3cm}{\bf Initialization:} $r \leftarrow 0$, $H_i \leftarrow \left[\perp \right]^n$, $V_i \leftarrow \left[\perp \right]^n$, $F_i \leftarrow \left[\perp \right]^n$

		\begin{algorithmic}[1]\label{algo:MVBA}
			\STATE{start avw$\rbc_i$ with input $v_i$}
			\If{$h_j$ is $wr\text{-}delivered$ in avw$\rbc_j$}
			\STATE{$H_{i}\big[j\big]$ $\leftarrow$ $h_j$}
			\STATE{multicast $\mathsf{FIN}(j,h_j)$}
			\EndIf
			\If{$v_j$ is $r\text{-}delivered$ in avw$\rbc_j$}
			\STATE{$V_{i}\big[j\big]$ $\leftarrow$ $v_j$}
			\EndIf
			\Upon{receiving $n$-$f$ same $\mathsf{FIN}(j,h_j)$ from distinct nodes}
			\STATE{$F_{i}\big[j\big]$ $\leftarrow$ $\textbf{true}$; $H_{i}\big[j\big]$ $\leftarrow$ $h_j$}
			\EndUpon
			\Wait{ until $n-f$ values in $F_{i}$ are $\textbf{true}$}\label{line:enoughfin} 
			\STATE{avw$\rbc_j$.$abandon()$ for every $j \in [n]$}\label{line:abandon}
			\REPEAT {\textcolor{orange}{~~~// repeatedly attempt to output a finished  broadcast}} 
			\STATE{$k \leftarrow \mathsf{Election}(r)$ \textcolor{orange}{~~~// coin flipping to randomly select a node}}
			\If{$H_{i}\big[k\big]  \neq  \perp$}
			\STATE{$raba$-$propose$ 1 for $\raba_r$}
			\Else
			\STATE{$raba$-$propose$ 0 for $\raba_r$}
			\If{later $H_{i}\big[k\big]  \neq  \perp$}
			\STATE{$raba$-$repropose$ 1 for $\raba_r$}
			\EndIf
			\EndIf
			\If{$\raba_r$ output $1$}
			\Wait{ until $H_{i}\big[k\big] \neq  \perp$}
			\If{$V_{i}\big[k\big] \neq  \perp$}
			\STATE{multicast $\mathsf{Value}(V_{i}\big[k\big])$}
			\Else
			\Wait{ $\mathsf{Value}(v)$ such that $\hash(v)=H_{i}\big[k\big]$}
			\EndWait
			\EndIf
			\STATE{$mvba\text{-}decide(V_{i}\big[k\big]$) and $halt$}
			\EndWait
			\EndIf
			\STATE{$r \leftarrow r+1$}
			\UNTIL{$halt$}
			\EndWait
		\end{algorithmic}
	    \end{footnotesize}
		 \vspace{-0.15cm}
	\end{tcolorbox}
	
	\begin{center}
		\vspace{-0.4cm}
		\caption{Our FIN-$\mvba$-Q   protocol. Code is for each  $\node_i$.}	
		\label{fig:finmvba}
	\end{center}
	\vspace{-1cm}
			
\end{figure}

The above design ensures   all broadcasts are   ``abandoned'' while the adversary learns  coin flipping, thus preventing an adaptive rushing adversary to manipulate $\mvba$ output. More formally, the security of FIN-$\mvba$-Q can be summarized as:
 
\begin{theorem}
	FIN-$\mvba$-Q securely realizes   $\mvba$    with $1/2$ quality except with negligible probability.
\end{theorem}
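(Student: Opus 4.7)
The plan is to establish the four MVBA properties in turn—termination, agreement, external validity, and $1/2$-quality—leveraging the properties of the building blocks avw$\rbc$, $\raba$, and the common coin $\mathsf{Election}$. The first three properties follow the standard template for FIN-style MVBAs, and the whole point of the redesign (and hence the hard part) is $1/2$-quality, which I would argue via the new abandonability property.

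For termination, I would first show that the wait at line~\ref{line:enoughfin} is eventually satisfied: by totality and validity of avw$\rbc$, every honest sender's broadcast is eventually $wr$-delivered at every honest node, producing at least $n-f$ matching $\mathsf{FIN}(j,h_j)$ messages and thus setting at least $n-f$ entries of $F_i$ to \textbf{true}. Inside the repeat-loop, unanimous termination and biased termination/validity of $\raba_r$ guarantee each iteration halts, and an iteration in which the elected $k$ is an honest sender whose broadcast is already $wr$-delivered at every honest node forces $\raba_r$ to output $1$ (all honest nodes eventually have $H_i[k]\neq\perp$ and therefore propose or repropose $1$, triggering biased validity/termination); hence the expected number of iterations is $O(1)$. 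Agreement reduces to agreement of $\raba$ (all honest nodes that reach iteration $r$ learn the same bit) plus weak agreement of avw$\rbc_k$ and the $\mathsf{Value}$ relay at lines~23--28 (honest nodes possessing only the digest eventually receive a matching pre-image). External validity reduces to external validity of avw$\rbc$: every recorded $v_j$ in $V_i$ must have satisfied $Q$ before being $r$-delivered.

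For $1/2$-quality, the plan has four steps. (i) Observe that every honest node invokes $abandon()$ on every avw$\rbc_j$ at line~\ref{line:abandon} \emph{before} invoking $\mathsf{Election}(r)$ for any $r$. Therefore, by the moment the coin $k$ is revealed, at least $n-2f$ forever-honest nodes have already abandoned every instance. (ii) Apply abandonability of avw$\rbc_j$ for every sender $\node_j$ still honest at that moment: no honest node will ever $r$-deliver a value $v'$ different from $\node_j$'s original input, even under a strongly-adaptive rushing adversary. In particular, the only indices $k$ for which the adversary can still influence $V_i[k]$ are those whose senders are already corrupted when the coin is revealed, a set of size at most $f$. (iii) Because $\mathsf{Election}(r)$ is uniform over $[n]$ and unpredictable prior to this moment, in each loop iteration the probability $p_B$ that $k$ is a presently-Byzantine sender is at most $f/n$, while the probability $p_H$ that $k$ is an honest sender whose broadcast has already $wr$-delivered at every honest node is at least $(n-2f)/n \geq (f+1)/n$ (the $n-f$ entries of $F_i$ already set, minus at most $f$ Byzantine senders). (iv) When $k$ falls in the latter set, $\raba_r$ is forced to $1$ and the decision is an honest value; summing the geometric series over iterations,
\[
\Pr[\text{output is an adversarial input}] \;\le\; \frac{p_B}{p_B+p_H} \;\le\; \frac{f}{n-f} \;\le\; \frac{f}{2f+1} \;<\; \frac{1}{2},
\]
which is the desired bound.

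The main obstacle I anticipate is step~(ii): precisely coupling the protocol's local $abandon()$ call at line~\ref{line:abandon} with the ``$n-2f$ forever-honest abandoners before the sender is corrupted'' hypothesis required by avw$\rbc$'s abandonability, given that an adaptive adversary may corrupt some of the $n-f$ senders counted in $F_i$ after the coin is released. I would handle this by arguing that any such late corruption of $\node_s$ is useless: the $f+1$ $\mathsf{READY}$ votes needed to change avw$\rbc_s$'s decided digest must come from nodes that were still honest before abandon, but those nodes have already ceased sending $\mathsf{ECHO}$ messages (lines~4--5 of Figure~\ref{algo:rbc}) and will not help $\adv$ manufacture a new quorum. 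This is where the specific placement of the $ban$ check inside Bracha's broadcast becomes crucial, and where the proof has to be executed most carefully.
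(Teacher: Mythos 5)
Your proposal is correct and follows essentially the same route as the paper: termination/agreement/external validity are reduced to the properties of avw$\rbc$ and $\raba$ exactly as in Appendix~\ref{app:mvba-proof}, and the quality bound rests on the same key lemma (abandon before the coin is reconstructed blocks any fresh $n-f$ \textsc{Echo} quorum, which is precisely the paper's Lemma~\ref{lemma:quality}) followed by the same ratio-of-geometric-series argument. Your accounting with $p_B\le f/n$ and $p_H\ge (n-2f)/n$ giving $f/(n-f)$ is just a marginally tighter restatement of the paper's three-way $1/3$--$1/3$--$1/3$ case split, and the obstacle you flag in step~(ii) is resolved exactly the way the paper resolves it.
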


\noindent
{\bf Security analysis.} Security of FIN-$\mvba$-Q is also intuitive:
\begin{itemize}
	\item For agreement, this is   because avw$\rbc$, $\raba$ and leader election are all agreed. So any two nodes' output must be same (as it must be a value that is broadcasted by the same avw$\rbc$).
	\item For termination, the key lemma is: at the moment when the $(f+1)$-th honest node invokes the leader election (and also abandons all avw$\rbc$s), there must be at least $2f+1$ distinct nodes' avw$\rbc$s have delivered to at least $f+1$ honest nodes. Recalling that for these $2f+1$ avw$\rbc$s, their corresponding $\raba$s  must return 1 (if were executed), so for in each iteration, there is at least $2/3$ probability to draw an avw$\rbc$ whose corresponding $\raba$ return 1.
	Hence after $k$ iterations, the chance of not terminating becomes a negligible probability $(1/3)^k$, which also implies the expected constant round complexity.
	\item For quality, we can  make a corollary of termination proof: for each iteration, there is at least  $1/3$ probability to decide an honest avw$\rbc$ as output, at most $1/3$ probability to draw an malicious node's avw$\rbc$ as output, and at most $1/3$ probability to elect some unfinished avw$\rbc$ and thus enter the next iteration. So the lower bound of quality is $\sum_{r=0}(1/3)(1/3)^r=1/2$.
	\item For external validity, this is true because: (i) the external validity of avw$\rbc$ prevents the honest nodes from receiving invalid input (or digest of invalid input); (ii) if an avw$\rbc$ does not deliver output to any honest node, its corresponding $\raba$ wouldn't return 1 as no honest node   proposes 1, so it is never decided as output.
\end{itemize}

We defer the full proof to Appendix \ref{app:mvba-proof} in  Supplementary.

 \smallskip
 \noindent
 {\bf Complexity analysis}. FIN-$\mvba$-Q has expected constant round complexity and $\bigO(n^3)$ message complexity.
 Its communication cost is $\bigO(|m|n^2 + \lambda n^3)$ where $|m|$ represents $\mvba$'s input length and $\lambda$ represents the length of hash digest in avw$\rbc$.
Remark that  the communication cost of FIN-$\mvba$-Q is as   same as FIN-$\mvba$ \cite{duan2023practical}. Though such $\bigO(|m|n^2 + \lambda n^3)$ communication cost is best so far in the signature-free setting, it is more costly than   the
state-of-the-art signature-involved $\mvba$ protocol \cite{lu2020dumbo} that attains expected $\bigO(|m|n + \lambda n^2)$ communication cost.

	\vspace{-0.2cm}
\subsection{Whether is $\finng$   sufficiently   scalable or NOT?}

Careful readers might realize that 
the design of $\finng$ can be thought of as a signature-free version of $\dumbong$,
involving a particular trade-off to reduce  authenticator complexity  
at a price of other overheads like cubic message complexity.
Observing that, a natural question might raise: 
is such stringent signature-free approach
the best practice in real-world global deployment setting  with affordable bandwidth (like 100 Mbps or less) and several hundreds of nodes?

To better understand this natural question, we test two state-of-the-art $\mvba$ protocols---
FIN-$\mvba$ (signature-free but cubic messages) and  $\dumbo$-$\mvba$  \cite{lu2020dumbo} (involving signatures but only quadratic messages) to estimate the tendency of their latency degradation in bandwidth-limited settings with $n$=196 nodes.
For fair comparison, we take a vector of $n$ integers as input to FIN-$\mvba$,
and use $n$ QCs (which is much larger) as input to   $\dumbo$-$\mvba$.
As Figure \ref{fig:mvba-latency} illustrates, our  experimental comparison between FIN-$\mvba$ and  $\dumbo$-$\mvba$ reveals: (i) FIN-$\mvba$ and $\dumbo$-$\mvba$ have comparable latency only if each node has more than 100 Mbps bandwidth; (ii) when available bandwidth is less than 50 Mbps, FIN-$\mvba$'s latency would be dramatically increased.

In short, this evaluation   affirms our conjecture that the stringent signature-free approach of $\finng$ 
might be not the best practice in  the  deployment environment with restricted bandwidth, motivating us further explore even more scalable asynchronous BFT protocols in the signature-involved setting. 


\begin{figure}[H] 
	\vspace{-0.2cm}
	\begin{center}
		\includegraphics[width=7.7cm]{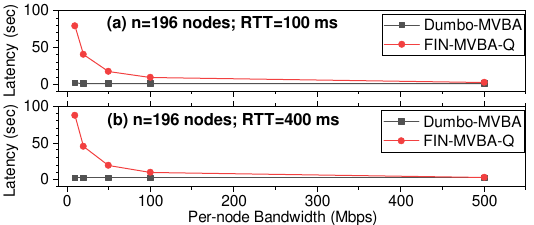}\\
		\vspace{-0.25cm}
		\caption{Latency of signature-free $\mvba$ (i.e. FIN-$\mvba$-Q) and signature-involved $\mvba$ (i.e. $\dumbo$-$\mvba$) for $n=196$ nodes.}	
		\label{fig:mvba-latency}
	\end{center}
	\vspace{-0.3cm}
\end{figure}

\section{Optimize   Quorum Certificate   to Best Practice}
\label{sec:qc}

As our initial attempt $\finng$   might not scale very well in the important setting of restricted bandwidth,
we continue our concentration on how to scale up   asynchronous BFT consensus using   signatures.
To begin with, this section focuses on  optimizing   
 QC  implementations, as its prevalent usage has led up to a primary scalability bottleneck   in existing designs. 

\smallskip
\noindent
{\bf Batch verification with   blocklist: the right way of implementing QCs}.
We take       consistent broadcast (CBC)   \cite{reiter1994secure} as the simplistic context to estimate the CPU cost of different QC implementations from different signatures.

As Figure \ref{fig:cbc-example} (a) plots, CBC has a designated sender that waits for $n-f$ valid signatures from distinct nodes, such that it can aggregate/concatenate these $n-f$ valid signatures to form a QC and spread it out. 
We denote this approach ``individual verification'' since the aggregation occurs after validating all $n-f$ individual signatures. 
Figure  \ref{fig:cbc-example} (b) presents a folklore alternative \cite{opticoin,optiagg} that directly aggregates (unverified) signatures and then performs batch verification.
Intuitively, the ``batch verification'' approach  is   promising to take the best advantage of aggregate signature, because it avoids verifying   individual signatures in normal case. 

We then conduct a comprehensive benchmarking of QCs built from ``individual verification'' and ``batch verification'', with different types of signatures (including ECDSA, Schnorr signature, BLS signature, and BLS threshold signature).
Note that ECDSA signature cannot be non-interactively aggregated, and   QC from it  simply concatenates $n-f$ valid signatures. 
Schnorr signature can be non-interactively half-aggregated \cite{chen2022half,chalkias2021non}, so QC from it has a smaller size that is about half of $n-f$ concatenated signatures, and more importantly, its batch verification can be faster than trivially verifying $n-f$ concatenated Schnorr signatures.\footnote{Similar to Schnorr signature, EdDSA can   be half-aggregated and might have very tiny  speed-up due to  Edward curve. However, their   performances are essentially comparable. Thus we didn't do   redundant tests of EdDSA.} BLS signature can be fully aggregated. 
In order to push efficiency, we use a straight way of computing BLS multi-signature by taking summation of all received signatures (under the standard KOSK model),
so   aggregation   is much faster than the threshold version of BLS, as the latter one's aggregation needs costly interpolation in the exponent \cite{tomescu2020towards}. 
Our evaluations  consider  similar security levels for different signatures. ECDSA and Schnorr are implemented over secp256k1 curve with about 128-bit security \cite{bitcoin-core}. For BLS signature, we work on a notable and widely-adopted pairing-friendly elliptic curve BLS12-381 \cite{barreto2003constructing,bls}, which    was designed for 128-bit security and now is still considered with  117-120 bit security despite   recent cryptanalysis \cite{kim2016extended}.

\begin{figure}[H] 
	\vspace{-0.2cm}
	\begin{flushleft}
		\includegraphics[width=9.2cm]{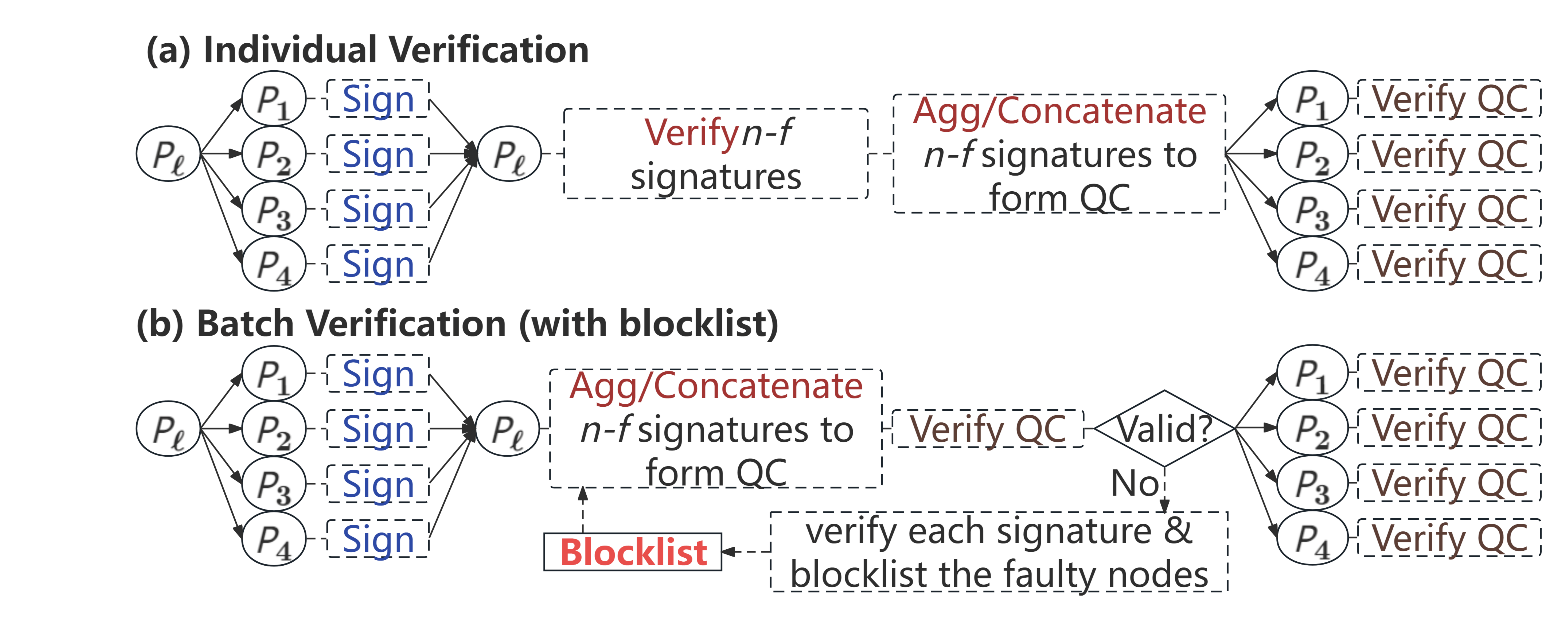}\\
		\vspace{-0.15cm}
		\caption{QC  generation in CBC: (a)   verify individual signatures  then aggregate; (b) aggregate then verify a batch.}	
		\label{fig:cbc-example}
	\end{flushleft}
	\vspace{-0.3cm}
\end{figure}

\noindent
\underline{\smash{\emph{BLS multi-signature with  batch verification  would outperform}}}. 
The detailed evaluation    of CPU latency in a CBC protocol is plotted in Figure \ref{fig:signature}.
Evidently,  QC implementation from BLS multi-signature with using batch verification ({\color{violet} the violet line} and {\color{gray} the gray line})   vastly outperforms all other instantiations in terms of computation cost. Note   the minor difference between the violet and gray lines reflects the choice of implementing signatures in $\mathbb{G}_1$   or    $\mathbb{G}_2$ groups   over BLS12-381 curve.

\noindent
\underline{\smash{\emph{Maintain  blocklist to prevent performance degradation attack}}}. 
However, the approach of batch verification  might face serous performance downgrade under attacks, particularly     when malicious nodes   send false signatures. This issue can be   addressed by maintaining a blocklist containing  the identities of those malicious nodes. 
As Figure \ref{fig:cbc-example} (b) shows, when the aggregated multi-signature fails in batch verification, one  verifies all individual signatures and add the malicious nodes   sending fake signatures to a blocklist. Later, the blocklist can be used to exclude the  faulty nodes from   aggregation. 

\begin{figure}[H] 
	\vspace{-0.2cm}
	\begin{center}
		\includegraphics[width=7.5cm]{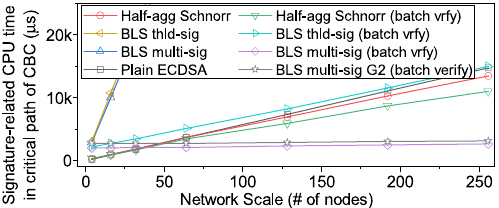}\\
		\vspace{-0.2cm}
		\caption{CPU latency (related to digital signatures) in the critical path of one-shot CBC on varying system scales.}	
		\label{fig:signature}
	\end{center}
	\vspace{-0.3cm}
\end{figure}



\noindent
{\bf Further aggregation of multiple QCs:  efficiently transfer a vector of QCs on different messages}.
While implementing QC from BLS multi-signature, 
one more benefit largely overlooked by the distributed computing community is that many such QCs on different messages can be further aggregated.
This bonus property becomes particularly intriguing in asynchronous BFT protocols like $\tusk$ and $\dumbong$ because their scalability bottleneck   stems from   multicasting $\bigO(n)$ QCs.
As illustrated in Fig  \ref{fig:agg-qc}, 
while multicasting $n$ QCs made from BLS multi-signature, 
we can leverage the aggregation property to combine all signature parts in these QCs and reduce their size to less than half of the original (cf. Fig \ref{fig:agg-qc}).

\begin{figure}[H] 
	\vspace{-0.2cm}
	\begin{flushleft}
		\includegraphics[width=8.5cm]{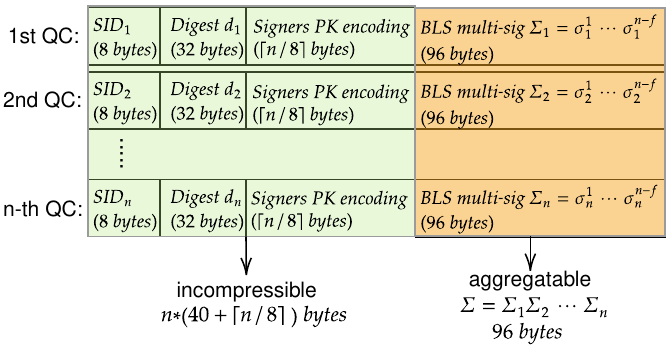}\\
		\vspace{-0.3cm}
		\caption{Aggregating many QCs made from BLS multi-sig.}	
		\label{fig:agg-qc}
	\end{flushleft}
	\vspace{-0.3cm}
\end{figure}

\smallskip
\noindent
{{\bf{More tips on efficient QC implementation}}}. 
Besides     various aggregation   techniques, we remark a few other tips of implementing QCs from BLS signatures in real-world BFT systems:

\begin{itemize}
	\item  \underline{\smash{\emph{BLS signature  in $\mathbb{G}_1$ group  is more favorable than  $\mathbb{G}_2$}}}. BLS12-381 curve is equipped with asymmetric paring $e: \mathbb{G}_1 \times \mathbb{G}_2 \rightarrow \mathbb{G}_T$.
	When implementing BLS signature over the curve, one can let signature in $\mathbb{G}_1$ and public key  in $\mathbb{G}_2$, or vice versa.
	We recommend     signature   in $\mathbb{G}_1$, because (i) $\mathbb{G}_1$ group element has shorter size, and (ii) there is   saving in CPU  time for $n\le 256$ (Figure \ref{fig:signature}).
	\item \underline{\smash{\emph{Encode public keys relating a   multi-signature  by $n$ bits}}}. Moreover, in most realistic proof-of-stake and consortium blockchains, participating nodes' public keys can be mutually known to each other.  So the identities of  signing nodes in a multi-signature can   be encoded by   $n$ bits instead of transferring their public keys. Here each bit represents whether a   node has signed (1) or not (0). 
\end{itemize}

\ignore{
\begin{figure}[H] 
	\vspace{-0.2cm}
	\begin{center}
		\includegraphics[width=7.5cm]{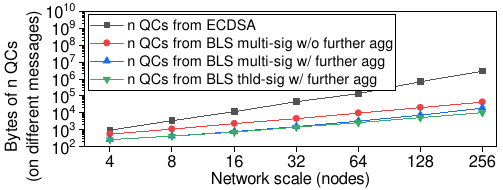}\\
		\vspace{-0.25cm}
		\caption{Size   of $n$ QCs across $n$ different messages.}	
		\label{fig:qc-size}
	\end{center}
	\vspace{-0.3cm}
\end{figure}
}

\section{$\jumbo$: Making signature-involved asynchronous BFT Truely Scalable}
\label{sec:jumbo-impl}

\ignore{
Here we take a brief tour to the enticing $\dumbong$ protocol, then give a more efficient instantiation $\plusplus$ with less asymptotic overhead associated with computing and communicating QCs to alleviate the scalability bottleneck.
At a high level, $\dumbong$ has a couple of concurrently running processes---transaction dissemination and block agreement, which progress as follows.
\begin{itemize}
	\item {\bf Transaction dissemination}. Each node $\node_i$ starts $n$ broadcast threads. $\node_i$ is designated as sender only in the $i$-th broadcast. Each broadcast  proceeds by an increasing-only slot number $s$. Once the sender $\node_i$ enters a slot $s$, it selects a batch of $|B|$ transactions (denoted $tx_s$) from its input buffer,  multicasts $tx_s$  with the current slot index $s$ and  a quorum certificate $QC_{s-1}$  attesting that a transaction batch $tx_{s-1}$ was broadcasted in the preceding slot $s-1$,  then it waits for that $n-f$ distinct nodes return valid signatures on $i||s||\hash(tx_s)$ (where $\hash$ is a cryptographic hash function), and assembles these signatures to form a   quorum certificate $QC_{s}$ and then steps into the next slot $s+1$. On the other side, when any node $\node_j$ (probably non-leader) enters a slot $s$, it waits for receiving the message carrying $tx_s$ and $QC_{s-1}$ from the sender node $\node_i$, checks $QC_{s-1}$  a valid QC and consistent to $tx_{s-1}$ that it received,   returns  a  signature on $i||s||\hash(tx_s)$ to the sender, and proceeds to the next slot $s+1$.
	
	\item {\bf Block agreement}. Concurrent to the process of transaction dissemination, every node starts an agreement process that  executes a sequence of $\mvba$ protocols.
	To prepare the $\mvba$ input, a node $\node_i$ locally records and maintains $n$ QCs, each of which is the latest QC of a participating node's broadcast that $\node_i$ indeed receives.
	Meanwhile,	$\node_i$ waits for that at least $n-f$ QCs have been updated to contain higher slot number, which means their corresponding broadcasts have indeed progressed. 
	Then, $\node_i$ can send these $n$ QCs to $\mvba$ as input.
	Note that the $\mvba$'s external predicate is set to check: (i) each node indeed inputs $n$ valid QCs associated with $n$ different broadcasts; (ii) at least $n-f$ QCs have a slot number that indeed increases; (iii) else QCs shall not have decreased  slot numbers. 
	As such, $\node_i$ waits for that $\mvba$ outputs $n$ valid QCs (that could be proposed by some other node  instead of $\node_i$).
	Since the $\mvba$ output  are same  to all honest nodes and contains the likely latest QCs of all broadcasts,
	the honest nodes thus can compare the current $\mvba$ output and the previous $\mvba$ output to know which transactions are newly broadcasted,	
	thus making a unitary decision on soliciting these transactions into a new block.	
\end{itemize}

Although $\dumbong$ invents a promising framework with concurrent transaction dissemination and block agreement and presents attractive performance at small scales such as $n\le 64$,
its authors did not instantiate it in a scalable way, leaving tremendous computation  and communication overheads. Worse still, their design did not consider the critical fairness guarantee to prevent the adversary from controlling the most consensus result, thus facing a critical degradation attack that the adversary might insert a huge number of bogus  transactions to waste the consensus performance. To solve these remaining issues, we proceed as follows towards $\plusplus$---a much more scalable $\dumbong$ instantiation  with enhanced fairness assurance:
}

Given    concrete optimizations of QCs,
it is  worth noting that $\tusk$ and $\dumbong$ still suffer from cubic authenticator overhead per decision,
because they  still let each node to multicast a vector of $\bigO(n)$ QCs. Even if using our technique in Section \ref{sec:qc}, this cost is still a cubic in total, becuase the aggregate $\bigO(n)$ BLS-based QCs  carry $\lambda n + n^2$ bits and need $\bigO(n)$ cryptographic pairings to verify.
Facing   still   asymptotically large authenticator overhead, 
this section   distills a scalable instantiation of $\dumbong$---$\jumbo$, 
to  asymptotically reduce  the   authenticator complexity by introducing  information dispersal.
$\jumbo$ can reduce the number of multicasting $n$ QCs from $\bigO(n)$ to $\bigO(1)$, and
in combination with our optimized QC implementation,  it can seamlessly scale up with   hundreds of nodes in   bandwidth-starved   settings.

\ignore{
	\vspace{-0.2cm}
\begin{table}[H]
	\caption{Revisiting the instantiation challenges of $\tusk$ \cite{tusk}  and $\dumbong$ \cite{gao2022dumbo}.}\label{tab:compare-2}
	\vspace{-0.35cm}
	\centering
	\resizebox{1\columnwidth}{!}{	
	\begin{tabular}{|l|l|l|l|}
		\hline
		& \textbf{\begin{tabular}[c]{@{}l@{}}Where to multicast \\    $O(n)$ QCs\end{tabular}} & \textbf{\begin{tabular}[c]{@{}l@{}} Friendly to reduce \\   such multicasts?\end{tabular}}                                                               & \textbf{Fairness?}                                                                                                             \\ \hline
		$\tusk$     & \begin{tabular}[c]{@{}l@{}}Each node's input  \\ to a CBC protocol \end{tabular}         & \begin{tabular}[c]{@{}l@{}}Unclear as all CBCs are\\  likely needed in DAG.\end{tabular}                                                                      & \begin{tabular}[c]{@{}l@{}}Yes!  \\   \\  \end{tabular}                                            \\ \hline
		$\dumbong$ & \begin{tabular}[c]{@{}l@{}}Each node's input  \\to an MVBA protocol\end{tabular}       & \begin{tabular}[c]{@{}l@{}}Yes! MVBA   picks one \\ node's input as output,    \\ other   multicasting of $n$ \\  QCs could be unneeded.\end{tabular} & \begin{tabular}[c]{@{}l@{}}No! But can be\\    patched by limit \\  the difference of \\ distinct nodes' \\  tx diffuse rates. \end{tabular} \\ \hline
	\end{tabular}
    }
\end{table}
}

\ignore{

\smallskip\noindent
{\bf I: Adopting the optimal QC approach}. 
In $\dumbong$, the authors adopt the approach of concatenating ECDSA signatures to implement QCs, which likely is because they do not realize the  right way of applying BLS multi-signature.
So given our comprehensive benchmarking and  optimization of QC implementation in Section \ref{sec:qc}, 
one immediate performance improvement to $\dumbong$ is to adopt the optimal QC approach from BLS multi-signature. This immediately brings us two performance improvements: (i) significantly compact QCs that are smaller by an $\bigO(n)$ order, and (ii) dramatically less   verifications  of individual signatures that are also reduced by an $\bigO(n)$ order.

}

\smallskip
\noindent
{\bf Revisiting the core technique of  provable dispersal}. 
Before diving into  the  details of our scalable construction, we first  briefly recall   our key technical component---{asynchronous provable dispersal broadcast ($\mathsf{APDB}$) \cite{lu2020dumbo,gao2022dumbo}}.
Informally,  $\mathsf{APDB}$   allows a designed sender to disperse a large input like $n$ QCs to the whole network at an overall communication cost of merely $O(1)$ QCs.
Namely, the communication cost of dispersing $n$ QCs can be   $O(1/n)$ of multicasting them.
Moreover, the sender can generate a proof at the end of dispersal, 
 such that a valid proof attests that the dispersed $n$ QCs can later  be consistently recovered by the whole network.


More precisely,   $\mathsf{APDB}$   can be formally defined as:

\begin{definition}
	Syntactically, an
	$\mathsf{APDB}$ consists of    two   subprotocols ($\PD$, $\RC$)  with a   validating function $\ValidateLock$: 
	\begin{itemize}
		\item {\em $\PD$ subprotocol}. 
		In the $\PD$ subprotocol among $n$ parties,  a designated sender $\node_{s}$ inputs a value $v\in\{0,1\}^\ell$, 
		and aims to split $v$ into $n$ code fragments and disperses each $j$-th fragment to the corresponding node $\node_j$. 
		If a party terminates in $\PD$, it shall output a string $\store$, and if  the node is sender, it shall additionally output a  $\lock$ string. 
		
		\item {\em $\RC$ subprotocol}.  
		In the $\RC$ subprotocol among $n$ parties, all honest nodes take   $\store$ and $\lock$ strings outputted by  $\PD$ subprotocol as   input, and aim to output a value $v' \in \{0,1\}^\ell \cup \bot$.
		
		\item {\em $\ValidateLock$ function}. It takes a $\lock$ string as input and returns 0 (reject) or 1 (accept).
	\end{itemize}
	
	An $\mathsf{APDB}$ protocol ($\PD$, $\RC$)  shall satisfy the following properties except with negligible probability:
	
	\begin{itemize}	
		\item {\em Termination}. 
		If the sender $\node_s$ is honest and all honest nodes activate the $\PD$ protocol, 
		then each honest node would output $\store$ in $\PD$;
		additionally,  $\node_s$ also outputs valid $\lock$ in $\PD$ s.t. $\ValidateLock(lock)$=$1$. 
		
		\item {\em Recast-ability}. 
		If all honest parties invoke $\RC$ with inputting the output of $\PD$ 
		and at least one honest party inputs a valid $\lock$  s.t. $\ValidateLock(lock)$=$1$, then:
		(i) all honest parties can eventually recover a common value $v' \in \{0,1\}^\ell \cup \bot$; 
		(ii) if the sender is  honest and disperses  $v$ in $\PD$, then all honest parties can  recover $v$ in $\RC$.
		
	\end{itemize}
	
\end{definition}

An $\mathsf{APDB}$ protocol   satisfying the above definition can be efficiently  implemented as in  Figures \ref{algo:dispersal} and \ref{algo:recast}, through  simplifying the design from \cite{lu2020dumbo}. 
  $\PD$ subprotocol   has a very simple two-round structure: sender  $\xrightarrow[]{\scriptscriptstyle \PreDisp}$ parties  $\xrightarrow[]{\scriptscriptstyle \Store}$ sender.
In this way, the    communication cost of using $\PD$ to disperse $n$ broadcast QCs can be brought down to minimum, which asymptotically is  only  $O(1/n)$ of multicasting $n$ QCs.

\begin{figure}[H]
	\vspace{-0.3cm}
	\begin{center}
		\includegraphics[width=7.8cm]{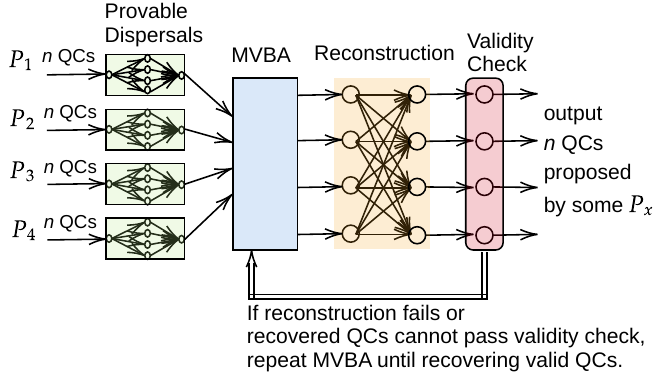}\\
		\vspace{-0.25cm}
		\caption{Dispersal-then-recast paradigm of  $\dumbo$-$\mvba$     \cite{lu2020dumbo}.}	
		\label{fig:dumbo-mvba}
	\end{center}
	\vspace{-0.4cm}
\end{figure}

\begin{figure}
	\begin{tcolorbox}[left=0mm]
		\vspace{-0.1cm}
		\begin{center}
			{\bf Protocol of Provable Dispersal ($\PD$)   with   sender $\node_{s}$}
		\end{center}
		\vspace{-0.4cm}
		\noindent\makebox[\linewidth]{\rule{\textwidth}{0.4pt}}
		
		\begin{footnotesize}

			
			\hspace{0.3cm}{\bf Initialization:} $S \leftarrow \{\ \}$

			\begin{algorithmic}[1]\label{algo:MVBA}

				\vspace*{0.1 cm}
				\Statex  /* Protocol for {\color{black} the sender $\node_s$} */
				\vspace*{0.1 cm}
				
				\Upon {receiving an input value $v$}  
				
				\State encode $v$ by erasure code to get the  code word $\bf{m}$ of $n$    fragments
				\State use   $n$    fragments of $\bf{m}$ as leaves to compute   Merkle tree root $mt$ 
				\FOR {each $j \in [n]$}               
				\State \textbf{let} $\store := \langle mr, m_j,\pi_{j} \rangle$ where $\pi_{j}$ is the Merkle  tree path proving that $m_j$ is the $j$-th leaf committed to $mr$
				\State \textbf{send} $(\PreDisp, \DID, \store)$ to $\node_{j}$    
				\ENDFOR
				
				\Wait { until {$|S|=2f+1$}}    {\textcolor{orange}{~~// Aggregate threshold signature}}      
				\State $\Sigma \leftarrow$ $\Agg(S)$ and \textbf{let} $\lock := \langle mr, \Sigma \rangle$
				\State    \textbf{output} $\lock$ and terminate
				\EndWait

				\EndUpon		
				
				\vspace*{0.1 cm}
				\Upon {receiving valid $(\Store, \DID, \sigma_{j})$ from $\node_j$}           
				\State $S\leftarrow S\cup (j, \sigma_{j})$ {\textcolor{orange}{~~// For a neat  description, here is not batch verification, but we can implement batch verification in practice.}}
				\EndUpon

				\vspace*{0.2 cm}
				\Statex  /* Protocol for {\color{black} each party $\node_i$} */
				\vspace*{0.1 cm}
				
				\Upon {receiving   $(\PreDisp, \DID, \store)$ from sender $\node_s$}
				\If {$\store=\langle mr,   m_i, \pi_{i}\rangle$ carries a valid $\pi_{i}$ proving that $m_i$ is the $i$-th leaf committed to Merkle tree root $mr$}    
				\State $\sigma_{i} \leftarrow \Sign(sk_i, \langle  \DID, mr \rangle)$; \textbf{send} $(\Store, \DID, \sigma_{i})$ to $\node_s$
				\State \textbf{output} $\store$  and terminate if $\node_i\ne \node_s$
				\EndIf
				\EndUpon

				

			\end{algorithmic}
		\end{footnotesize}
		 \vspace{-0.15cm}
	\end{tcolorbox}
	
	\begin{center}
		\vspace{-0.4cm}
		\caption{Provable dispersal   protocol adapted from \cite{lu2020dumbo}. Code is for each  $\node_i$.}	
		\label{algo:dispersal}
	\end{center}
	\vspace{-0.8cm}
	
\end{figure}

\begin{figure}
	\begin{tcolorbox}[left=0mm]
		\vspace{-0.1cm}
		\begin{center}
			{\bf Protocol of Reconstruction ($\RC$)} (for each $\node_{i}$)
		\end{center}
		\vspace{-0.4cm}
		\noindent\makebox[\linewidth]{\rule{\textwidth}{0.4pt}}
		
		\begin{footnotesize}

			
			\hspace{0.3cm}{\bf Initialization:}   $C \leftarrow [\ ]$ (i.e, a dictionary keyed by Merkle tree root)

			\begin{algorithmic}[1]

				\vspace*{0.15 cm}
				\Upon {receiving input $(\store, \lock)$}  
				\If{$\lock \ne \emptyset$ }	   
				\State \textbf{multicast} $(\RecastCheck, \DID, \lock)$ to all
				\EndIf
				\If{$\store \ne \emptyset$}  
				\State \textbf{multicast} $(\RecastVal, \DID, \store)$ to all
				\EndIf
				\EndUpon
				\Upon { receiving $(\RecastCheck, \DID, \lock)$}
				\If{ $\ValidateLock(\DID, \lock) = 1$}       
				\State\textbf{multicast} $(\RecastCheck, \DID, \lock)$ to all, if  was not sent before
				\State parse $\lock$ as   $\langle mr, \Sigma \rangle$
				\Wait { until $|C[mr]| = f+1$}
				\State decode $C[mr]$ by erasure code to obtain $v$      
				\State encode $v$ by erasure code to get code word $\bf{m}$ 
				\State use $\bf{m}$ as leaves to compute Merkle tree root $mr'$        
				\If{$mr=mr'$}					 
				\textbf{return} $v$           
				\Else  
				{\ }\textbf{return} $\bot$
				\EndIf
				\EndWait	
				\EndIf
				\EndUpon
				\Upon {receiving $(\RecastVal, \DID, \store)$ from $\node_{j}$}
				\If {$\store=\langle mt,   m_j, \pi_{j}\rangle$ carries a valid $\pi_{j}$ proving that $m_j$ is the $j$-th leaf committed to Merkle tree root $mr$}    
				\State $C[mr] \leftarrow C[mr] \cup (j, m_j)$ 
				\EndIf
				\EndUpon	
				
				\noindent\makebox[\linewidth]{\rule{\textwidth}{0.4pt}}
				
				\vspace*{0.1 cm}
				\Statex  /* Public $\ValidateLock$ function */
				\vspace*{0.2 cm}
				
				\Function{\textbf{function} $\ValidateLock$}{$(\lock)$:}
				\State parse $\lock$ as $\langle mr, \Sigma \rangle$
				\State verify   $\Sigma$ is a valid $(2f+1)$ threshold signature on   $\langle \DID, mr \rangle$
				\State return 1 (accept) if     verification passes, otherwise return 0 (reject)
				\EndFunction
				
			\end{algorithmic}
		\end{footnotesize}
		\vspace{-0.15cm}
	\end{tcolorbox}
	
	\begin{center}
		\vspace{-0.4cm}
		\caption{Provable dispersal   protocol adapted from \cite{lu2020dumbo}. Code is for each  $\node_i$.}	
		\label{algo:recast}
	\end{center}
	\vspace{-1cm}
	
\end{figure}

\smallskip
\noindent
{\bf Inspiration of $\jumbo$: $\dumbong$ is    friendly  for   reducing the  authenticator overhead  by using $\mathsf{APDB}$!} 
It comes to our  attention that $\dumbong$ has a very unique place leading up to its cubic authenticator complexity.
That is every node using a vector of $\bigO(n)$ QCs as $\mvba$ input.
The neat structure of $\dumbong$   hints at introducing $\mathsf{APDB}$   to extend its $\mvba$ component,  thus asymptotically reducing overall communication cost for large inputs like $n$ QCs.

  The communication-efficient $\mvba$   protocol  extended by  $\mathsf{APDB}$ also has a nickname $\dumbo$-$\mvba$ \cite{lu2020dumbo}. As shown in Figure \ref{fig:dumbo-mvba}, it proceeds as:  every node avoids   directly  multicasting its large input, but employs the more efficient provable dispersal  protocol  to spread much shorter encode  fragments instead; then, if some node finishes its dispersal, it can generate a dispersal QC attesting  its successful  spread of  $n$ broadcast QCs; every honest node can finish its own dispersal and thus uses its dispersal QC (which is a single QC instead of $n$ QCs) to invoke the original $\mvba$ protocol;  finally, all nodes repeat     $\mvba$ protocols, until some $\mvba$ picks a dispersal QC that allows them collectively recover  a vector of $n$ valid broadcast QCs as their      $\mvba$  result.


Clearly, after applying  the dispersal technique as above,  we can reduce the   cubic authenticator bottleneck in $\dumbong$   to only quadratic.  This is because: (i) $n$ dispersals of $\bigO(n)$ QCs have a communication cost similar to a single multicast of $\bigO(n)$ QCs; (ii) we only have to reconstruct a small constant number of dispersals, which is also asymptotically same to the communication cost of a single multicast of $\bigO(n)$ QCs.

\subsection{$\jumbo$ Protocol}


As depicted by Figure \ref{fig:dumbo-ng},
$\jumbo$ inherits $\dumbong$  to have a couple of concurrent  processes---transaction broadcast and block agreement---with  using provable dispersal to resolve the authenticator bottleneck.
For completeness of presentation, we hereunder briefly  describe the two processes of $\jumbo$, and then introduce its easy-to-implement fairness patch.

  \smallskip
 \noindent
	  {\bf Transaction broadcast  process}. Each node $\node_i$ starts $n$ broadcast threads. Each broadcast   proceeds in slot $s\in\{1,2,3\cdots\}$. $\node_i$ is designated as sender only in the $i$-th broadcast. Each broadcast  proceeds by an increasing-only slot number $s$. Once the sender $\node_i$ enters a slot $s$, it selects a batch of $|B|$ transactions (denoted $tx_s$) from its input buffer,  multicasts $tx_s$  with the current slot index $s$ and  a quorum certificate $QC_{s-1}$  attesting that a transaction batch $tx_{s-1}$ was broadcasted in the preceding slot $s-1$,  then it waits for that $n-f$ distinct nodes return valid signatures on $i||s||\hash(tx_s)$ (where $\hash$ is a cryptographic hash function), and assembles these signatures to form a   quorum certificate $QC_{s}$ and then steps into the next slot $s+1$. On the other side, when any node $\node_j$ (probably non-leader) enters a slot $s$, it waits for receiving the message carrying $tx_s$ and $QC_{s-1}$ from the sender node $\node_i$, checks $QC_{s-1}$  a valid QC and consistent to $tx_{s-1}$ that it received,   returns  a  signature on $i||s||\hash(tx_s)$ to the sender, and proceeds to the next slot $s+1$.
	
	Note that we refrain from reintroducing the pull mechanism  that executes  in companion with broadcast process to help the fetch of missing transactions, because we have described two instantiations of it in Section \ref{sec:jumbo} for $\finng$. 

 \smallskip
\noindent
	  {\bf Block agreement process}. Concurrent to   transaction broadcast, every node starts an agreement process that  executes a sequence of $\mvba$ protocols in consecutive epochs $e\in\{1,2,3,\cdots\}$. Note that  all $\mvba$s are $\dumbo$-$\mvba$ protocols that are extended by using $\mathsf{APDB}$.
	To prepare the input of the current epoch's $\dumbo$-$\mvba$ protocol, each   $\node_i$ locally   maintains a vector of $[\height_1,\cdots,\height_n]$, where each $\height_j$ carries
	  sender  $\node_j$'s latest  broadcast  QC that is  indeed received by $\node_i$.
	Meanwhile,	$\node_i$   maintains another vector of $[\ordered_1,\cdots,\ordered_n]$ to track the $\dumbo$-$\mvba$ output of last epoch, where each $\ordered_j$ represents sender $\node_j$'s highest broadcast QC that was already solicited into consensus output.
	In each epoch, every $\node_i$ waits for that at  least $n-f$ QCs in  $[\height_1,\cdots,\height_n]$ contain slot numbers higher than
	their corresponding items in $[\ordered_1,\cdots,\ordered_n]$,  i.e., $n-f$ broadcasts have indeed progressed since the decision of last block. 
	Then, $\node_i$ takes a snapshot of $[\height_1,\cdots,\height_n]$   as input to  the current $\dumbo$-$\mvba$ protocol.
	To enforce the input validity, $\dumbo$-$\mvba$'s  validity  predicate    is set as: 
	\begin{enumerate}
		\item the input proposal indeed carries $n$ valid QCs associated with $n$ different broadcasts, if passing the check, go to next clause, otherwise, return false (0);
		\item at least $n-f$ QCs have  slot numbers that indeed increase in relative to the last $\dumbo$-$\mvba$  output, if passing the check, go to next clause, otherwise, return false (0);
		\item other non-increasing QCs  shall not carry      slot numbers smaller than the last $\dumbo$-$\mvba$ output,   if   the check passes, return true (1), otherwise, return false (0);
	\end{enumerate}
	
	After invoking the current epoch's $\dumbo$-$\mvba$, $\node_i$ waits for its output, which   are $n$ valid broadcast QCs.
	According to the $\dumbo$-$\mvba$ result, all honest nodes can decide a common new block that solicits the     transactions 
	due to the difference between the current $\dumbo$-$\mvba$ output and the  previous $\dumbo$-$\mvba$ output.	
	Finally, every node's agreement process updates   $[\ordered_1,\cdots,\ordered_n]$ by the newest $\dumbo$-$\mvba$ output and goes to the next epoch.

\smallskip
\noindent
{\bf Enhance   fairness    by an easy-to-implement patch}. 
Although $\jumbo$ could  scale up after applying $\dumbo$-$\mvba$ extension protocol to asymptotically reduce  authenticator overhead,
it   lacks fairness in some extremely adversarial cases, because   its transaction dissemination process allows 
each node $\node_i$ to continually participate in all broadcast threads, thus leaving the adversary a chance to diffuse tremendous transactions ridiculously fast through the malicious nodes' broadcast instances.
Nevertheless,  the desired fairness    could be easily  patched by applying ``speed limit'' in broadcasts.
The high-level idea   is very intuitive: let the honest nodes temporarily halt voting in these   broadcasts that are running too fast in relative to    other slower broadcasts, until the slower broadcast threads catch up. 

To implement the ``speed limit''  idea, a node $\node_i$ maintains a variable $\delta_j$ for each broadcast thread.
The variable $\delta_j$ tracks that the $j$-th broadcast thread has disseminated how many transactions since the height included by last block (from the local view of node $\node_i$), i.e., $\delta_j = \height_j.slot - \ordered_j.slot$, where $\height_j$ tracks the slot number of $\node_j$'s latest broadcast received by $\node_i$ with a valid QC, and $\ordered_j$ tracks last broadcast slot of $\ordered_j$ that has already been decided as output  due to last $\mvba$ result.
Let $\delta$    to be the $(f+1)$-th smallest one of all $\{\delta_j\}_{j\in [n]}$; if there is any $\delta_j$ such that $ \beta \cdot \delta_j \ge  \delta$ and $\delta_j > 0$, $\node_i$ halts   voting in the $j$-th broadcast instance until $ \beta \cdot \delta_j <  \delta$, where $0<\beta<1$ is a fairness parameter (more ``fair'' if closer to 1, because of enforcing the fastest $2f+1$ broadcasts to progress at more similar speeds).

To enforce the above ``speed limit'',  we also need to add a corresponding validity rule to the $\mvba$'s predicate function: 
per each node $\node_i$'s proposal, the most progressed broadcast since last block and the $(f+1)$-th least progressed broadcast since last block  shall have a progress ratio bounded by $1/\beta$.

 \begin{figure}[H]
 	\vspace{-0.2cm}
 	\begin{center}
 		\includegraphics[width=9.1cm]{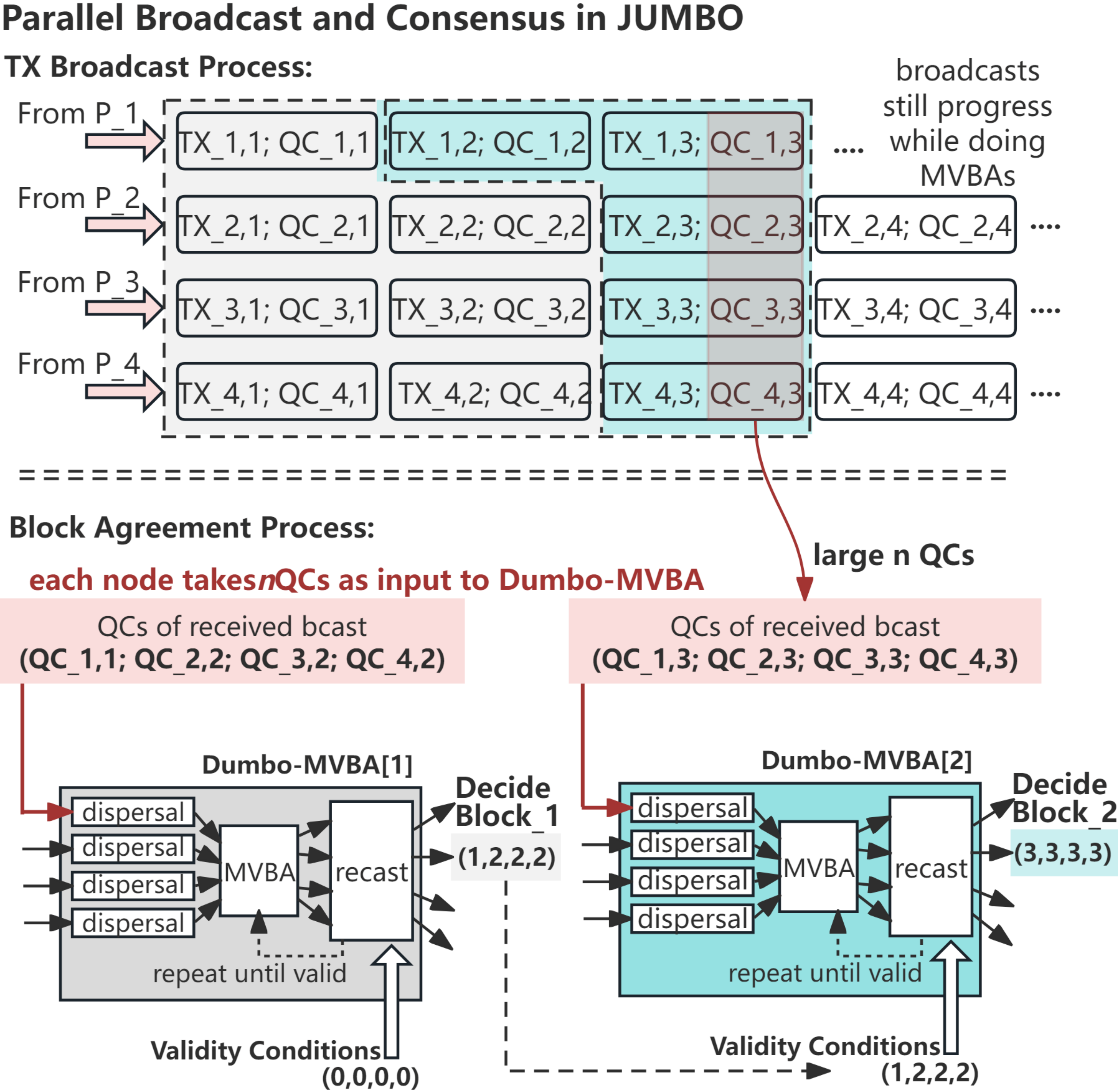}\\
 		\vspace{-0.2cm}
 		\caption{$\jumbo$ and how does it overcomes authenticator bottleneck.}	
 		\label{fig:dumbo-ng}
 	\end{center}
 	\vspace{-0.3cm}
 \end{figure}

\subsection{Analysis of $\jumbo$}

$\jumbo$   securely realizes the notion of  asynchronous BFT consensus (atomic broadcast) with fairness. More formally,
 
\begin{theorem}
	$\jumbo$    implements asynchronous BFT atomic broadcast with fairness except with negligible probability.
	\label{theorem:ng++}
\end{theorem}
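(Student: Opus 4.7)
The plan is to decompose the four required properties (Agreement, Total Order, Liveness/Strong Validity, and $\alpha$-Fairness) and reduce each to guarantees of the underlying modules: the QC-based slotted broadcast, the $\dumbo$-$\mvba$ extension built from $\mathsf{APDB}$, and the speed-limit patch augmenting both the vote rule and the $\mvba$ predicate. Since $\jumbo$ inherits the concurrent broadcast/agreement skeleton of $\dumbong$, a substantial part of the analysis can follow the same template as \cite{gao2022dumbo}; the new work is in verifying that (i) the dispersal-based $\mvba$ preserves the validity conditions required by that template and (ii) the fairness patch neither breaks liveness nor weakens safety.

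For Agreement and Total Order, I would first observe that at every slot $s$ of broadcast $i$ the unforgeability of signatures plus the $n-f$ quorum in $QC_s$ ensures that any two valid $QC_s$ with the same slot index commit to the same $\hash(tx_s)$; hence a valid broadcast QC uniquely determines the dispersed transaction batch up to that slot. The $\dumbo$-$\mvba$ output in each epoch is a single vector of $n$ valid broadcast QCs that all honest nodes reconstruct consistently, by agreement of $\mvba$ and the recast-ability of $\mathsf{APDB}$. Because the predicate forces non-decreasing slot indices relative to the previous $[\ordered_j]_j$ vector, the sequence of decided vectors is monotone, and the block assembled from the differential is therefore identically defined at every honest node. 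The pull mechanism guarantees each honest node eventually obtains the full transaction content behind every outputted QC, yielding the same totally ordered log.

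Liveness reduces to three sub-claims: every transaction in an honest buffer is eventually broadcast with a QC, every honest broadcast's slot index eventually appears in some future $\height_j$ at all honest nodes, and the corresponding slot is eventually solicited by some $\mvba$ output. The first is immediate from the QC broadcast protocol (with the pull mechanism removing any dependence on a specific recipient buffering the payload). The second follows from the broadcast's totality and the $n-f$ quorum structure. The third uses the 1/2 quality of $\dumbo$-$\mvba$: in each epoch, with probability at least $1/2$ the $\mvba$ outputs an honest node's proposal, and that proposal must include the advanced slot by construction of $[\height_j]_j$; hence after an expected constant number of epochs any honest broadcast's pending slot is decided. The subtle point here is that the fairness-patch's temporary halt in voting can only delay some fast senders' slots, so honest senders still eventually receive quorums on the slots that matter for liveness — this needs to be spelled out, observing that at most $f$ broadcasts can be ``too fast'' simultaneously, so the $(f+1)$-th slowest always catches up and honest slow broadcasts never get starved by the halt rule.

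The main obstacle will be the $\alpha$-fairness argument, because it must translate the local speed-limit parameter $\beta$ into a quantitative lower bound on the fraction of honest-proposed transactions in each output block. The plan is: first show that for any proposal that passes the strengthened $\mvba$ predicate, the ratio between the largest slot advance $\Delta_{\max}$ and the $(f+1)$-th smallest slot advance $\Delta_{(f+1)}$ (taken over all $n$ broadcasts, measured against $[\ordered_j]_j$) is at most $1/\beta$; next, since at most $f$ broadcasts belong to malicious senders, at least $n-2f \ge f+1$ of the advances counted into $\Delta_{(f+1)}$ come from honest senders. Bounding the total transaction count in the block by $n\cdot\Delta_{\max}\cdot |B|$ and the honest contribution from below by $(n-f)\cdot\Delta_{(f+1)}\cdot |B|$ yields a fairness ratio of at least $(n-f)\beta/n$, so one can set $\alpha := (n-f)\beta/n$ (and exhibit any $\beta\in(0,1)$ realizing any $\alpha\in[0,(n-f)/n)$). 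The delicate step is arguing that an honest node will in fact gather the $n-f$ signatures needed for a slot on any sufficiently slow broadcast despite the halt rule — this uses that the halt triggers only when $\beta\cdot \delta_j \ge \delta$, which by definition of $\delta$ cannot be true for at least $f+1$ slowest broadcasts, so all honest slow broadcasts always retain a quorum of willing voters.
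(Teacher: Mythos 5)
Your overall decomposition mirrors the paper's: safety from QC unforgeability plus $\mvba$ agreement and $\mathsf{APDB}$ recast-ability, liveness from broadcast validity plus $\mvba$ quality, and fairness from the speed-limit predicate. However, two of your key steps do not hold as stated.

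First, in the liveness argument your claim that ``at most $f$ broadcasts can be too fast simultaneously'' is false. The halt rule suspends voting on every broadcast with $\beta\cdot\delta_j\ge\delta$ where $\delta$ is the $(f{+}1)$-th \emph{smallest} advance, so only the $f{+}1$ slowest broadcasts are guaranteed to keep their voters; up to $n-f-1$ broadcasts (including honest ones) can be halted at once. The argument the proof actually needs, and which you leave unspoken, has two parts: (i) no honest node ever halts voting on the first slot of a broadcast after a block is committed, because $\delta_j=0$ there, so every honest sender advances at least one slot per epoch and all honest nodes can always assemble a valid $\mvba$ input; and (ii) a halted honest broadcast is unblocked after an expected constant number of epochs, because its highest QC sits in every honest node's $\height_j$ vector and, by the $1/2$-quality of $\dumbo$-$\mvba$, is solicited into some block within expected two epochs, which resets $\delta_j$ to $0$ and resumes voting. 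Without (ii), a fast honest sender could in principle be starved forever, and your appeal to ``slow broadcasts never get starved'' does not cover this case.

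Second, your fairness bound is internally inconsistent and, as written, unjustified. You correctly note that only $n-2f$ of the $n-f$ broadcasts with advance at least $\Delta_{(f+1)}\ge\beta\Delta_{\max}$ are guaranteed honest (the $f$ slowest broadcasts may all be honest and contribute nothing), yet you then lower-bound the honest contribution by $(n-f)\cdot\Delta_{(f+1)}\cdot|B|$ and conclude $\alpha=(n-f)\beta/n$. The correct conclusion from your own counting is $\alpha=(n-2f)\beta/n\approx\beta/3$. That is still a valid fairness constant, but weaker than the paper's: the paper instead compares the adversarial contribution $X\le f\Delta_{\max}|B|$ against the guaranteed honest contribution $\ge(f{+}1)\beta\Delta_{\max}|B|>\beta X$, from which honest transactions make up at least a $\beta/(1+\beta)$ fraction of every block, approaching $1/2$ as $\beta\to 1$.
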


 \smallskip
 \noindent
 {\bf Security analysis.} The   security proof of $\jumbo$   largely coincides with that of $\dumbong$ \cite{gao2022dumbo}, except that (i) the proof of liveness has to be adapted to consider possible ``stuck'' caused by hanging of votes and (ii) fairness requires new analysis. We   provide a proof sketch hereunder and defer the   full security proof to Appendix \ref{app:jumbo} in Supplementary.
	
	For proving liveness, we need to consider the pending of votes due to fairness patch.
	The roadmap of  proofs is as: 
	
	\begin{itemize}
		\item {\em All honest nodes can obtain a valid input to the current $\mvba$ protocol in constant rounds}. 
		All honest nodes won't halt their voting  in the first round of each broadcast,
		as they always continue on voting if $\delta_j=0$.	
		So every honest node can at least receives one more broadcast with valid QC from each honest broadcast sender in constant rounds.
		Given the fact that there are at least $n-f$ honest broadcast  senders, all honest nodes can prepare a valid $\mvba$ input in constant rounds. 
		\item {\em An honest node can broadcast transactions with valid QCs to all honest nodes in constant rounds}. 
		When an honest node broadcasts a batch of transactions, there are two possible cases: (i) one case is that no  honest node halts voting in the broadcast; (ii) the other case is that some nodes temporarily hang on voting in the   broadcast. In the first case, the sender can generate a valid QC for the broadcasted transactions in constant rounds. In the second case, it on average takes a constant number of $\mvba$s to solicit the highest QC of the hanging broadcast into some block,
		after which, 
		the honest nodes  would continue to vote in the broadcast,  thus allowing the broadcast sender to generate a new QC on the broadcast transactions.
		\item {\em Any honest node's broadcasted transactions would be  solicited by some $\mvba$ after expected constant rounds}. 
		Putting the above two assertions together, if a transaction is broadcasted by an honest node, 
		it first takes expected constant rounds to generate a valid QC on the transaction,
		then one round is used to send the valid QC to all honest nodes,
		and finally, QC of this transaction (or the  same sender's another QC with higher slot number) 
		would be solicited by some $\mvba$ output, with an overwhelming probability of $1-(1-q)^k$ after $k$ $\mvba$ executions. 
		Thus,   it takes   an  expected constant number of rounds to confirm   transactions  broadcasted by any honest node.
	\end{itemize}

 	For safety, the security intuitions are very straightforward and actually coincides with the original $\dumbong$ protocol:
 	
 	\begin{itemize}
 		\item {\em Guarantee   consistency of broadcasts via QCs}. A valid broadcast QC associating with slot number $s$ implies: no malicious sender can broadcast different transactions to distinct honest nodes associating with a slow number $\le s$;
 		\item {\em Ensure   agreement of each block by  soliciting   same QCs}. The agreement and external validity of $\mvba$ further ensures:  all honest nodes must solicit broadcasts with the same slot numbers and valid QCs into each   block. Thus, in combination of broadcasts' consistency, any two honest nodes'  output blocks must include   same transactions.
 	\end{itemize}
 	
	For fairness, it can be intuitively seen from ``speed limit'' placed on fastest broadcasts. More detailedly,   rationales are:
 
	\begin{itemize}
		\item {\em Bound the number of honest broadcasts solicited in each block}. Each block solicits   broadcasts from at least $n$-$f$  distinct senders, out of which     at least $f$+$1$   are honest.
		\item {\em Bound the difference in progress between malicious and honest broadcasts}. 
		Due to our fairness patch, 
		if   malicious broadcasts solicited by a block contribute  $X$ transactions,
		then  honest broadcasts solicited by the block   contribute  $\beta\cdot X$ transactions.
		Thus,  in each output block including $K$  transactions, there are at least $K \cdot \beta / (1+\beta)$  transactions   proposed by honest nodes. This gives a fairness lower bound $\alpha = \beta / (1+\beta)$. Considering $\beta$ can be a parameter close to 1, the fairness  $\alpha$   can arbitrarily approach 1/2.

		
	\end{itemize}

 \noindent
{\bf Complexity analysis}. The round, amortized communication, and message complexities of $\jumbo$ are same to these of original $\dumbong$.  For the fined-grained metric of authenticator complexity, 
  $\jumbo$ is $\bigO(n^2)$, reducing another $\bigO(n)$ order when compared to  $\dumbong$ using aggregate BLS signatures. The authenticator complexity can be divided into two parts. First, in broadcast process, only the leader generates QCs and multicasts them to receivers. Thus, for any transaction $tx$, only $\bigO(1)$ multicasts of a single QC involve. Second, in block agreement process, every node leverages provable dispersal protocol to spread out $n$ QCs, 
and later, each node reconstructs at most $\bigO(n)$ QCs, 
resulting in an authenticator complexity of $\bigO(n^2)$. Moreover, the underlying $\mvba$ protocol that we use is sMVBA from Guo  et al. \cite{guo2022speeding}, which also has an authenticator complexity of $\bigO(n^2)$.
In sum, the authenticator complexity of $\jumbo$ is $\bigO(n^2)$, which is dominated by the block agreement process.


\section{Additional Related  Work}\label{app:related}


\smallskip
\noindent
{\bf Sharding is feasible only with more scalable   consensus for each shard}. One might   wonder whether   sharding   \cite{luu2016secure,kokoris2018omniledger,zamani2018rapidchain} can overcome our scalability challenge.
But regretfully, this idea still requires each shard to execute consensus consisting of sufficiently many nodes to bound error probability. Considering a shard   of $k$  nodes randomly elected from an infinite set   with 80\% honesty,
if we expect each shard's error chance 
to be $10^{-6}$ (same parameter   in \cite{kokoris2018omniledger,zamani2018rapidchain}),
the shard shall contain $250$ or  more nodes. Namely, sharding  becomes feasible only if we   have  consensus   for hundreds of nodes as basis, which further motivates us to extend the appealing performance of   state-of-the-art asynchronous BFT     into larger scales.

\smallskip
\noindent
{\bf Dual-mode asynchronous BFT   inherits    scalability issues}.
{Recently,   a few  dual-mode asynchronous BFT   protocols \cite{bullshark,gelashvili2021jolteon,lu2021bolt} are proposed, such that they can perform similar to partially-synchronous protocols in synchrony. However, their pessimistic paths   resemble existing single-mode  asynchronous BFT protocols. For example, \cite{bullshark} almost directly inherits $\tusk$ and \cite{lu2021bolt} straightforwardly used $\dumbo$ as their pessimistic paths.
	That means,   these dual-mode protocols might still suffer from   severe scalability issues in their pessimistic case.}

\smallskip
\noindent
{\bf Executing BFT among an elected sub-committee}. 
It is also seemingly   straightforward   to sample a sufficiently small sub-committee \cite{gilad2017algorand} to solve the scalability challenge in asynchronous BFT. However,  this approach has the same problem as the sharding paradigm:
when we sample a sub-committee, we either have to assume the candidates have an exaggerated honest portion or need to sample   sufficient many nodes. For example, some reasonable parameters \cite{kokoris2018omniledger,zamani2018rapidchain} could be sampling 250 nodes out of   80\% honest candidates.

\smallskip
\noindent
{\bf Scalability solutions to partially synchronous BFT}. A few recent studies \cite{sbft,amir2008steward,avarikioti2020fnf} adopt  threshold signature towards scaling up partially synchronous BFT consensus. 
In HotStuff \cite{yin2018hotstuff-full}, the paper theoretically realizes linear communication complexity by implementing QCs from threshold signatures, but the authors'  implementation chooses ECDSA as concrete instantiation (which now can be understood by our analysis in Section \ref{sec:qc}, because implementing QCs via threshold signature in the sub-optimal individual verification manner is extremely computationally costly).
What's more, even if one uses the optimized batch verification approach in lieu of individual verification, threshold BLS signature still places  a heavy aggregation cost in the critical path (making overall computation latency not better than ECDSA), as a result of interpolation in exponent. In contrast, the aggregation of BLS multi-signature   simply ``adds'' all receives signatures. The only extra cost of BLS multi-signature compared to its threshold counterpart is $n$ bits to encode the identities of signers while embedding into QCs, which is essentially a small overhead in realistic BFT systems where $n$ is a number like at most a few hundreds.

\begin{figure*}
	\vspace{-0.1cm}
	\begin{center}
		\includegraphics[width=18.6cm]{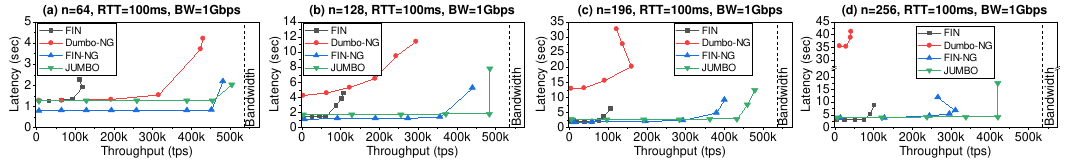}\\
		\vspace{-0.35cm}
		\caption{Latency v.s. TPS in the regional setting (100 ms RTT and 1 Gbps bandwidth)  for $n=$ 64, 128, 196 and 256   nodes.}
		\label{fig:WAN-1}
	\end{center}
	\vspace{-0.4cm}
\end{figure*}

\begin{figure*} 
	\vspace{-0.1cm}
	\begin{center}
		\includegraphics[width=18.6cm]{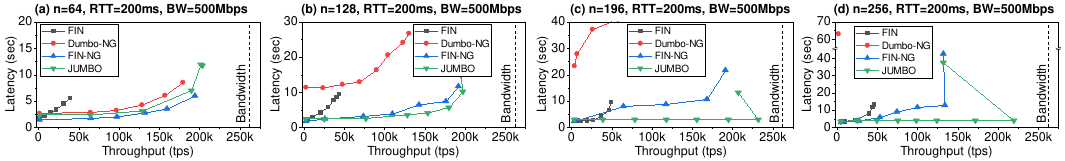}\\
		\vspace{-0.35cm}
		\caption{Latency v.s. TPS in the continental setting (200 ms RTT and 500 Mbps bandwidth) for $n=$ 64, 128, 196 and 256 nodes.}
		\label{fig:WAN-2}
	\end{center}
	\vspace{-0.4cm}
\end{figure*}

\begin{figure*} 
	\vspace{-0.1cm}
	\begin{center}
		\includegraphics[width=18.6cm]{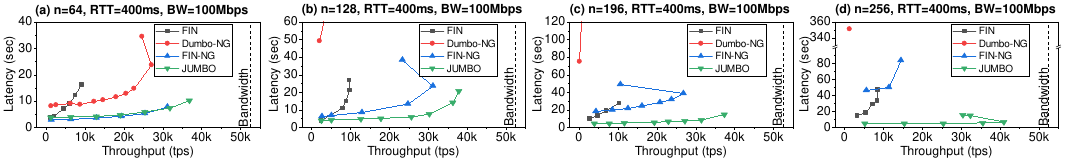}\\
		\vspace{-0.35cm}
		\caption{Latency v.s. TPS in the global setting (400 ms RTT and 100 Mbps bandwidth) for $n=$ 64, 128, 196 and 256 nodes.}
		\label{fig:WAN-3}
	\end{center}
	\vspace{-0.4cm}
\end{figure*}

\section{Implementations and Evaluations}\label{sec:eval}

We implement  $\finng$ and $\jumbo$  in Golang.
 $\dumbong$ and FIN are also implemented in the same language with same libraries and network layer. 
We then extensively evaluate  them in large  testnets   of up to 256 nodes.\footnote{The performance of open-sourced $\tusk$ implementation \cite{tusk} would be bounded by the I/O   of EC2 c6a.2xlarge instance's low-profile disk, which is much slower than bandwidth. Thus, directly comparing $\tusk$ with our implementations could be meaningless. 
	Nevertheless,   
	the original $\dumbong$   and $\tusk$ have   similar trend of performance degradation in   network scales. 
}

\smallskip
\noindent
{\bf Implementations details.} We implement  four types of processes: a client, a txpool, a broadcaster, and an orderer. 
The client process can generate and submit transactions at a fixed workload rate. The txpool process receives transactions and batches them into broadcast input.
The broadcaster processes  disseminate the batch of transactions. The orderer process decides how to cut broadcasted transactions into blocks. Each transaction is 250 byte to approximate the size of a typical Bitcoin transaction with one input and two outputs.\footnote{The size choice of transactions essentially has little impact on   evaluations, if we count  throughput in the unit of ``bytes per second'' instead. One can simply think each transaction as an alternative information unit.} We also set a upper limit of   broadcast  batch   at 4000, so a broadcast might diffuse 1-4000 transactions depending on system load.  To realize reliable fully-meshed asynchronous point-to-point channels, we implement   (persistent)   TCP connection  between every two nodes. If a TCP connection is dropped, our implementation would attempt to re-connect.

Regarding signature, $\dumbong$ \cite{gao2022dumbo} chooses   ECDSA library from Bitcoin \cite{bitcoin-core} on secp256k1 curve, and $\jumbo$ uses BLS signature library  \cite{bls} on BLS12-381 curve. 
For transaction broadcast, $\dumbong$ and $\jumbo$ choose the same QC-based broadcast,   $\finng$ variant  employs w$\rbc$ as described in Section \ref{sec:jumbo}, and FIN adopts CT05 $\rbc$ \cite{cachin2005asynchronous}. Regarding   $\mvba$, $\dumbong$  directly uses s$\mvba$ \cite{guo2022speeding}, $\jumbo$ employs s$\mvba$ extended by $\dumbo$-$\mvba$ \cite{lu2020dumbo} to fit   large input, and $\finng$ adopts our quality-enhanced version of FIN-$\mvba$---FIN-$\mvba$-Q.

\smallskip
\noindent
{\bf Evaluation setups on Amazon EC2.} We used Amazon's EC2 c6a.2xlarge instances (8 vCPUs and 16 GB main memory) in one region (Virginia) to create   testnets. 
The performances of BFT protocols were evaluated with varying scales at $n$$=$64, 128, 196, and 256 nodes, where each node corresponds to an EC2 c6a.2xlarge instance.
To  simulate  WAN environment reproducibly and affordably, we use Traffic Control (TC) tool of Linux to configure (upload) bandwidth and latency of all nodes.\footnote{Though simulating WAN in one AWS region using TC can greatly save the expense as no data transfer fee,   our    experiments'  cost is still  about \$6,000.}
We consider 3 deployment scenarios: global, continental, and regional. The global setting reflects a globally distributed environment with 400 ms round-trip time (RTT) and 100 Mbps bandwidth. The continental setting is with   200 ms RTT and 500 Mbps bandwidth to depict the network condition inside a large country or a union  of countries. The regional setting has  100 ms RTT and 1 Gbps bandwidth.

\ignore{
\begin{table}[H]
	\caption{Compositions of $\dumbong$, $\plusplus$, $\jumbo$-1 and $\jumbo$-2.}\label{tab:composition}
	\centering
	\resizebox{1.08\columnwidth}{!}{
	\begin{tabular}{|c|c|c|c|}
		\hline
		& Broadcast  & $\mvba$  & Signature \\
		\hline
		$\dumbong$ & QC-based & GLL+22 & ECDSA \\
		\hline
		$\plusplus$ & QC-based & GLL+22 + $\dumbo$-$\mvba$ & BLS \\
		\hline
		$\jumbo$-1 & w$\rbc$ & GLL+22 & BLS ($\mvba$ only) \\
		\hline
		$\jumbo$-2 & w$\rbc$ & $\jumbo$-$\mvba$ & n.a. \\
		\hline
	\end{tabular}
	}
\end{table}
}

\ignore{
\smallskip
\noindent
{\bf Highlights of evaluation results.} 
The key information derived from our experiments can be summarized as:
\begin{itemize}
	\item Compared with the state-of-the-art study $\dumbong$, our protocols make significant improvements in both throughput and latency, in particular in a network with more nodes. For example, all our protocols ($\plusplus$, $\jumbo$-1 and $\jumbo$-2)  exhibited at least 1.6x increment in peak throughput and 6x improvement of confirmation latency, throughout all cases.
	\item Our protocols also presented interesting performance trade-offs: $\plusplus$ has better performance in terms of throughput, with an average peak throughput that is around 1.16x of $\jumbo$-1 and around 1.28x of $\jumbo$-2. The signature-free $\jumbo$-2 has the lowest latency while the throughput is relatively small (e.g.  50,000 tx/s), while $\plusplus$ can confirm much faster at a higher throughput that is close to   line speed.
\end{itemize}
}

\smallskip
\noindent
{\bf Basic benchmarking.}  
Figures \ref{fig:WAN-1}, \ref{fig:WAN-2} and \ref{fig:WAN-3} 
shows throughput and latency of each   protocol  on varying input rates and scales,
for the regional, continental and global settings, respectively.
Each data point is averaged over  at least  a minute execution after a warm-up period of one block. 

\smallskip 
\noindent
\underline{\smash{\emph{Peak throughput}}}. $\jumbo$ always presents the largest peak throughput.
More intriguingly, its peak throughput  is close to the line speed, despite $n$ or network settings.
For example, when bandwidth is 500 Mbps, its throughput can roughly achieve 250,000 tx/s, which approximates 500 Mbps as we use 2000-bit (256-byte) transaction.\footnote{Note 500 Mbps  = $500*1024^2$ bit/s instead of   $5*10^8$ bit/s in Linux TC. 
}
In addition, $\finng$'s peak throughput  is larger than that of FIN and  $\dumbong$  in nearly all cases, although it is less than   $\jumbo$.

The peak throughput of $\jumbo$ can exceed  that of $\finng$ because the QC-based broadcasts in $\jumbo$ can better utilize bandwidth resources due to less   overheads: First, the ``uncontributive'' network packet headers in $\jumbo$ are an order-of-$\bigO(n)$ less, because of its asymptotically better message complexity; Second,   ``loadless'' message passing rounds are also reduced in   $\jumbo$, considering that each  broadcast in $\jumbo$ only has a single loadless round but each w$\rbc$ in $\finng$ has 2-3 loadless rounds.


\noindent
\underline{\smash{\emph{Latency-throughput trade-offs}}}.  It is clear that
$\jumbo$ strictly surpasses $\finng$  in the global and continental settings when $n$$\ge$$128$ nodes, as $\jumbo$   can attain the same throughput at a lower latency.
What's even more intriguing is that $\jumbo$'s latency can remain  low while its  throughput increases from minimum to maximum. 
This hints at that $\jumbo$  is promising to simultaneously handle throughput-critical and latency-critical applications. 
$\finng$ cannot match $\jumbo$ in handling substantial system loads in large network such as $n$=$256$, but it might exhibit lower latency 
when bandwidth is sufficient (e.g., 1 Gbps) and/or system is relatively small (e.g., 64 nodes), 
indicating its applications in   scenarios that are extremely latency-sensitive with low system loads and adequate network resources.
Again, FIN and  $\dumbong$ perform much worse than $\finng$ and $\jumbo$ in most cases.

\smallskip
\noindent
{\bf Benchmarking with faulty nodes.} We also   consider 3 types of faulty nodes: (i) crash nodes, (ii) Byzantine nodes that send invalid  signatures to cause honest nodes verify all signatures before blocklist comes into effect, and (iii)  Byzantine nodes that can lower the quality of $\mvba$ to downgrade performance. 
These malicious setting experiments are conducted in a network of 1 Gbps bandwidth and 100 ms RTT.

\begin{figure}[H]
	\vspace{-0.3cm}
	\begin{center}
		\includegraphics[width=7.1cm]{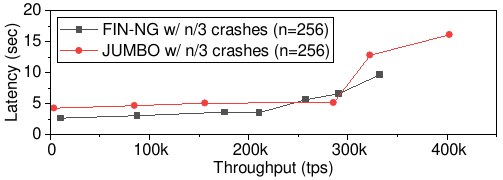}\\
		\vspace{-0.3cm}
		\caption{Latency v.s. throughput in case of 100ms RTT and 1Gbps bandwidth for $n=$256 nodes ({with 85   crashed nodes}).}
		\label{fig:crash}
	\end{center}
	\vspace{-0.35cm}
\end{figure}

\begin{figure}[H]
	\vspace{-0.2cm}
	\begin{center}
		\includegraphics[width=7.5cm]{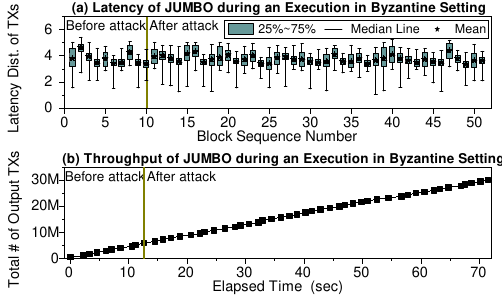}\\
		\vspace{-0.3cm}
		\caption{Execution of $\jumbo$    for $n=$256 nodes (where are 85 malicious nodes   sending invalid signatures).}
		\label{fig:byz}
	\end{center}
	\vspace{-0.3cm}
\end{figure}

\begin{figure}[H]
	\vspace{-0.2cm}
	\begin{center}
		\includegraphics[width=7.1cm]{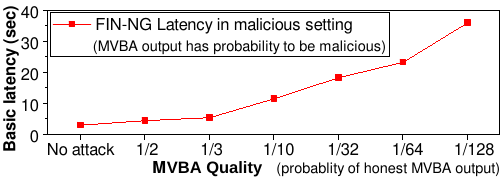}\\
		\vspace{-0.2cm}
		\caption{Latency of FIN-NG v.s. the quality of MVBA used in FIN-NG for $n=$256 nodes ({\bf with 85   malicious nodes}).}
		\label{fig:latency-fairness}
	\end{center}
	\vspace{-0.3cm}
\end{figure}

\noindent
\underline{\smash{\emph{Crash faults}}}. As  illustrated in Fig  \ref{fig:crash}, we evaluated the performance of  $\jumbo$  and $\finng$ with $\lfloor (n-1)/3 \rfloor$ crash nodes for $n=$256   nodes. Crashes cause the latency of $\finng$ and $\jumbo$ slightly increased. 
The peak throughputs of $\jumbo$  and $\finng$ are almost not impacted by crashes,
and they still can reasonably track    available network bandwidth. 

\noindent
\underline{\smash{\emph{Byzantine faults that send invalid digital signatures}}}. We evaluated $\jumbo$ for $n$=$256$   with 85 malicious nodes that send  fake signatures to  downgrade the performance of   batch-verification. Figure \ref{fig:byz} (a) and (b) plot the latency and throughput, respectively,  during an execution where the Byzantine nodes launch the attack while running the 11th and later blocks. Clearly, the attack almost causes negligible impact on these performance metrics, as the Byzantine nodes were quickly blocklisted and therefore their attack was prevented.

\noindent
\underline{\smash{\emph{Byzantine faults that manipulate    $\mvba$ output quality}}}. Then we run tests of $\finng$ for $n$=$256$   with 85 Byzantine nodes that attempt to manipulate $\mvba$ output.
These Byzantine nodes' inputs to $\mvba$ have minimum validity (e.g., only solicit $n-f$ broadcasts from distinct nodes), 
and they attempt to let $\mvba$ decide a malicious output as much as possible.
As Figure \ref{fig:latency-fairness} plots, we compared the effect of above attack while $\mvba$'s quality varies from 1/2 to 1/128.
It is obvious to see that 1/2-quality is more robust against such attack and renders a latency closest to the  case of no such attack.

\ignore{
\smallskip
\noindent
{\bf Benchmarking QC implementations.} Finally, to evaluate the efficiency of our optimized QC implementation from BLS multi-signature, we fixed   BFT protocols by $\dumbong$ and $\jumbo$, executed the benchmarking in a LAN setting with $n=256$ nodes, and set the system load almost 0 (by using the smallest broadcast batch size 1). Thus, the experiment results can mostly reflect CPU overhead because the cost of network transmission becomes lowest. We then varied the QC implementations among five candidates, i.e, concatenated ECDSA, half-aggregated Schnorr, BLS multi-signature,   and  the latter two with   batch-verification optimization. As shown in Figure \ref{fig:qc-in-bft}, our suggested QC implementation from BLS multi-signature in an optimized batch-verification manner strictly outperforms the rest candidates. In combination of the  blocklist mechanism against Byzantine nodes (whose effectiveness is shown in Figure \ref{fig:byz}), we reaffirm our conclusion  in Section \ref{sec:eval} to recommend the  QC implementation from BLS multi-signature.

 \begin{figure}[H]
	\vspace{-0.1cm}
	\begin{center}
		\includegraphics[width=7.5cm]{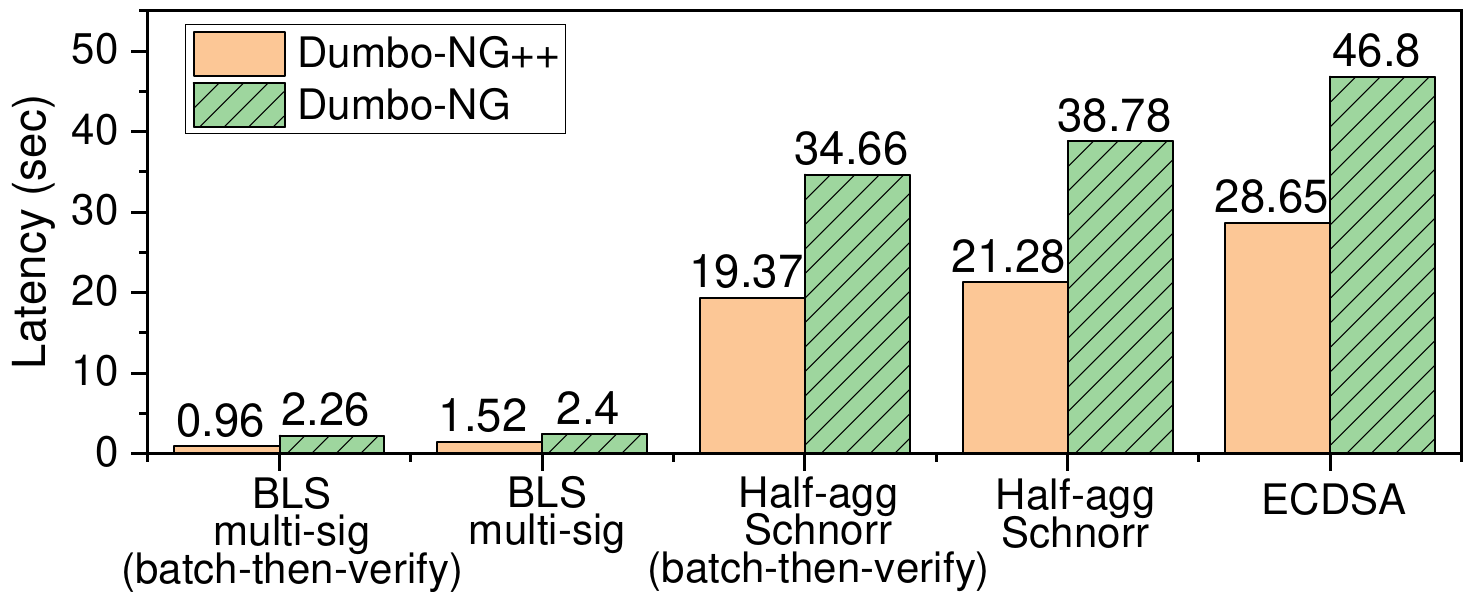}\\
		\vspace{-0.2cm}
		\caption{Basic latency of $\dumbong$  and $\plusplus$ on varying QC implementations from different signatures in a LAN setting with $n=256$ nodes and 12.5 Gbps bandwidth.}
		\label{fig:qc-in-bft}
	\end{center}
	\vspace{-0.3cm}
\end{figure}
}

\noindent
{\bf Benchmarking in  fluctuating network}.
We also test $\finng$ and $\jumbo$ with 256 nodes in a dynamic network controlled by Linux TC    to   switch  between 1 Gbps and 500 Mbps for each 15 sec. The evaluated results are plotted in Figure \ref{fig:changing-network}.
 $\jumbo$   can always closely track  the network capacity and achieve throughput close to the instantaneous line rate. $\finng$ is also  can switch  smoothly across varying network conditions, though its concrete throughput is less than $\jumbo$. 

\begin{figure}[H]
	\vspace{-0.2cm}
	\begin{center}
		\includegraphics[width=7cm]{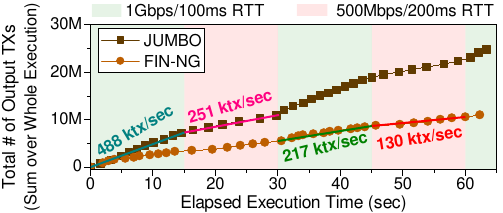}\\
		\vspace{-0.2cm}
		\caption{Executions of $\jumbo$ and $\finng$ in a fluctuating network that changes   bandwidth and RTT for every 15 sec.}
		\label{fig:changing-network}
	\end{center}
	\vspace{-0.3cm}
\end{figure}

\section{Conclusion}

We present a couple of more scalable asynchronous BFT consensus protocols  $\finng$ and $\jumbo$ with
reduced authenticator overhead. 
Extensive experiments are conducted, revealing that these scalable  designs can  performantly operate in large-scale WAN settings with up to 256 nodes. 

\bibliographystyle{IEEEtran}
{\footnotesize
\bibliography{references}
}


\appendices

\section{CPU Time of Sign, Non-interactive Aggregation and Verification of Typical Digital Signatures}\label{app:sig}

Table \ref{tab:sig} lists the signing time, verification time, aggregation time, and batch-verification time of typical digital signatures.
From Table \ref{tab:sig}, we have the next key findings:
a) QCs constructed from BLS multi-signature   can be verified much faster than QCs constructed from concatenated ECDSA signatures and half-aggregated Schnorr signatures, e.g., verifying a QC from BLS multi-signature is an order of magnitude faster than those from ECDSA and Schnorr for $n=256$; b) however, verifying a single BLS signature is much slower than verifying a single ECDSA/Schnorr signature because BLS verification needs costly pairing operations of elliptic curve, e.g., the verification of a single BLS signature is an order of magnitude slower than ECDSA and Schnorr signatures. 
Given our benchmarking of digital signatures, it becomes easy to understand why the earlier study \cite{li2023performance} concludes that the tempting BLS multi-signature has inferior performance in their BFT systems with relatively larger sizes, because the verification of $n-f$  single BLS signatures can quickly dominate the overall computing cost while $n$ increases. 

\begin{table*}
	\caption{Time of  signing, verifying a single signature, aggregating/concatenating $(n-f)$ signatures, and verifying a $(n-f)$-sized QC, where each data point is averaged over 10 thousand executions at an AWS EC2 c6a.2xlarge server. ECDSA/Schnorr are tested through library \cite{bitcoin-core}  over  Secp256k1 curve  and BLS is tested through library \cite{bls} over  BLS12-381 curve.}\label{tab:sig}
	\vspace{-0.1cm}
	\begin{footnotesize}
	\centering
		\begin{tabular}{|l|c|c|ccccccc|ccccccc|}
			\hline\rule{0pt}{11pt}
			\multirow{2}{*}{} & \multirow{2}{*}{\begin{tabular}[c]{@{}c@{}}Sign\\ ($\mu s$)\end{tabular}} & \multirow{2}{*}{\begin{tabular}[c]{@{}c@{}}Verify\\ ($\mu s$)\end{tabular}} & \multicolumn{7}{c|}{(Half-)aggregate $(n-f)$ signatures  to form QC  ($\mu s$)} & \multicolumn{7}{c|}{Verify $(n-f)$-sized QC   ($\mu s$)} \\ \cline{4-17} \rule{0pt}{11pt}
			&                                                                           &                                                                             & $n$=4      & 16     & 32      & 64     & 128    & 192    & 256    & $n$=4       & 16      & 32      & 64       & 128      & 192     & 256     \\ \hline\rule{0pt}{11pt}ECDSA             & 31                                                                     & 43                                                                       & \multicolumn{7}{c|}{0 as just concatenate}                    & \multicolumn{7}{c|}{$n-f$ times of verifying an individual signature}          \\
			Schnorr           & 25                                                                     & 44                                                                      & 18  & 63  & 120  & 245 & 485 & 731 & 970 & 141  & 430  & 787  & 1587  & 2714  & 3975 & 5005 \\
			BLS multi-sig              & 187                                                                     & 891                                                                      & 1   & 6   & 11    & 18  & 49  & 74  & 102  & 904  & 916  & 932  & 951   & 1036   & 1102  & 1184  \\

			BLS thld-sig              & 187                                                                     & 891                                                                      & 203   & 769   & 1448    & 3137  & 6294  & 9578  & 12976  & 891  & 891  & 891  & 891   & 891   & 891  & 891  \\ \hline
			
			\hline
		\end{tabular}
	\end{footnotesize}
\end{table*}

\section{Quality Downgrade Attack of FIN-MVBA}\label{app:quality-attack}
\noindent
As aforementioned in Section \ref{sec:intro} and Section \ref{sec:jumbo},   the quality of FIN-MVBA \cite{duan2023practical} could be only 1/3 under the influence of an adaptive adversary that can perform ``after-fact-removal'', which is lower than the optimal upper bound of 1/2 \cite{goren2022probabilistic}. 

Here we illustrate a concrete adversary to perform such quality degradation attack. Note that   the  adversary is adaptive such that  it can gradually corrupt up to $f$ nodes during the course of protocol. Besides, we assume the number of participating nodes $n$ is large enough in order to simplify the calculation of probability.

Below are the detailed attacking steps:
\begin{enumerate}
	\item Before the start of the FIN-MVBA protocol, the adversary chooses $f-1$  nodes to corrupt, and let us denote this corrupted set by $A$, such that the input of any node in $A$ can be replaced and thus controlled by the adversary. 
	\item At the moment when  the FIN-MVBA protocol starts, the adversary selects $f$ honest nodes, denoted by set $D$, and delays all messages related to these nodes in $D$. 
	Note that in addition to $D$, there are still $f+2$ so-far-honest nodes, and we denote them as set $H$.
	\item The adversary controls the malicious nodes in $A$ to execute the protocol like the honest ones until the $election$ protocol. 
	Such that $H$ can proceed to invoke the random leader $election$ protocol. To minimize the adversary's  ability, we assume the reconstruction threshold of $election$ is $n-f$, which equals to the size of $A \cup H$. 
	\item The adversary controls $A$ not send messages in the $election$ protocol, but it can receive $H$'s messages in the $election$ protocol, which allows $A$ to get the $election$ protocol's output $k$.
	
	\item Once the adversary gets $k$, there are three possible cases:
	\begin{enumerate}
		\item  $k$ falls in $A$. In this case, FIN-MVBA will output the input value of $P_k$, which is chosen by the adversary. Note that the probability of this scenario occurring is about 1/3.
		\item  $k$ falls in $H$. In this case, FIN-MVBA will output the input value of $P_k$, which is an input chosen by the honest node. Note that the probability of this scenario occurring is also about 1/3.
		\item  $k$ falls in $D$. In this case, the adversary can immediately corrupt node $P_k$ because it still has a quota of corruption. Then the adversary removes all messages that $P_k$ attempted to send because these messages are not delivered yet, as such the adversary can replace $P_k$'s input value, and rushingly finishes $P_k$'s $\rbc$, after which, the adversary delivers all messages, then its manipulating value will output in FIN-MVBA. So in this scenario, the output is always chosen by the adversary. Note that the probability of this scenario occurring is also about 1/3.
		\end{enumerate}
\end{enumerate}

The above steps complete the attack, and result in that with probability 2/3, the output is manipulated by the adversary.


\ignore{
\begin{table}[H]
	\caption{Comparison for performance metrics of $\tusk$, FIN, $\dumbong$, $\plusplus$, $\jumbo$-1 and $\jumbo$-2.}\label{tab:composition}
	\centering
	\renewcommand\arraystretch{1.5}
	\begin{tabular}{|c|c|c|c|}
		\hline
		protocol & message complexity & authenticator complexity & signature verification complexity \\
		\hline
		$\tusk$ & $\bigO(n^2)$ & $\bigO(n^3)$ & $\bigO(n^3)$ \\
		\hline
		FIN & $\bigO(n^3)$ & $-$ & $-$ \\
		\hline
		$\dumbong$ & $\bigO(n^2)$ $+$ $\bigO(n^2)$ & $\bigO(n^2)$ $+$ $\bigO(n^3)$ & $\bigO(n)$ $+$ $\bigO(n^2)$ \\
		\hline
		$\plusplus$ & $\bigO(n^2)$ $+$ $\bigO(n^2)$ & $\bigO(n^2)$ $+$ $\bigO(n^2)$ & $\bigO(n)$ $+$ $\bigO(n^2)$ \\
		\hline
		$\jumbo$-1 & $\bigO(n^3)$ $+$ $\bigO(n^2)$ & $-$ $+$ $\bigO(n^2)$ & $-$ $+$ $\bigO(n^2)$ \\
		\hline
		$\jumbo$-2 & $\bigO(n^3)$ $+$ $\bigO(n^3)$ & $-$ & $-$ \\
		\hline
	\end{tabular}
\end{table}
}

\section{Proof of FIN-$\mvba$-Q}\label{app:mvba-proof}
\noindent
{\bf Security of avw$\rbc$}. 
Note that a few properties of avw$\rbc$, such as weak agreement and validity, can immediately inherited from $w\rbc$, 
and we will only focus on the proofs of   totality and external validity properties.

\begin{lemma}\label{lemma:totality}
	\textbf{Totality of} avw$\rbc$. For any avw$\rbc$ instance, if an honest node $wr\text{-}delivers$ some value, any honest node eventually wr-delivers some value.
\end{lemma}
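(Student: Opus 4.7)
The plan is to adapt the standard totality argument for Bracha's reliable broadcast to our modified protocol, taking care that the $abandon()$ mechanism only interferes with the $\ECHO$ phase and leaves the $\READY$ phase intact. First I would observe from Figure~\ref{algo:rbc} that the $abandon()$ flag $ban$ is only consulted in the code path triggered by $\VAL$ (lines 4--5), which guards the emission of $\ECHO$ messages; it is never checked on the $\READY$ amplification step (lines 11--13) or on the $\READY$-quorum step (lines 14--15). Hence once the $\READY$ phase is underway, the protocol behaves exactly like unmodified Bracha.

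Next, suppose some honest node $\node_i$ $wr\text{-}delivers$ $h$. By line 14 of the protocol, $\node_i$ must have received $n-f$ matching $\READY(h)$ messages from distinct nodes. At most $f$ of these senders are corrupt, so at least $n-2f \ge f+1$ honest nodes have sent $\READY(h)$. By eventual delivery of the asynchronous authenticated channels, every other honest node $\node_j$ will eventually receive at least $f+1$ matching $\READY(h)$ messages. The amplification rule on lines 12--13 then forces $\node_j$ to multicast $\READY(h)$ if it has not done so already---and crucially this step is independent of $ban$.

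Consequently, all $n-f$ honest nodes eventually multicast $\READY(h)$, and each honest node will in turn receive $n-f$ matching $\READY(h)$ messages, triggering the condition on line 14 and producing $wr\text{-}deliver(h)$. The only step where I expect any subtlety is justifying that an honest node which has invoked $abandon()$ before ever sending an $\ECHO$ still faithfully participates in the $\READY$ phase; this follows by direct inspection of the pseudocode, since $ban$ is scoped narrowly around the $\ECHO$ multicast. With this observation in place, the proof reduces to the classical quorum-intersection and amplification arguments for Bracha's $\rbc$, and no further machinery is needed.
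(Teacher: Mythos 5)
Your proof is correct and follows essentially the same route as the paper's: both arguments hinge on the observation that $abandon()$ only suppresses $\ECHO$ messages, then extract $f+1$ honest $\READY(h)$ senders from the $n-f$ quorum and use the $\READY$ amplification rule to propagate delivery to all honest nodes. No gaps.
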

\begin{proof}
	Let node $\node_i$ be the first honest node who wr-delivers \emph{h} in ${avw\rbc}_j$. So $\node_i$ must have received $n-f$ \emph{READY(h)} messages from distinct nodes, among which $f+1$ are honest nodes. As shown in Figure \ref{algo:rbc}, invoking $abandon()$ only blocks sending \emph{ECHO} messages. So these $f+1$ honest nodes will eventually transmit their \emph{READY(h)} messages to all honest nodes. Thus, any honest node will eventually receives $f+1$ \emph{READY(h)} messages, and will tansmit \emph{READY(h)} if has not sent it. Therefore, every honest node will eventually receive $n-f$ \emph{READY(h)}, and $wr\text{-}deliver$ \emph{h}.
\end{proof}

\begin{lemma}\label{lemma:ext-val}
	\textbf{External validity of} avw$\rbc$. For any avw$\rbc$ instance, If an honest node outputs a value $v$, then $v$ is valid w.r.t. $Q$, i.e., $Q(v) = 1$.
\end{lemma}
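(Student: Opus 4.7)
The plan is to trace every execution path by which an honest node $\node_i$ can output a value, and check that on every such path the value in question has already been validated against the predicate $Q$.

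First, I would observe that the only place in Figure~\ref{algo:rbc} where an honest node issues its final output is line~21, namely the call $r\text{-}deliver(val)$ inside the handler for $wr\text{-}deliver(h)$. In particular, $wr\text{-}deliver(h)$ by itself is not a protocol output; it is only the event that unblocks the wait on line~20, after which $r\text{-}deliver(val)$ is invoked with the node's \emph{local} variable $val$. So to prove external validity it suffices to show that whenever line~21 is executed, $Q(val) = 1$.

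Next I would argue that $val$ is only written on line~7, and that any execution reaching line~7 has necessarily passed the validity check. Concretely, line~7 is inside the handler started on line~4, guarded by the wait on line~5 (``until $Q(v_s)$ holds or $ban = \mathbf{true}$'') and the conditional on line~6 (``if $ban = \mathbf{false}$''). The two exit conditions of the wait are $Q(v_s) = 1$ and $ban = \mathbf{true}$; in the latter case the conditional on line~6 evaluates to false and line~7 is skipped. Hence line~7 can only be reached when $Q(v_s) = 1$ at the moment of assignment, so $Q(val) = 1$ immediately after line~7 executes. Since $val$ is never reassigned elsewhere in the protocol, $Q(val) = 1$ continues to hold at line~21.

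Finally I would conclude: if honest $\node_i$ outputs some value $v$, then $v = val$ for the local variable $val$ at that point, so $Q(v) = Q(val) = 1$, which is exactly external validity. The main (and essentially only) subtlety that I expect to have to justify carefully is that the $wr\text{-}deliver(h)$ path cannot bypass the local $val$: one might worry that having received $n-f$ matching $\mathsf{READY}(h)$ messages from other nodes is enough to output, but the wait on line~20 explicitly blocks $r\text{-}deliver$ until the node's own $val$ satisfies $\hash(val) = h$, so the output is always the locally-stored and already $Q$-validated value. Once this is spelled out, the claim follows directly.
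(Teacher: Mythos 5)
Your proof is correct, but it takes a genuinely different route from the paper's. You argue by local dataflow: the only value-output site is $r\text{-}deliver(val)$, the local variable $val$ is written exactly once, and that write is guarded by the wait ``until $Q(v_s)$ holds or $ban=\textbf{true}$'' together with the subsequent $ban=\textbf{false}$ check, so any value that ever reaches the output has already been validated by the outputting node itself. The paper instead argues by quorum counting and contradiction: $wr\text{-}delivering$ $h$ requires $n-f$ matching $\READY(h)$, which forces at least $f+1$ honest nodes to have sent $\ECHO(h)$, and honest nodes only send $\ECHO(\hash(v))$ after checking $Q(v)=1$. Your argument is more elementary --- it needs no quorum intersection and no collision resistance of $\hash$, since the output is the node's \emph{own} validated value rather than a value reconstructed from a matching digest. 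What the paper's quorum-style argument buys is a slightly stronger fact that is actually the one consumed downstream in FIN-$\mvba$-Q: there, a node may $wr\text{-}deliver$ only the digest $h_k$ and later obtain the preimage from another node's $\mathsf{Value}(v)$ message (line 24 of Figure~\ref{fig:finmvba}), so one needs that \emph{any} preimage of a $wr\text{-}delivered$ digest is valid, which is exactly the ``$f+1$ honest nodes sent $\ECHO$'' argument plus collision resistance. For the lemma as literally stated, however, your proof is complete and sound.
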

\begin{proof}
	This property is fairly trivial, as if it is violated, there must be $f+1$ honest nodes send \emph{READY(h)} with $\hash(v)=h$, further indicating that there must be $f+1$ honest nodes send \emph{ECHO(h)} with $\hash(v)=h$, which causes contradiction because honest nodes wouldn't send \emph{ECHO(h)} with $\hash(v)=h$ if $Q(v) \ne 1$.
\end{proof}

\noindent
{\bf Security of FIN-$\mvba$-Q}. We now prove that FIN-$\mvba$-Q realizes all four desired properties of $\mvba$, i.e., termination, agreement, external validity, and 1/2-quality.

\begin{lemma}\label{lemma:rbcdeliver}
	If all honest nodes input some values satisfying $Q$, there exists $n-f$ avw$\rbc$ instances, such that all honest nodes will eventually $wr\text{-}deliver$ in them.
\end{lemma}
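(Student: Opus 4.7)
\textbf{Proof Plan for Lemma \ref{lemma:rbcdeliver}.} The plan is to do a case analysis based on whether any honest node ever invokes $abandon()$. Since the trigger for $abandon()$ (Line 11 of FIN-$\mvba$-Q) is the event of $n-f$ entries of $F_i$ becoming $\textbf{true}$, and since $F_i[j]$ turns $\textbf{true}$ only after $n-f$ $\mathsf{FIN}(j,h_j)$ messages are collected from distinct nodes (each of whom must have previously $wr$-delivered in avw$\rbc_j$), the invocation of $abandon()$ carries substantial structural information that we will exploit.

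The first case is that no honest node ever invokes $abandon()$. In this case I would like to appeal directly to the \emph{validity} property of avw$\rbc$: for any honest sender $\node_h$ with valid input $v_h$ (which holds by hypothesis, since $Q(v_h)=1$), and under the premise that no honest node has invoked $abandon()$, validity guarantees that all honest nodes eventually $r$-deliver (and hence $wr$-deliver) in avw$\rbc_h$. Because at least $n-f$ senders are honest, we immediately obtain $n-f$ avw$\rbc$ instances in which universal $wr$-delivery occurs.

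The second case is that at least one honest node eventually invokes $abandon()$. I would fix $\node_h$ to be the \emph{first} honest node to do so, and examine the state of $F_h$ at that moment: it contains $n-f$ indices $j$ for which $\node_h$ has received $n-f$ matching $\mathsf{FIN}(j,h_j)$ messages. Out of these $n-f$ senders of $\mathsf{FIN}$ messages, at most $f$ can be malicious, so at least $n-2f \ge f+1 \ge 1$ of them are honest nodes that have genuinely $wr$-delivered $h_j$ in avw$\rbc_j$. The crucial step is then to invoke Lemma \ref{lemma:totality} (totality of avw$\rbc$): even after arbitrary honest $abandon()$ invocations, once a single honest node $wr$-delivers, every honest node eventually $wr$-delivers the same digest. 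Applying this to each of the $n-f$ indices identified by $F_h$ yields $n-f$ avw$\rbc$ instances in which all honest nodes eventually $wr$-deliver.

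The main obstacle I anticipate is the second case, where the $abandon()$ mechanism introduces a genuine threat: an adversary could try to arrange things so that a few honest nodes abandon too early, stopping the propagation of $\mathsf{ECHO}$ messages and thereby killing $wr$-delivery in some instance. The reason this threat does not materialize is exactly the design choice highlighted in Lemma \ref{lemma:totality}: $abandon()$ suppresses $\mathsf{ECHO}$ but never $\mathsf{READY}$, so the $f+1$-READY amplification step still propagates, and any honest $wr$-delivery forces universal $wr$-delivery. I will need to take care in the write-up to state this reliance on Lemma \ref{lemma:totality} explicitly and to verify the arithmetic $n-2f \ge 1$ under the optimal-resilience assumption $n \ge 3f+1$.
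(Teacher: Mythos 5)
Your proposal is correct and follows essentially the same two-case argument as the paper's own proof: validity of avw$\rbc$ applied to the $n-f$ honest senders when no honest node abandons, and otherwise the $n-f$ \textbf{true} entries of $F_i$ at an abandoning honest node combined with the totality lemma. Your additional care about the $n-2f\ge f+1$ count and the ECHO-versus-READY suppression is consistent with (and slightly more explicit than) the paper's reasoning.
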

\begin{proof}
	We consider two opposing scenarios: 1$)$ No honest node invoke $abandon()$ before all honest nodes $wr\text{-}deliver$ in the same $n-f$ avw$\rbc$ instances. 2$)$ Some honest node invoke $abandon()$ before all honest nodes $wr\text{-}deliver$ in the same $n-f$ avw$\rbc$ instances. For the former, \emph{Validity} property ensures all honest nodes eventually $wr\text{-}deliver$ in $n-f$ avw$\rbc$ instances leading by honest nodes. For the latter, if some honest node $\node_i$ invoked $abandon()$, the set $F_i$ of $\node_i$ must contain $n-f$ \emph{true} values. So at least $n-f$ avw$\rbc$ instances have been $wr\text{-}delivered$ by at least $f+1$ honest nodes. According to the \emph{Totality} property of avw$\rbc$, all honest nodes will eventually $wr\text{-}deliver$ in these $n-f$ avw$\rbc$ instances. 
\end{proof}

\begin{lemma}\label{lemma:rbcfinish}
	If all honest nodes input some values satisfying $Q$, all honest nodes will eventually step into iteration phase (line 13-28).
\end{lemma}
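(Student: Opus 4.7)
The plan is to chain Lemma \ref{lemma:rbcdeliver} with the FIN-message mechanism of the protocol (lines 2--4 and 7--9 in Figure \ref{fig:finmvba}). Lemma \ref{lemma:rbcdeliver} already guarantees that, under the hypothesis, there is a common set $S$ of at least $n-f$ avw$\rbc$ instances in which every honest node eventually $wr\text{-}delivers$ some digest. I will argue that this cascades into every honest node eventually setting $n-f$ entries of its vector $F_i$ to $\textbf{true}$, which is exactly the guard on line \ref{line:enoughfin}; once that wait returns, lines \ref{line:abandon} and 13 execute unconditionally, so the node enters the iteration phase.

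The first step is to show that for each $j \in S$, every honest node eventually receives $n-f$ matching $\mathsf{FIN}(j,h_j)$ messages from distinct nodes. Whenever an honest node $wr\text{-}delivers$ $h_j$ in avw$\rbc_j$, it multicasts $\mathsf{FIN}(j,h_j)$ (lines 2--4). By the weak agreement of avw$\rbc$, any two honest nodes that $wr\text{-}deliver$ in avw$\rbc_j$ agree on the same digest $h_j$, so all $n-f$ (or more) honest participants in $S$ broadcast syntactically identical $\mathsf{FIN}(j,h_j)$ messages. By the reliability of the underlying channels, each honest node $\node_i$ eventually receives all of them, which triggers line 8 and sets $F_i[j]\leftarrow \textbf{true}$.

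The second step is a straightforward union over $S$: since $|S|\ge n-f$ and the argument above holds for every $j \in S$, each honest node $\node_i$ eventually has $|\{j : F_i[j]=\textbf{true}\}|\ge n-f$. Consequently the wait on line \ref{line:enoughfin} terminates, $\node_i$ invokes $abandon()$ on every avw$\rbc_j$ in line \ref{line:abandon}, and then enters the iteration phase.

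The main obstacle I foresee is purely bookkeeping rather than conceptual: I must be careful that the set $S$ provided by Lemma \ref{lemma:rbcdeliver} is a single set common to all honest nodes (so that the $n-f$ true entries in different $F_i$'s correspond to overlapping indices, not merely $n-f$ arbitrary indices per node), and I must rule out a pathological interleaving in which an honest node invokes $abandon()$ on avw$\rbc_j$ before $wr\text{-}delivering$ in it. The latter is not an issue because $abandon()$ (as specified in Figure \ref{algo:rbc}) only blocks outgoing $\mathsf{ECHO}$ messages and does not prevent later $wr\text{-}delivery$; Lemma \ref{lemma:totality} (totality of avw$\rbc$) then finishes the argument.
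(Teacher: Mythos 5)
Your proof is correct and follows essentially the same route as the paper's: invoke Lemma~\ref{lemma:rbcdeliver} to obtain a common set of $n-f$ avw$\rbc$ instances that all honest nodes eventually $wr\text{-}deliver$, conclude that each such instance generates $n-f$ matching $\mathsf{FIN}$ multicasts from honest nodes, and hence that every honest node's guard on line~\ref{line:enoughfin} is eventually satisfied. Your added care about weak agreement guaranteeing matching digests and about $abandon()$ not blocking $wr$-delivery only makes explicit details the paper leaves implicit (the latter being already absorbed into Lemma~\ref{lemma:rbcdeliver} via Lemma~\ref{lemma:totality}).
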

\begin{proof}
	According to Lemma~$\ref{lemma:rbcdeliver}$, there exists $n-f$ avw$\rbc$ instances (denoted by ${HR}_{n-f}$), such that all honest nodes will eventually $wr\text{-}deliver$ in them. So for every avw$\rbc$ instance in ${HR}_{n-f}$, at least $n-f$ nodes will multicast \emph{FIN} message responding to it. So all honest nodes will eventually receive $n-f$ \emph{FIN} messages from distinct nodes for every avw$\rbc$ instance in ${HR}_{n-f}$. Thus, all honest nodes will step into iteration phase.
\end{proof}

\begin{lemma}\label{lemma:quality}
	Using $H$ to denote the first $f$+$1$ honest nodes invoking $abandon()$. For any value $v$, its hash value $\hash(v)$ won't be $wr\text{-}delivered$ by any honest node, if responding \emph{VAL(v)} messages havn't been received by any of $H$ before $H$ all invoked $abandon()$.
\end{lemma}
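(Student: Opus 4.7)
The plan is to argue by counting: if none of the $f+1$ honest nodes in $H$ saw $\mathsf{VAL}(v)$ before abandoning, then too few $\mathsf{ECHO}(\hash(v))$ messages can ever be produced to trigger any honest node's $\mathsf{READY}$, and hence no honest node can $wr$-deliver $\hash(v)$.

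First, I would argue that no node in $H$ ever multicasts $\mathsf{ECHO}(\hash(v))$. Inspecting lines~4--5 of Figure~\ref{algo:rbc}, a node only sends $\mathsf{ECHO}(\hash(v_s))$ inside the branch guarded by $ban=\textbf{false}$, which is entered upon receipt of $\mathsf{VAL}(v_s)$. By hypothesis, every node $\node_i \in H$ had already set $ban \leftarrow \textbf{true}$ (via its $abandon()$ invocation) before any $\mathsf{VAL}(v)$ was delivered to it, and $ban$ is monotone. Hence when (if ever) $\mathsf{VAL}(v)$ arrives at $\node_i$, the wait on line~5 exits with $ban=\textbf{true}$ and line~8 is skipped. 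So each of the $f+1$ nodes in $H$ never sends an $\mathsf{ECHO}$ message for $\hash(v)$.

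Next, I would bound the total number of $\mathsf{ECHO}(\hash(v))$ messages across the network. The remaining honest nodes number at most $n-f-(f{+}1)=n-2f-1\le f$, and Byzantine nodes number at most $f$. Therefore at most $2f < n-f$ distinct nodes can ever emit $\mathsf{ECHO}(\hash(v))$, so no node (honest or not) can collect $n-f$ matching $\mathsf{ECHO}(\hash(v))$ messages. In particular, no honest node ever triggers the $\mathsf{READY}(\hash(v))$ rule on lines~10--11 through the $\mathsf{ECHO}$ path.

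Then I would rule out the amplification path (lines~12--13), i.e., sending $\mathsf{READY}$ upon seeing $f+1$ matching $\mathsf{READY}$'s. Let $\node_i$ be the first honest node to multicast $\mathsf{READY}(\hash(v))$; the $f+1$ $\mathsf{READY}(\hash(v))$ messages it relied on contain at least one from an honest node (since there are $\le f$ Byzantine nodes), contradicting minimality. Hence the first such honest sender must have used the $\mathsf{ECHO}$ path, which the previous step already excluded. So no honest node ever sends $\mathsf{READY}(\hash(v))$, and at most $f < n-f$ matching $\mathsf{READY}(\hash(v))$ messages can exist in total, preventing any honest node from satisfying the $n-f$-matching $\mathsf{READY}$ trigger on lines~14--15. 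Therefore no honest node $wr$-delivers $\hash(v)$, concluding the proof.

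The main subtlety—and the reason this lemma is not immediate—is ensuring that the $abandon()$ invocation commutes correctly with an already-received $\mathsf{VAL}$: a node in $H$ might have received $\mathsf{VAL}(v)$ but stalled at the predicate wait in line~5, only to wake up after abandoning. The place in Figure~\ref{algo:rbc} where $ban$ is consulted (immediately after the wait) is precisely what makes the argument go through, and I would highlight this point to show the design choice of blocking $\mathsf{ECHO}$ (rather than $\mathsf{READY}$) is exactly what enables the quality proof while preserving totality established in Lemma~\ref{lemma:totality}.
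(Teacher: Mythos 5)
Your proof is correct and takes essentially the same route as the paper's: the $f{+}1$ nodes of $H$ never send $\mathsf{ECHO}(\hash(v))$ because $ban$ is already true when (if ever) $\mathsf{VAL}(v)$ arrives, so fewer than $n-f$ matching echoes can exist, no honest node triggers $\mathsf{READY}$ via either the echo path or the $f{+}1$-$\mathsf{READY}$ amplification path, and hence no honest node can collect $n-f$ matching $\mathsf{READY}$ messages; you are in fact slightly more careful than the paper in explicitly ruling out the amplification path by a first-honest-sender argument. One minor arithmetic slip: your bound ``$n-2f-1\le f$'' (hence ``at most $2f$ echo senders'') only holds when $n=3f+1$, but this is harmless since the direct count of potential $\mathsf{ECHO}(\hash(v))$ senders is all nodes outside $H$, i.e.\ $n-f-1 < n-f$, which is all the argument needs.
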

\begin{proof}
	Since the reconstruction threshold of $Election()$ we use is $n-f$, at least $f+1$ honest nodes will have invoked $abandon()$ before adversary can learn the output of $Election()$. As shown in Figure \ref{algo:rbc}, honest node will reject sending \emph{ECHO} message for new received \emph{VAL} message after invoking $abandon()$. Thus for any value $v$ whose responding \emph{VAL} messages havn't been received by any of $H$ before they invoked $abandon()$, it can't gather enough (i.e. $n-f$) \emph{ECHO($\hash(v)$)} messages. Without enough \emph{ECHO($\hash(v)$)} messages, honest nodes won't send responding \emph{READY($\hash(v)$)} messages. So no honest node can receive sufficient ($n-f$)  \emph{READY($\hash(v)$)} messages to $wr\text{-}deliver$ $\hash(v)$.
\end{proof}

\begin{lemma}\label{lemma:agreement}
	\textbf{Agreement} If two honest nodes output $v$ and $v'$, respectively, then $v=v'$.
\end{lemma}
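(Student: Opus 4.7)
The plan is to prove Agreement by combining three facts: (i) the leader $k$ chosen in every round is common to all honest nodes because of the agreement of $\mathsf{Election}$; (ii) every $\raba_r$ is an agreed binary decision across honest terminators; and (iii) once a leader's value is effectively picked, the weak agreement and external validity of $\mathsf{avw}\rbc$, together with the $\mathsf{Value}$ echo step (lines 22--27), force any two honest deciders to output the same string.

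First I would argue that if some honest $\node_i$ halts in round $r$ with output $v$, then no honest node can halt in any round $r'\ne r$. The sequential loop forces every honest node that reaches round $r''$ to have observed $\raba_{r'}\!\to\!0$ for every $r' < r''$. Since $\node_i$ halted in round $r$, $\raba_r$ returned $1$ to $\node_i$, so by Agreement of $\raba$ any other honest node whose $\raba_r$ terminates also decides $1$, and therefore cannot proceed past round $r$. Symmetrically, any honest node $\node_j$ that halts in round $r'$ forces $\raba_{r'}$ to be $1$ for every honest terminator, so $\node_i$ could not have left round $r'$ if $r'<r$; the case $r'>r$ is ruled out by the same argument with the roles swapped. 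Hence $r=r'$.

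Next I would fix the common round $r$ and the common leader $k=\mathsf{Election}(r)$ and show that both deciders output the same avw$\rbc_k$ value. When an honest node enters the decision branch (line 19 with $\raba_r\!=\!1$), it has $H_i[k]\ne\bot$, i.e.\ it has $wr$-delivered a digest $h$ in avw$\rbc_k$; by the weak agreement of avw$\rbc$, any other honest node that either already holds $V_j[k]\ne\bot$ or later $wr$-delivers in avw$\rbc_k$ must hold a value $v$ with $\hash(v)=h$. The branch either multicasts $V_i[k]$ or waits for a $\mathsf{Value}(v)$ message with $\hash(v)=H_i[k]=h$; by collision resistance of $\hash$, only one string can match, so both honest nodes decide exactly that string. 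External validity of avw$\rbc$ (Lemma~\ref{lemma:ext-val}) ensures this string satisfies $Q$ and is not $\bot$.

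The step I expect to require the most care is Step~1 (establishing that no honest node can be stuck forever in round $r$ with $\raba_r$ not yet terminated while another honest node has already halted). If a prior honest node's $\raba_r$ output $1$, this node had $H_i[k]\ne\bot$, so by the weak totality/agreement of avw$\rbc_k$ and the propagation of $\mathsf{FIN}(k,h)$ messages, every other honest node eventually sees $H_j[k]\ne\bot$ and therefore either $raba$-proposes $1$ or $raba$-reproposes $1$; combined with biased termination of $\raba$ this guarantees termination with output $1$, tying off the case analysis above. I will invoke Lemmas~\ref{lemma:totality} and~\ref{lemma:rbcfinish} together with the $\raba$ properties to close this loop cleanly.
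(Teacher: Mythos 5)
Your proposal is correct and follows essentially the same route as the paper's proof: use $\raba$ agreement (via the sequential loop) to pin down a common halting round, election agreement to pin down a common leader $k$, the agreement/totality of avw$\rbc_k$ to force a common digest $h$, and collision resistance of $\hash$ to conclude the delivered values coincide. Your version is merely more explicit about the same-round argument and the $\mathsf{Value}$-echo step, and your closing termination discussion, while fine, is not needed for the (conditional) agreement claim itself.
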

\begin{proof}
	According to the \emph{Agreement} property of $RABA$, all honest nodes will have same output for every possible running $RABA$ instances. So honest nodes will end iteration in the same repeat round (denoted by $R_{halt}$). Let $k$ be the result of $Election()$ of round $R_{halt}$. $RABA$ output 1 in round $R_{halt}$, so that at least one honest node $raba$-$propose$ or $raba$-$repropose$ 1. So that at least one honest node has $wr\text{-}delivered$ in $avw\rbc_k$. According to Theorem\ref{lemma:totality}, all honest nodes will $wr\text{-}deliver$ the same hash \emph{h}. According to the collision resistance property of hash function, there can't be a different value $v'$ that $\hash(v)=\hash(v')=h$.
\end{proof}

\begin{lemma}\label{lemma:exvalidity}
	\textbf{External-Validity.} If an honest node outputs a value $v$, then $v$ is valid w.r.t. $Q$, i.e., $Q(v) = 1$;	
\end{lemma}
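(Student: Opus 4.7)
The plan is to trace the two possible provenances of the output value $V_i[k]$ in the protocol of Figure \ref{fig:finmvba} and reduce each case to the external validity of avw$\rbc$ (Lemma~\ref{lemma:ext-val}) together with collision resistance of $\hash$.

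First, I would isolate the decision point: an honest node only outputs $V_i[k]$ at line~27 after $\raba_r$ returns $1$ and after $H_i[k]\neq\bot$ is observed. Thus $V_i[k]$ is populated through exactly one of two routes: (i) line~6, where $V_i[k]\leftarrow v_k$ was set because $v_k$ was $r$-delivered in avw$\rbc_k$; or (ii) line~24, where $v$ is received in a $\mathsf{Value}(v)$ message and satisfies $\hash(v)=H_i[k]$. Case (i) follows immediately: Lemma~\ref{lemma:ext-val} applied to avw$\rbc_k$ yields $Q(v_k)=1$, so $Q(V_i[k])=1$.

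The substantive case is (ii). Here I need to argue that the digest $H_i[k]$ stored by $P_i$ must equal the hash of some value $v^*$ satisfying $Q(v^*)=1$; collision resistance of $\hash$ then forces $v=v^*$ and hence $Q(v)=1$. The key sub-lemma is that whenever any honest node holds $H_i[k]\neq\bot$, there already exists an honest node that has $wr$-delivered this same digest in avw$\rbc_k$. This is true because $H_i[k]$ is written only at line~3 (directly upon $wr$-delivery by $P_i$ itself) or at line~9 (upon receiving $n-f$ matching $\mathsf{FIN}(k,h_k)$ messages, at least $f+1$ of which come from honest nodes, each of whom only sent such a $\mathsf{FIN}$ after $wr$-delivery in line~4). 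I would then walk back through the avw$\rbc_k$ machinery in Figure~\ref{algo:rbc}: a $wr$-delivery of $h_k$ requires $n-f$ $\mathsf{READY}(h_k)$ messages; by the standard amplification argument, some honest node must have sent the first $\mathsf{READY}(h_k)$ off an $\mathsf{ECHO}$-quorum, and any honest node that sent $\mathsf{ECHO}(h_k)$ did so at line~8 of avw$\rbc$ only after its wait at line~5 released with $ban=\mathbf{false}$, which in turn required $Q(v_s)=1$ for the $v_s$ it had received from $P_s$ with $\hash(v_s)=h_k$. Setting $v^*:=v_s$ gives the desired witness.

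The only delicate step—and the one I expect to be the main obstacle in writing this cleanly—is the last link: showing that external validity is in fact preserved for $wr$-delivery of avw$\rbc$ even though its stated external-validity clause is phrased in terms of (full) output. I would either state and prove a short helper lemma ``if any honest node $wr$-delivers $h$ in avw$\rbc$, then $h=\hash(v^*)$ for some $v^*$ with $Q(v^*)=1$,'' or fold the amplification argument directly into the proof. Once this is in hand, combining it with collision resistance of $\hash$ at line~24 closes case (ii), and together with case (i) the lemma follows.
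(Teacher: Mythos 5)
Your proposal is correct and follows essentially the same route as the paper: the paper's proof likewise traces the output back to the fact that some honest node $wr$-delivered $\hash(v)$, hence $f+1$ honest nodes sent $\mathsf{ECHO}(\hash(v))$ only after checking $Q(v)=1$, with collision resistance of $\hash$ tying the delivered value to that validated preimage. Your explicit two-case split on how $V_i[k]$ was populated and the helper lemma you flag are just a more careful spelling-out of the argument the paper inlines.
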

\begin{proof}
	As shown in Lemma~\ref{lemma:agreement}, if an honest node outputs a value $v$, at least one honest node has $wr\text{-}delivered$ $\hash(v)$. So at least $f+1$ honest nodes have received message \emph{VALUE(v)} in responding avw$\rbc$ instance and sent \emph{ECHO($\hash(v)$)}. Since the condition of sending \emph{ECHO($\hash(v)$)} is $Q(v) = 1$, the \emph{External-Validity} property holds.
\end{proof}

\begin{lemma}\label{lemma:12quality}
	\textbf{1/2-Quality.} If an honest node outputs $v$, the probability that $v$ was input by the adversary is at most 1/2.
\end{lemma}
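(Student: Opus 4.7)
The plan is to combine the abandonability guarantee from Lemma~\ref{lemma:quality} with a uniform-election argument, showing that in each iteration the adversary can ``steal'' an output at most half as often as honest senders can.

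First I would pin down the key timing invariant: before the adversary can predict the output of $\mathsf{Election}(r)$, at least $f+1$ honest nodes must already have invoked $abandon()$. This uses the fact that the leader-election coin has reconstruction threshold $n-f$, the adversary holds at most $f$ shares, so at least $f+1$ shares come from honest nodes; those nodes only produce shares after line 10, i.e.\ after abandoning every avw$\rbc$. Let $H$ denote the first $f+1$ honest abandoners in that round's critical path. By Lemma~\ref{lemma:quality}, from this moment on the set of values whose hashes can still be $wr\text{-}delivered$ is frozen; any freshly-crafted adversarial $\mathsf{VAL}(v)$ message sent after $H$ abandoned cannot be wr-delivered. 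Thus the adversary must pre-commit to its candidate values before the coin is revealed.

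Second, I would characterise the ``guaranteed-accept'' set $G\subseteq[n]$ consisting of all indices $k$ such that $F[k]=\mathbf{true}$ at some member of $H$ when it abandoned. By the wait-condition on line~10 we have $|G|\geq n-f$, which by $n\geq 3f+1$ splits into at most $f$ adversarial senders and at least $n-2f\geq f+1$ honest senders. For any $k\in G$, at least $f+1$ honest nodes have $H_i[k]\neq\bot$ and will $raba$-$propose$ $1$, so by the biased-validity of $\raba_r$ the instance is forced to decide $1$; therefore whenever $\mathsf{Election}(r)$ returns such a $k$ the protocol outputs $V[k]$ in round $r$.

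Third, I would run the per-iteration probability bookkeeping. Since $\mathsf{Election}(r)$ is uniform over $[n]$ and independent across rounds, in each iteration:
\begin{itemize}
    \item with probability at least $(n-2f)/n\geq 1/3$, $k$ is an honest sender in $G$, so the iteration terminates with an honest input;
    \item with probability at most $f/n\leq 1/3$, $k$ is an adversarial sender; even if $\raba_r$ returns $1$ the committed value $v_k$ was already fixed before the coin, so this contributes at most $f/n$ to an adversarial output;
    \item the remaining probability (electing a $k$ outside $G$ that never becomes alive) sends the protocol into iteration $r+1$.
\end{itemize}
Summing the geometric series as in the termination analysis, $\Pr[\text{adversarial output}] \leq \sum_{r\geq 0}(f/n)(1-(n-f)/n)^r \leq f/(n-f) \leq f/(2f+1) < 1/2$, giving $1/2$-quality.

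The main obstacle will be making Step~1 watertight against a strongly adaptive ``after-the-fact-removal'' adversary: I have to argue that even if the adversary corrupts the elected node $\node_k$ \emph{after} the coin is revealed and attempts to retract its in-flight $\mathsf{VAL}$, the avw$\rbc$ abandonability property still prevents any honest node from wr-delivering a new hash, so $H_i[k]$ at every honest node is either bound to a value that was already visible to some member of $H$ before the coin, or remains $\bot$. The rest of the proof is just the uniform-election calculation.
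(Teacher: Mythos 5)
Your proposal is correct and follows essentially the same route as the paper's proof: fix the set of broadcasts already finished when the first $f+1$ honest nodes abandon (your $G$, the paper's $S$), use abandonability to argue the adversary's candidate values are frozen before the coin is revealed, split the elected index into honest-finished / adversarial / unfinished cases with $\raba$'s biased validity forcing acceptance on $G$, and sum the resulting geometric series. The only difference is cosmetic: you carry the exact ratios $f/n$ and $f/(n-f)\le f/(2f+1)$ where the paper rounds each case to $1/3$ and obtains the bound $\sum_{r\ge 0}(1/3)(1/3)^r = 1/2$.
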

\begin{proof}
	Let node $\node_i$ be the first honest node who step into iteration phase. So at least $n-f$ values in set $F_i$ are true. Let $S$ denotes the set of avw$\rbc$ instances whose responding values in $F_i$ are true. $H$ denotes the set of avw$\rbc$ instances in $S$ who is leading by honest nodes. $A$ denotes the set of avw$\rbc$ instances in $S$ who is leading by adversary. $R$ denotes the set of avw$\rbc$ instances outside $S$. We now discuss the three possible cases of all possible iterations. Let $k_r$ donotes the result of $Election()$ of the $r_{th}$ iteration.
	
	\begin{enumerate}
		\item  $k_r$ falls in $A$. In this case, at least $f+1$ honest nodes will input 1 for $RABA_r$. According to the $\emph{Totality}$ of avw$\rbc$, all honest nodes eventually $wr\text{-}deliver$ in $avw\rbc_{k_r}$. And Lemma~\ref{lemma:rbcfinish} shows that all honest nodes will step into iteration phase. So the $\emph{Biased termination}$ property is met for $RABA_r$. The $\emph{Biased validity}$ property ensures all honest nodes output 1 for $RABA_r$. Then FIN-MVBA will eventually output the input value of $avw\rbc_{k_r}$, which is chosen by the adversary. Note that the probability of this scenario occurring is at most 1/3.
		\item  $k_r$ falls in $H$. Similar to the former case, FIN-MVBA will output the input value of $avw\rbc_{k_r}$, which is an input chosen by the honest node. Note that the probability of this scenario occurring is at least 1/3.
		\item  $k_r$ falls in $R$. In this case, Lemma \ref{lemma:quality} shows that no $new$ value can be delivered  after $f+1$ honest nodes invoking $abandon()$. So the attack we describe in Appendix \ref{app:quality-attack} won't work here. Since there is no guarantee on how many honest nodes $wr\text{-}deliver$ in $avw\rbc_{k_r}$. $RABA_r$ can output 1 or 0 (we will show why $RABA$ eventually outputs in this scenario in Lemma \ref{lemma:termination}). Outputing 1 won't help even if $avw\rbc_{k_r}$ is leading by adversary, as it will reduce the probability of case(1). So we only care about the scenarion where $RABA_r$ outputs 0, whose probability of occurrence is at most 1/3. And honest nodes will jump to next iteration in this scenario.
	\end{enumerate} 
Thus, during every possible iteration, $RABA$ outputs 1 with a probability of up to 2/3. And once $RABA$ outputs 1, the probability that the output of FIN-$\mvba$-Q belong to the inputs of adversary is up to 1/2.
\end{proof}

\begin{lemma}\label{lemma:termination}
	\textbf{Termination.} If all honest nodes input some values satisfying $Q$, then each honest node would output.
\end{lemma}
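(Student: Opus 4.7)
The plan is to combine \textbf{Lemma \ref{lemma:rbcfinish}} (which already gives us that every honest node enters the iteration phase) with a per-iteration argument that (a) the $\raba_r$ subprotocol always terminates and (b) with constant probability it outputs $1$, after which every honest node can complete lines 19--27 and $mvba\text{-}decide$. The main obstacle, which is already flagged inside the proof of Lemma \ref{lemma:12quality}, is part (a): when the elected index $k_r$ falls in the ``unfinished'' set $R$, some honest nodes may $raba$-$propose$ $0$ without a guaranteed later $raba$-$repropose$ $1$, so neither the Unanimous- nor the Biased-termination clause of $\raba$ is obviously triggered. Resolving this is where I will spend most of the effort.

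First I would prove a dichotomy lemma: for every index $k \in [n]$, \emph{either} every honest node eventually sets $H_i[k] \ne \perp$, \emph{or} no honest node ever sets $H_i[k] \ne \perp$. The key observation is that $H_i[k]$ can be assigned in only two places (lines 3 and 10). If it is assigned at line 10, the triggering $n{-}f$ FIN messages include at least $f{+}1$ honest senders, each of whom only sends FIN after $wr\text{-}delivering$ in avw$\rbc_k$. Hence some honest node $wr\text{-}delivered$, and by the Totality of avw$\rbc$ (Lemma \ref{lemma:totality}) every honest node eventually does so and assigns $H_i[k]$ at line 3. The same Totality argument covers the case where the first assignment is at line 3. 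Conversely, if no honest node ever $wr\text{-}delivers$ and no honest node ever ships FIN, no node receives $n{-}f$ matching FIN's, so $H_i[k]$ stays $\perp$ everywhere.

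Next I would use this dichotomy to discharge RABA termination in iteration $r$. In the first branch, every honest node either $raba$-$proposes$ $1$ directly or eventually $raba$-$reproposes$ $1$ (by the code between lines 15 and 18), so $\raba_r$ terminates by Biased termination. In the second branch, no honest node ever sets $H_i[k_r] \ne \perp$, so every honest node $raba$-$proposes$ $0$ and never $raba$-$reproposes$; by Unanimous termination/validity $\raba_r$ terminates with output $0$, and the loop simply moves to iteration $r{+}1$. This closes the gap left in the proof of Lemma \ref{lemma:12quality}.

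Finally I would argue progress and the decide step. By the proof of Lemma \ref{lemma:12quality}, with probability at least $1/3$ per iteration the election $k_r$ falls inside $S$, in which case $\raba_r$ is forced to output $1$; hence after an expected constant number of iterations some $\raba_r$ decides $1$. When this happens I must show every honest node can exit the \textbf{wait} statements at lines 20 and 24. The wait at line 20 succeeds because, by the Biased-validity contrapositive, the fact that $\raba_r$ decided $1$ implies some honest node had $H_i[k_r] \ne \perp$, so by the dichotomy above every honest node eventually does. For line 24, note that whenever $h_{k_r}$ is $wr\text{-}delivered$, at least $f{+}1$ honest nodes must have sent $\ECHO(h_{k_r})$, hence received $\mathsf{VAL}(v)$ with $\hash(v)=h_{k_r}$ and $r\text{-}delivered$ $v$; at least one such honest node multicasts $\mathsf{Value}(v)$ at line 22, and reliable authenticated channels guarantee delivery to every honest node, ending the wait. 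Putting these pieces together yields termination in expected $O(1)$ iterations, which, combined with the expected constant round complexity of avw$\rbc$, $\raba$, and the coin, gives the claim.
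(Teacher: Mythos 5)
Your proof is correct and follows essentially the same route as the paper's: reduce to per-iteration termination of $\raba_r$ via Lemma~\ref{lemma:rbcfinish}, then split on whether any honest node ever $wr\text{-}delivers$ in avw$\rbc_{k_r}$, invoking Totality plus Biased termination in one branch and Unanimous termination in the other, with expected-constant iterations from the quality analysis. Your write-up is in fact somewhat more complete than the paper's, since you also verify that the \textbf{wait} statements at lines 20 and 24 are eventually released once some $\raba_r$ decides $1$; the only slip is cosmetic --- the implication ``$\raba_r$ decided $1$ implies some honest node proposed or reproposed $1$'' is the contrapositive of \emph{unanimous validity}, not of \emph{biased validity}.
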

\begin{proof}
	According to Lemma~\ref{lemma:rbcfinish}, all honest nodes will eventually step into iteration phase. So we only care about the $\emph{Termination}$ property of $RABA_r$ in $r_{th}$ iteration. Let $k_r$ donote the result of $Election()$ of the $r_{th}$ iteration. As shown in Lemma \ref{lemma:12quality}, $RABA_r$ has at least 2/3 probability to meet $\emph{Biased termination}$ property of $RABA$, and thus terminates. We divide the remaining cases into two categories (case(3) in Lemma \ref{lemma:12quality}). 1$)$ All honest nodes $raba$-$propose$ 0 and never $raba$-$repropose$ 1 for $RABA_r$. Thus $RABA_r$ terminates due to $\emph{Unanimous termination}$ property. 2$)$ Some honest nodes (less than $f+1$) have $raba$-$propose$ or  $raba$-$repropose$ 1 for $RABA_r$. Thus at least one honest node has $wr\text{-}delivered$ in $avw\rbc_{k_r}$. According to the $\emph{Totality}$ property of avw$\rbc$, all honest nodes will eventually $wr\text{-}delivered$ in $avw\rbc_{k_r}$, and thus $raba$-$propose$ or  $raba$-$repropose$ 1 for $RABA_r$. $RABA_r$ then terminates due to the $\emph{Biased termination}$ property.
\end{proof}

\section{Proof of $\finng$}\label{app:finng-proof}
{\bf Security of $\finng$}. We now prove that $\finng$ realizes agreement, total order, and liveness, not only for these always-honest nodes but also for obliviously recovered nodes (that was corrupted).

\smallskip
We first prove $\finng$ is secure regarding those always-honest nodes (note that there are at least $n-f$ nodes that are always honest during the course of execution).
Later we call these nodes always-honest nodes or simply honest nodes.

\begin{lemma}\label{lemma:pulling}
	If an honest node $\node_i$ output $\height_{e+1}$, it can eventually $wr\text{-}delivered$ in all w$\rbc$ instances between $\height_e$ and $\height_{e+1}$. Let set ${Digests}_{e+1}$ denote all the digests $wr\text{-}delivered$ in those w$\rbc$ instances, and let
	set ${TXs}_{e+1}$ denote the original data responding to ${Digests}_{e+1}$. Then $\node_i$ can eventually obtain the whole ${TXs}_{e+1}$.
\end{lemma}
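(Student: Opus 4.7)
The plan is to split the claim into two sub-assertions: (i) $\node_i$ has $wr\text{-}delivered$ (at least a digest) in every w$\rbc$ instance whose slot index lies strictly between $\ordered_e[j]$ and $\height_{e+1}[j]$ for each $j\in[n]$, and (ii) $\node_i$ eventually upgrades each such digest into the corresponding transaction batch.

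For (i), I would directly invoke the internal/external predicate that guards the agreement process. By construction, before $\node_i$ can accept $\height_{e+1}$ as the $\mvba$ output of epoch $e+1$, the predicate must evaluate to $\mathbf{true}$ at $\node_i$; in particular its third clause explicitly \emph{waits} until every coordinate of $\node_i$'s locally tracked vector $\height_i$ is at least as large as the corresponding coordinate of the proposed vector. By the definition of $\height_i$ as the count of w$\rbc$ instances from each sender that $\node_i$ has already $wr\text{-}delivered$, this is precisely the statement that $\node_i$ has $wr\text{-}delivered$ in all the w$\rbc$ instances sandwiched between $\ordered_e$ and $\height_{e+1}$. Collecting the resulting digests yields ${Digests}_{e+1}$, so the first part is essentially immediate from the validity predicate.

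For (ii), I would combine the weak totality of w$\rbc$ with the asynchronous-safe pulling procedure described in Section~\ref{sec:jumbo1}. For each $h\in {Digests}_{e+1}$, weak totality guarantees that at least $n-2f\ge f+1$ forever-honest nodes have actually $r\text{-}delivered$ the underlying batch $v$ with $\hash(v)=h$, rather than only the digest. When $\node_i$ detects that it is missing $v$, it multicasts a recovery request; each honest responder erasure-codes its $v$ into $n$ fragments, commits to them via a Merkle tree, and returns its own fragment together with its authentication path. Because all honest $v$'s are equal (weak agreement plus collision resistance of $\hash$), these responses produce identical Merkle roots, and $\node_i$ eventually collects $f+1$ consistent and authenticated fragments, from which it decodes $v$ and verifies $\hash(v)=h$. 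Iterating this over every $h\in {Digests}_{e+1}$ delivers the full ${TXs}_{e+1}$.

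The main obstacle I anticipate is not the counting argument but the \emph{eventual} delivery under full asynchrony with an adaptive adversary: one must argue that the $\ge f+1$ honest responders I rely on remain honest at the moment they dispatch their fragments, and that their messages are eventually delivered despite arbitrary scheduling. The first point is handled by restricting the count to forever-honest nodes (of which there are at least $n-f\ge 2f+1$, still leaving $\ge f+1$ among those who have $r\text{-}delivered$ $v$), and the second by the reliable authenticated channel assumption from Section~\ref{sec:model}. A minor technicality to address is that the pull procedure may be triggered before the corresponding w$\rbc$ instance is even identified; this is resolved by noting that $\node_i$ knows which w$\rbc$ to query exactly because it has $wr\text{-}delivered$ the digest $h$ in part (i), closing the loop between the two sub-assertions.
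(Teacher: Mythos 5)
Your part (ii) is essentially the paper's argument: weak totality guarantees at least $n-2f\ge f+1$ forever-honest nodes hold the full batch $v$ with $\hash(v)=h$, their responses under the erasure-code/Merkle-tree pull are consistent by weak agreement and collision resistance, and $f+1$ authenticated fragments suffice to decode. That half is fine.

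Part (i), however, rests on a claim about the protocol that does not hold: you assert that ``before $\node_i$ can accept $\height_{e+1}$ as the $\mvba$ output, the predicate must evaluate to \textbf{true} at $\node_i$.'' In FIN-$\mvba$-Q the predicate $Q$ is evaluated only when a node receives the $\mathsf{VAL}$ message inside the corresponding avw$\rbc$; a node that never receives that $\mathsf{VAL}$ (or abandons first) can still $wr$-deliver the digest via $\mathsf{READY}$ messages and later $mvba\text{-}decide$ the value through the $\mathsf{Value}(v)$ relay branch, without ever locally running $Q$ on the decided vector. So $\node_i$ may output $\height_{e+1}$ before its own $\height_i$ dominates it, and your argument would incorrectly conclude immediate local delivery. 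The paper closes this gap differently: by external validity of FIN-$\mvba$-Q, the decided value was $\mathsf{ECHO}$'d by at least $f+1$ honest nodes, each of which only echoes after its own (internal) predicate holds; hence at least $f+1$ honest nodes have already $wr$-delivered every w$\rbc$ instance between $\ordered_e$ and $\height_{e+1}$, and the weak totality of w$\rbc$ then forces \emph{every} honest node, including $\node_i$, to \emph{eventually} $wr$-deliver at least the digest in each of those instances. Replacing your ``predicate true at $\node_i$'' step with this external-validity-plus-totality argument repairs the proof; note also that this is exactly why the lemma is phrased as \emph{eventual} delivery rather than delivery at decision time.
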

\begin{proof}
	According to Lemma~\ref{lemma:exvalidity}, if $\node_i$ output $\height_{e+1}$ from ${FIN\text{-}\mvba\text{-}Q}_{e+1}$, then $Q(\height_{e+1}) = 1$ for at least $f+1$ honest nodes. Such that at least $f+1$ honest nodes have $wr\text{-}delivered$ in all w$\rbc$ instances between $\height_e$ and $\height_e+1$. Due to the $\emph{Totality}$ property of w$\rbc$, all honest nodes will eventually $wr\text{-}delivered$ in all w$\rbc$ instances between $\height_e$ and $\height_e+1$. Besides, for every $h$ $wr\text{-}delivered$ in those w$\rbc$ instances, at least $f+1$ honest nodes have received $\emph{VAL(v)}$ such that $\hash(v)=h$. Thus $\node_i$ can use the pulling mechanism described in Section\ref{sec:jumbo} to obtain missing data.
\end{proof}

\begin{lemma}\label{lemma:fin-ng-honest}
	$\finng$ satisfies agreement, total order, and liveness  properties (regarding these always honest nodes) except with negligible probability.
\end{lemma}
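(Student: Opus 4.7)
The plan is to establish the three properties separately, leveraging the security guarantees of the w$\rbc$, FIN-$\mvba$-Q, and the pulling mechanism already proven.

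\textbf{Safety (agreement and total order).} I would proceed by epoch-wise induction on $e$. First I would show that at the end of each epoch $e$, all honest nodes hold the same $\ordered_e$ vector. This follows because (i) FIN-$\mvba$-Q's \emph{agreement} (Lemma \ref{lemma:agreement}) forces all honest nodes to adopt the same $\newordered$ vector, and (ii) the induction hypothesis gives agreement on $\ordered_e$, so the difference $[\ordered_e, \newordered]$ identifying which w$\rbc$ instances to solicit is identical across honest nodes. Next, for each solicited w$\rbc$ instance, the \emph{weak agreement} of w$\rbc$ guarantees that no two honest nodes can associate it with distinct non-matching values, and Lemma \ref{lemma:pulling} upgrades this to ensuring every honest node eventually recovers the very same transaction batch. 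Combining these two observations yields that every honest node outputs the identical block of transactions in epoch $e+1$; chaining across epochs gives both agreement and total order.

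\textbf{Liveness (strong validity).} I would argue that any transaction $tx$ placed in an honest node $\node_i$'s input buffer is eventually solicited by some block. The argument decomposes as follows: (a) $\node_i$ will eventually include $tx$ in some w$\rbc$ batch it initiates as sender in its broadcast thread; (b) by the \emph{validity} of w$\rbc$, this instance will eventually $wr$-deliver at every honest node, so every honest node's local vector $\height_i$ will eventually reflect a slot at least as high as the one containing $tx$; (c) once this has happened, any honest node preparing a new FIN-$\mvba$-Q input will include a height for $\node_i$ no lower than the slot of $tx$; (d) invoking FIN-$\mvba$-Q's \emph{1/2-quality} (Lemma \ref{lemma:12quality}) and \emph{termination} (Lemma \ref{lemma:termination}), each epoch has at least probability $1/2$ of decoding an honest node's proposal, whose $\height$ vector then forces $tx$'s slot (or a higher slot from $\node_i$) into the new $\ordered$ vector. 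Since epochs are independent from the perspective of this election, after an expected $O(1)$ epochs the transaction (or a later one from $\node_i$'s thread containing it) is solicited, giving liveness in expected constant rounds except with negligible probability.

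\textbf{Main obstacle.} The delicate step is part (d) of the liveness argument. Specifically, I need to verify that the predicate driving FIN-$\mvba$-Q is \emph{internal} in the sense of Section~\ref{sec:protocols}: an honest node's proposed $\height$ vector may not yet be validatable by every other honest node at the moment of proposal, and one must show that local states eventually catch up, so termination is not impaired. I would reduce this to the totality of the underlying broadcasts: any honest proposal is dominated by what its author has locally observed, and by weak totality every other honest node will eventually observe at least as much. A second subtlety is ruling out the scenario in which the adversary repeatedly steers FIN-$\mvba$-Q outputs toward malicious proposals that omit $\node_i$'s slot; here the $1/2$-quality bound plus the monotonicity clause in the predicate (which forbids $\ordered$ from shrinking) ensures that missed epochs do not erase progress, so the geometric-series bound on the number of epochs until $\node_i$'s slot is captured goes through. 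With these two points handled, the remaining details are routine.
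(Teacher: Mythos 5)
Your proposal is correct and follows essentially the same route as the paper's proof: safety is reduced to FIN-$\mvba$-Q agreement plus w$\rbc$ weak agreement and the pulling lemma, and liveness is obtained from w$\rbc$ validity/totality (so broadcasts keep progressing and honest inputs reach the $\mvba$) combined with the $1/2$-quality geometric-series bound. The two subtleties you flag — the internal predicate's eventual validatability via totality, and the non-shrinking $\ordered$ vector — are exactly how the paper resolves the corresponding gaps (the former in the main-text discussion of the agreement process, the latter implicitly in its liveness argument).
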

\begin{proof}
	Here we prove the three properties one by one (for  those always-honest nodes).

\textbf{For agreement:} Suppose that one honest node $\node_i$ output a transaction $tx$. $\node_i$ must have finished two FIN-$\mvba$-Q instances with output $\height_e$ and $\height_{e+1}$, such that $tx$ belongs to some w$\rbc$ instance between $\height_e$ and $\height_{e+1}$. Due to the $\emph{Agreement}$ property of FIN-$\mvba$-Q, if another honest $\node_j$ output in the same two FIN-$\mvba$-Q instances, they output the same value. And $\node_j$ thus can obtain some $tx'$ responding to the same w$\rbc$ instance due to Lemma~\ref{lemma:pulling}. And the collision resistance property of hash function together with the $\emph{Agreement}$ property of w$\rbc$ ensure that $tx = tx'$. Beside, we also need to prove that if node $\node_i$ outputs in some FIN-$\mvba$-Q instance, all honest nodes will eventually active this instance. To prove it, we need the $\emph{Liveness}$ property of $\finng$ (we will prove it later). In the begining, all honest nodes will active the first FIN-$\mvba$-Q instance and terminate then due to the $\emph{Termination}$ property of FIN-$\mvba$-Q. An the $\emph{Liveness}$ property of $\finng$ ensures all honest nodes will eventually generate a new input satisfying $Q$, and then active the next FIN-$\mvba$-Q instance. So that all honest nodes will continuously active FIN-$\mvba$-Q instances one by one. Such that if $\node_i$ output a transaction $tx$, all honest nodes eventually output $tx$. 

\textbf{For Total-order:} In $\finng$, all honest nodes sequential participate in MVBA epoch by epoch, and in each MVBA, all honest nodes output the same block, so the total-order is trivially hold.

\textbf{For Liveness:} One honest node $\node_i$ can start $w\rbc_{e+1}$ once it $wr\text{-}delivered$ in $w\rbc_{e}$. The $\emph{Totality}$ property ensures that all honest nodes eventuall $wr\text{-}deliver$ in $w\rbc_{e}$. All honest nodes thus can active $w\rbc_{e+1}$. So that $\node_i$ won't be stuck. It also
implies at least $n-f$ parallel broadcasts can grow continuously since all honest nodes won't be stuck and can continuously active new w$\rbc$ instance. Such that $\node_i$ can eventually generate a new input to active a new FIN-$\mvba$-Q instance.

According to the \emph{1/2-Quality} of FIN-$\mvba$-Q, the input of honest nodes can be outputted with a probability not less than 1/2, so even in the worst case, once a FIN-$\mvba$-Q instance decides to output the input of an honest node $\node_i$, the broadcasted transactions proposed by $\node_i$'s FIN-$\mvba$-Q input  will be decided as the final consensus output. 
It further indicates that the adversary cannot censor the input payloads that are broadcasted by honest nodes, because the probability of FIN-$\mvba$-Q outputting the input of some honest node is at least 1/2, which implies that 
after $k$ executions of FIN-$\mvba$-Q, 
the probability that a finished broadcast is not yet output would be at most $(1/2)^k$, i.e., becomes negligible after sufficient repetition of FIN-$\mvba$-Q.

\end{proof}


\section{Proof of $\jumbo$}\label{app:jumbo}

Since the proofs of agreement and total order properties for $\jumbo$ resembles those of  $\dumbong$, we only describe the proof for liveness and fairness here.

 W.l.o.g, we assume that nodes disseminate the same number of transactions in each CBC instance. To enhance the fairness of $\jumbo$, we adopt a ``speed limit'' patch described  in Section \ref{sec:jumbo-impl}. For a better description, we do some changes on variables.  We use $\delta_{ij}$ to denote how many transactions output by node $\node_i$ in $j$-th broadcast thread since the height included by last block (from the local view of node $\node_i$). Let $\delta_i$ to track the $(f+1)$-th smallest one of all $\{\delta_{ij}\}_{j\in [n]}$. The variable $\delta_{low}$ tracks the smallest on of all $\{\delta_i\}_{i\in [n]}$.
 
 \begin{lemma}\label{lemma:newinput}
 	Any MVBA instance won't be blocked forever. In other words, all honest nodes will constantly output in MVBA thread.  
 \end{lemma}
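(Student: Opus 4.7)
The plan is to prove Lemma~\ref{lemma:newinput} by establishing that at every epoch $e$, every honest node can eventually assemble a proposal that satisfies the $\dumbo$-$\mvba$ validity predicate (including the fairness clause), so that $\mvba[e]$'s \emph{Termination} property applies and forces an output. The proof will be structured around a two-step progress argument: first, each honest broadcast thread makes at least one slot of progress per epoch; second, this suffices to satisfy every clause of the predicate, including the ``speed limit'' clause introduced by the fairness patch.

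First, I would argue that every honest sender $\node_h$'s broadcast can advance by at least one slot between two consecutive $\mvba$ outputs. Fix epoch $e$, and let $\ordered_h$ be the slot of $\node_h$'s broadcast that was solicited by $\mvba[e-1]$. Immediately after $\mvba[e-1]$, each honest node $\node_i$ recomputes $\delta_{ij} = \height_{ij}.\mathsf{slot} - \ordered_{ij}.\mathsf{slot}$. Because $\ordered_i$ was just updated to the latest $\mvba$ output, the vector $\{\delta_{ij}\}_j$ has been reduced; in particular, any $\delta_{ih}=0$ cannot trigger the voting halt (since the rule requires $\delta_j > 0$), so $\node_i$ will always return a vote on $\node_h$'s next broadcast message after that broadcast is reset to zero progress. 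Therefore $\node_h$ can always collect $n-f$ valid signatures on its next slot $\ordered_h+1$ and produce a fresh QC.

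Second, I would leverage the above to show that at most $O(1)$ epochs after $\node_h$'s latest broadcast is solicited, all honest nodes receive $\node_h$'s new QC and update $\height_{\cdot,h}$ accordingly. Since there are at least $n-f$ honest senders, there exist at least $n-f$ indices whose $\height$ entries strictly increase relative to $\ordered$ used by the previous $\mvba$; the remaining indices satisfy the non-decreasing clause by construction (a node only records $\height_j$ when it sees a strictly higher valid QC). For the fairness clause, note that the patch bounds the ratio between the largest $\delta_{ij}$ and the $(f{+}1)$-th smallest, and the honest broadcasts are exactly the ones we just showed can advance by at least one slot after every reset; so the $(f{+}1)$-th smallest $\delta_{ij}$ is always at least $1$ whenever the fastest ones are nonzero, and moreover, the honest voting rule \emph{throttles} any would-be runaway broadcast so that the ratio cannot exceed $1/\beta$ in the first place. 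Hence each honest $\node_i$ eventually takes a snapshot of $\height_i$ that satisfies all three numerical clauses plus the fairness clause, giving a valid $\mvba[e]$ input.

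Third, once every honest node has a valid input to $\mvba[e]$, the \emph{Termination} of the underlying $\dumbo$-$\mvba$ protocol guarantees that $\mvba[e]$ outputs in expected constant rounds, completing the inductive step from epoch $e-1$ to epoch $e$. Repeating the argument for every epoch shows that the MVBA thread never stalls, i.e., honest nodes continually output $\dumbo$-$\mvba$ decisions.

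The main obstacle is the subtle interaction between the fairness patch and liveness: a naive ``speed limit'' could easily create a circular wait where the fast broadcasts are halted awaiting slow broadcasts, while the slow broadcasts themselves stall because their transactions have yet to be solicited into a block. The crux of the proof is therefore the observation that the halt condition explicitly excludes the case $\delta_j = 0$, so an honest broadcast that has been solicited into the most recent $\mvba$ output is always ``unfrozen'' on its next slot; combined with the fact that honest broadcasts dominate the $(f{+}1)$-th order statistic whenever malicious ones try to race ahead, this breaks any potential deadlock and carries the proof through.
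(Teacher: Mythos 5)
Your proposal is correct and follows essentially the same route as the paper's proof: an induction over epochs whose crux is that the halt rule requires $\delta_j>0$, so every honest node always votes on the first slot of each honest broadcast after an $\mvba$ output, yielding $n-f$ strictly increased entries and hence a valid input on which $\mvba$'s Termination applies. The only difference is cosmetic — the paper spells out the case split between honest nodes sitting in the same epoch versus different epochs, while you additionally (and reasonably) check the fairness clause of the predicate, which the paper's proof leaves implicit.
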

 \begin{proof}
 	We proof this by mathematical induction:
 	\begin{enumerate}
 		\item W.l.o.g, we assume epoch number start from 0. Now we proof that all honest nodes will output in epoch 0. There are two possible cases:
 		\begin{enumerate}
 			\item All honest nodes stay in epoch 0 (they all output 0 block). We recall the condition of halt voting. $\node_i$ halts voting for the $j$-th broadcast thread, if $ \beta \cdot \delta_{ij} \ge  \delta_i$ and $\delta_{ij} > 0$. So no honest nodes will halt voting in the first CBC instance of each honest broadcast thread, as $\delta_{ij} = 0$ in their views before running that instance. Hence honest nodes will at least finish the first CBC instance of each honest broadcast thread, and thus meet the condition of generating a new input for a new MVBA instance. As a result, all 
 			honest nodes will eventually active the first MVBA instance. Thus they will all output in epoch 0 because of the \emph{Termination} property of MVBA.
 			 
 			\item Some honest nodes stay in epoch 0, while the rest stay in higher epochs. We use $N_{0}$ to denote the set of honest nodes who stay in epoch 0, and $N_{high}$ to denotes the rest honest nodes. Now we proof that all nodes in $N_{0}$ can finally generate new inputs for the first MVBA instance.
 			
 			We assume that $\node_i \in N_{high}$ and the $j$-th broadcast thread is honest. If $\node_i$'s outputs contain a higher height than 0 for the $j$-th broadcast thread. It must have voted in the first CBC instance of the $j$-th broadcast thread, and thus won't block other honest nodes there. If $\node_i$'s outputs don't contain a higher height than 0 for the $j$-th broadcast thread. Things are going similarly to the former case. As a result, all honest nodes in $N_0$ can at least finish the first CBC instance of each honest broadcast thread, and thus meet the condition of generating new a input for a new MVBA instance. Thus they will all output in epoch 0 because of the \emph{Termination} property of MVBA.
 		\end{enumerate}
 		
 		\item W.l.o.g, we assume all honest nodes have output in epoch $e-1$. Now we proof that all honest nodes will eventually output in epoch $e$. There are two possible cases:
 		\begin{enumerate}
 			\item All honest nodes stay in epoch $e$ (they output the same number of blocks). We recall the condition of halt voting. $\node_i$ halts voting for the $j$-th broadcast thread, if $ \beta \cdot \delta_{ij} \ge  \delta_i$ and $\delta_{ij} > 0$. So no honest nodes will halt voting in the first CBC instance of each honest broadcast thread since last block, as $\delta_{ij} = 0$ in their views before running that instance. Hence honest nodes will at least finish the first CBC instance of each honest broadcast thread, and thus meet the condition of generating a new input for a new MVBA instance. As a result, all honest nodes will eventually active the MVBA instance in epoch $e$. Thus they will all output in epoch $e$ because of the \emph{Termination} property of MVBA.
 			
 			\item Some honest nodes stay in epoch $e$, while the rest stay in higher epochs. We use $N_{low}$ to denote the set of honest nodes who stay in epoch $e$, and $N_{high}$ to denotes the rest honest nodes. Now we proof that all nodes in $N_{low}$ can finally generate new inputs for the MVBA instance of epoch $e$.
 			
 			We use $h_{ej}$ to denote the height of the first CBC instance of the $j$-th broadcast thread since the block committed in epoch $e-1$. We assume that $\node_i \in N_{high}$ and the $j$-th broadcast thread is honest. If $\node_i$'s outputs contain a higher height than $h_{ej}$ for the $j$-th broadcast thread. It must have voted in the $h_{ej}$-th CBC instance, and thus won't block other honest nodes who havn't output there. If $\node_i$'s outputs don't contain a higher height than $h_{ej}$ for the $j$-th broadcast thread. Things are going similarly to the former case. As a result, all honest nodes in $N_{low}$ can at least finish the first CBC instance of each honest broadcast thread, and thus meet the condition of generating new a input for a new MVBA instance. Thus they will all output in epoch $e$ because of the \emph{Termination} property of MVBA.
 		\end{enumerate}
 
 	\end{enumerate}
 	Thus all honest nodes will constantly output in MVBA thread.
 \end{proof}

\begin{lemma}\label{lemma:BCstuck}
All honest broadcast threads won't be blocked forever.
\end{lemma}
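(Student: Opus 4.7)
The plan is to argue by contradiction: suppose some honest broadcast thread $j$ (with honest sender $\node_j$) is stuck forever at a slot $s$, i.e., $\node_j$ never assembles $QC_s$. Since malicious nodes number at most $f$, at least $f+1$ honest voters must permanently withhold votes on slot $s$. By the speed-limit patch this is only possible when, for each such honest voter $\node_i$, the conditions $\beta\cdot\delta_{ij}\ge \delta_i$ and $\delta_{ij}>0$ hold forever in $\node_i$'s local view. My goal is to falsify this.

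First I would invoke Lemma~\ref{lemma:newinput}: the agreement thread never stalls and continually produces $\mvba$ outputs, each update strictly increasing at least $n-f$ coordinates of $\ordered$. Since $\node_j$ is honest and has already assembled $QC_{s-1}$, the reliable point-to-point channels ensure that every honest node's $\height_{ij}$ eventually tracks $s-1$. Combining this with the $1/2$-quality of $\dumbo$-$\mvba$, with probability at least $1/2$ per epoch the output coincides with some honest node $\node_k$'s input $[\height_{k1},\dots,\height_{kn}]$, whose $j$-th coordinate is $\height_{kj}=s-1$. Hence within expected constantly many epochs the global $\ordered_j$ is pushed up to $s-1$.

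The central consequence is that every honest $\node_i$ then sees $\delta_{ij}=\height_{ij}-\ordered_j=0$, falsifying the premise $\delta_{ij}>0$ that the supposedly $f+1$ permanently halting voters rely on. Those voters must therefore resume, $\node_j$ collects $n-f$ matching signatures on $i\|s\|\hash(tx_s)$, assembles $QC_s$, and advances to slot $s+1$---contradicting the assumed perpetual blockage.

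The main obstacle I anticipate is rigorously justifying that the $\mvba$ thread keeps producing outputs \emph{until} $\ordered_j$ catches up to $s-1$. Before this moment, broadcast $j$'s coordinate is still increasing from honest proposers' viewpoints, so it may be counted among the $n-f$ strictly-increasing coordinates in their $\mvba$ inputs, keeping those inputs valid. To preclude cascading deadlocks in which several honest broadcasts become stuck simultaneously and starve the $\mvba$ predicate, I would induct on the order in which honest broadcasts first appear stuck: when an honest broadcast is the first to be stuck, at least $n-f-1$ other honest broadcasts are still making progress, providing enough strictly-increasing coordinates (together with broadcast $j$'s own coordinate while $\ordered_j<s-1$) to sustain valid honest $\mvba$ inputs. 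By this inductive argument, no honest broadcast can ever be the first to be stuck forever, and consequently no honest broadcast thread is blocked forever.
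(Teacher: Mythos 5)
Your argument is correct and rests on exactly the same pillars as the paper's proof: Lemma~\ref{lemma:newinput} for unconditional progress of the agreement thread, the $1/2$-quality of the $\mvba$ to push $\ordered_j$ up to the last completed slot within an expected constant number of epochs, and the resulting reset $\delta_{ij}=0$ that lifts the speed-limit halt. The only difference is presentational --- you argue by contradiction on a hypothetically stuck slot, whereas the paper argues forward epoch by epoch (the first CBC instance per epoch always completes, its block is committed, and the sender advances) --- and your closing induction against cascading deadlocks is already subsumed by Lemma~\ref{lemma:newinput}, which you cite.
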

\begin{proof}
According the proof of \ref{lemma:newinput}, at least the first CBC instance of honest broadcast threads in each epoch won't be blocked. We assume that node $\node_i$ is honest, and $\node_i$ is in epoch $e$. Let $B_{1st}$ be the transaction block $\node_i$ propose in his first CBC instance after the block committed in epoch $e-1$. Then $B_{1st}$ is guaranteed to be output by all honest nodes according to the $Validity$ property of CBC. Because all honest nodes will constantly output in MVBA thread, all honest nodes will add $B_{1st}$ into their inputs. According to the 1/2-Quality of MVBA, $B_{1st}$ will be output after expected constant number of MVBA instances. Thus $\node_i$ will eventualy step into an epoch where $B_{1st}$ has been output before. $\node_i$ can then start a new CBC instance, which is guaranteed to be finished by all honesy nodes according to the previous description. Thus $\node_i$ can constantly active new CBC instances. So the $i$-th broadcast thread won't be blocked forever, and so does the rest honest broadcast threads.
\end{proof}

	W.l.o.g, we assume that a transcation $tx$ is placed in honest node $\node_i$'s input buffer. According to Lemma-\ref{lemma:BCstuck}, $\node_i$ will eventually propose it in one of his CBC instance. Thus $tx$ will be output by all honest nodes in that CBC instance. According the proof of \ref{lemma:newinput}, $tx$ will finally be contained in all MVBA input from honest nodes. After that, $tx$ will be output by consensus in expected constant number of MVBA instances. Thus the $Liveness$ property for $\jumbo$ is held.

\section{Choice of Experimental Environment}
We adopt a popular appraoch of using Linux TC to emulate WAN networks among EC2 instances in the same region \cite{yang2021dispersedledger,neiheiser2021kauri}. 
Namley, we apply manually-controlled delay and bandwidth throttling at each node. In contrast, a few recent studies \cite{miller2016honey,beat,guo2020dumbo,guo2022speeding,gao2022dumbo,tusk,duan2023practical,zhang2022pace} adopted a different way to set up benchmarking by evenly distributing EC2 instances at different AWS regions. The reasons behind our benchmarking choice are as follows: 
\begin{enumerate}

	\item {\em Economic affordability}. The price of data transfer between different regions for Amazon's EC2 platform is non-negligible, which is around $\$$0.02 per GigaByte. So if we evaluate protocol at the scale of 256 nodes and 1 Gbps bandwidth, the cost on data transfer is around $\$$38.4 per minute, as opposed to $\$$1.5 for computation. If we deploy all nodes at the same region, we can conduct the same experiments for 25 times longer at the same cost, because data transfer within the same region is   free. 
	Otherwise, if we distribute 256 nodes across a dozen of regions, the cost would become completely unafforable.
	
	\item {\em Result  reproducibility}. The network bandwidth and latency between different regions vary greatly over time (due to changes of AWS' network infrastructure), making it difficult for us to maintain the same network conditions when  repeating the same experiments.
	
	\item {\em Better emulation of WANs}. If we deploy several hundred of nodes in a dozen of different regions, then many nodes will inevitably be deployed in the same region. The network conditions between nodes within the same region are extremely good, typically with only a few milliseconds of latency and 10 Gbps bandwidth, which is completely mismatched with the WAN environment. If we deploy all nodes to the same region, we can use tools like TC to precisely control the bandwidth and latency of each connection, in order to better simulate the WAN environment. Additionally, we can more accurately simulate network fluctuations and other asynchronous adversarial attack behaviors.
	
	\item {\em Precise control of bandwidth}. The last benefit of our benchmarking choice is that we can precisely control the bandwidth  during each test. 
	This allows us compare our throughput to line-rate speed to understand the actual bandwidth utilization of our different protocols.

\end{enumerate}

\end{document}